\renewcommand{\vec}[1]{\boldsymbol{#1}}
\DeclareDocumentCommand{\prot}{m o}  
{%
	\IfNoValueTF{#2}{#1}
	{$\text{#1}^{\text{#2}}$}
}
\newcommand{\myfrac}[3][0pt]{\genfrac{}{}{}{}{\raisebox{#1}{$#2$}}{\raisebox{-#1}{$#3$}}}
\newtheorem{lemma}{Lemma}
\newtheorem*{remark}{Remark}
\newcommand*{\addFileDependency}[1]{
  \typeout{(#1)}
  \@addtofilelist{#1}
  \IfFileExists{#1}{}{\typeout{No file #1.}}
}
\journal{Journal of Theoretical Biology}
\begin{document}

\begin{frontmatter}


\title{Phenotypic variation modulates the growth dynamics and response to radiotherapy of solid tumours under normoxia and hypoxia}
 \author[label4,label1]{G.L. Celora}
 \author[label1]{ H. Byrne} 
 \author[label3]{C. Zois }
 \author[label5]{P.G. Kevrekidis}
 \fntext[label4]{celora@maths.ox.ac.uk}
\address[label1]{Mathematical Institute, University of Oxford, Oxford, UK}
\address[label3]{Molecular Oncology Laboratories, Department of Oncology, Oxford University, Weatherall Institute of Molecular Medicine, John Radcliffe Hospital, Oxford, United Kingdom.}
\address[label5]{Department of Mathematics \& Statistics,
University of Massachusetts, Amherst 01003 USA \fnref{label3}}




\begin{abstract}
In cancer, treatment failure and disease recurrence have been associated with small subpopulations of cancer cells with a \textit{stem}-like phenotype.
In this paper, we develop and investigate 
a phenotype-structured model of solid tumour growth in which cells are structured by a \textit{stemness} level, which varies continuously between stem-like and
terminally differentiated behaviours. 
Cell evolution is driven by proliferation and apoptosis, as well as
advection and diffusion with respect to the stemness structure variable.  We use the model to investigate how the environment, in particular oxygen levels, affects the tumour's population dynamics and composition, and its response to radiotherapy. We use a combination of numerical and analytical techniques to quantify how under physiological oxygen levels the cells evolve to a differentiated phenotype and  under low oxygen level (i.e., hypoxia) they de-differentiate. Under normoxia, the proportion of cancer stem cells is typically negligible and the tumour may ultimately become extinct whereas under hypoxia cancer stem cells comprise a dominant proportion of the tumour volume, enhancing radio-resistance and favouring the tumour's 
long-term survival. 
We then investigate how such phenotypic heterogeneity impacts the tumour's response to treatment with radiotherapy under normoxia and hypoxia. Of particular interest is establishing how the presence of
radio-resistant cancer stem cells can facilitate a tumour's regrowth following radiotherapy.
We also use the model to show how radiation-induced changes in tumour oxygen levels can give rise to complex re-growth dynamics. For example, transient periods of hypoxia induced by damage to tumour blood vessels may rescue the cancer cell population from extinction and drive secondary regrowth. Further model extensions to account for spatial variation are also discussed briefly. 
\color{black}
\end{abstract}



\begin{keyword}
cancer stem cells \sep phenotypic variability \sep radio-resistance


\end{keyword}

\end{frontmatter}



\section{Introduction}

\noindent Understanding of the mechanisms by which cancer is initiated and progresses continues to increase, and, yet, cancer remains one of the leading causes of premature mortality worldwide and a major barrier to increasing average life-expectancy. For example, in 2018, 9.6 million people are estimated to have died of cancer \cite{Bray2018}. Furthermore, treatment outcomes can differ markedly between patients with the same cancer type, with the emergence of resistance being one of the major causes of treatment failure.

Over the past twenty years, there has been a major shift in our perception of solid tumours; they are now regarded as heterogeneous tissues in which malignant cells interact with
normal cells and shape their environment in ways that favour malignant growth
\cite{Hanahan2000}. Cancer stem cells (CSCs) were introduced to explain intra-tumour heterogeneity via the \emph{CSC hypothesis} \cite{Reya2001}. 
This hypothesis proposes that, while CSCs may comprise only a small fraction of the total cell population, their high clonogenic potential and their ability to produce more mature, or specialised, cancer cells enables them to create an entire tumour \cite{Rycaj2014}. As CSCs are found to be resistant to standard treatments, 
they are recognised as a major cause of disease recurrence and treatment failure \cite{Baumann2008,Rycaj2014,cancersreview}. These observations have stimulated the development of novel therapeutic strategies which aim to eradicate CSCs \cite{ende2,Kong2020,Shibata2019,frontiers}. 
In practice, the plasticity of CSCs represents a major obstacle to such treatments. Additionally, CSCs can adapt to their local micro-environment, and remodel it to create and maintain a niche which supports their survival~\cite{architects}.

Increasingly, researchers are turning to mathematical models in order to understand how CSCs affect the growth and composition of tumours, particularly their heterogeneity and response to treatment. These models often decompose 
the tumour into a series of compartments, each representing a particular
cell subtype. For example, in~\cite{ende2}, Enderling distinguishes cancer stem cells (CSCs) and cancer
cells, whereas Saga and coworkers distinguish radio-resistant and radio-sensitive cells~\cite{Saga}, and Scott and colleagues distinguish
tumour-initiating cells (or CSCs), transit-amplifying cells and terminally differentiated cells (TDCs)~\cite{Scott169615}.
Thus, most compartmental models are based on the CSC hypothesis which assumes that it is possible to distinguish between cancer stem cells and the tumour bulk. 
However, this paradigm has been challenged by recent experimental
studies~\cite{dirkse, Soleymani2018} that highlight the phenotypic heterogeneity and plasticity of cancer cells, whose clonogenic (or \emph{stemness}) potential can be altered by the surrounding micro-environment (extrinsic forces). These findings have 
led to a new hypothesis for intra-tumoural heterogeneity, based on \emph{adaptive CSC plasticity} \cite{Fanelli2020}. Under this hypothesis, cancer cells move between stem-like and terminally differentiated states in response to extrinsic (environmental) and/or intrinsic (random epigenetic mutation) forces. Remarkably, the development of state-of-the-art
experimental tools, such as single-cell RNA-seq, means that it is now possible to track the evolution of stemness traits ~\cite{tirosh,venteicher},
rendering this an ideal time to develop mathematical models that can explore these concepts.

Compartmental models can be used to study adaptive CSC plasticity , by allowing transitions between different compartments. However, since they assume that the tumour comprises distinct cell populations, with distinct properties, they are unable to account for continuous variation in cell properties. An increasingly popular mathematical approach for describing population heterogeneity and plasticity 
characterises tumour cells by their position on a continuous phenotypic axis. Position on the phenotypic axis determines cell properties such as resistance to treatment~\cite{Chisholm2016,Clairambault2020,hodgkinson,Lorenzi2016,Lorz2014} and/or metabolic state \cite{Ardaseva2019,Villa2019}. This approach is motivated by concepts from evolutionary ecology, such as risk-spreading
through spontaneous (epigenetic or genetic) variations and evolutionary pressure  \cite{Thomas2013}. The resulting models are typically formulated as systems of reaction-diffusion equations~\cite{Ardaseva2019,Lorenzi2016,Villa2019}, with an advective transport term sometimes included to account for biased mutation dynamics \cite{hodgkinson} or adaptive phenotypic switches \cite{Chisholm2016,Lorenzi2015,Stace2020}. 

In this paper, we formulate a mathematical model that accounts for the evolution of a cancer cell population along such a stemness axis in response to extrinsic and intrinsic stimuli. Initially, we focus on the plastic response of cells to changes in  nutrient levels, in particular oxygen. This is motivated by recent experimental studies~\cite{Garnier2019,Liu2014,Pistollato2010,Pistollato2009} suggesting that hypoxia (i.e. low oxygen levels) is a key driver of cell de-differentiation. 
From this point of view,  spatial heterogeneity
may introduce significant additional complications: as oxygen diffuses into a tumour and is consumed by cells, spatial gradients in the oxygen levels are established.
In this way, local
micro-environments characterised by normoxia, hypoxia and necrosis form as the distance to the nearest nutrient supply (i.e., blood vessels) increases~\cite{hodgkinson,Lorz2014,Villa2019}. For simplicity, we postpone consideration of such spatial complexity to future work and focus, instead, on a \emph{well-mixed} setting where oxygen levels are homogeneous and prescribed. This idealised scenario allows us to investigate how cell properties, such as proliferation, apoptosis
and adaptive response to environmental signals, contribute to the emergence of heterogeneous stemness levels in the population and the long term tumour composition. In this regard, we are interested in identifying conditions under which CSCs are favoured. We then extend the model to account for treatment via a phenotypically-modulated linear-quadratic model
of radiotherapy (see, e.g.,~\cite{lewin2,lewin,Saga} for recent
discussions) which accounts for differential radio-sensitivity of CSCs~\cite{frontiers}.
This allows us to investigate how different radiotherapy protocols perturb the phenotypic distribution and subsequent regrowth of the tumour. 

In practice, stemness is just one of multiple traits that regulate cell behaviour and heterogeneity. We, therefore, anticipate that future models will combine multiple phenotypic axes or \emph{synthetic dimensions}, such as stemness and metabolic state \cite{Ardaseva2019,hodgkinson}. Given the complexity of such multi-dimensional models, it is important first to understand these aspects separately. Noting that considerable mathematical effort has been devoted to investigating cancer
metabolism~\cite{cancermetab}, we choose here to focus on population heterogeneity with respect to a continuously varying stemness axis. We hope that in the long term this work will help motivate a systematic experimental characterization of cell plasticity and phenotype.

The remainder of the article is organised as follows. 
In Section~\ref{model presentation},
we present a well-mixed, spatially homogeneous, model of solid tumour growth in response to a prescribed 
oxygen concentration. We first investigate the population dynamics in the absence of treatment, considering both normoxic and hypoxic conditions. Numerical results are presented in Section~\ref{notreatment}. As a partial validation of the numerical results, we use spectral stability analysis to characterise the long time behaviour of the solutions. Section~\ref{radio_result} focuses on tumour cell responses to different radiotherapy protocols. As in Section~\ref{notreatment}, we simulate responses under normoxia and hypoxia, but we also consider situations in which the environment alternates between periods of hypoxia and normoxia in order to explore the different ways that radiotherapy can alter tissue oxygenation. Finally in Section~\ref{conclusion}, we summarize our key findings and propose possible 
directions for future work. We also present preliminary results showing how accounting for spatial and phenotypic variation may affect a tumour's growth and response to radiotherapy. 

\section{Model Formulation}
\label{model presentation}

\noindent We consider the temporal evolution of a heterogeneous population of tumour cells, $N(z,t)$, where $t \geq 0$ denotes time and $z$ ($0 \leq z \leq 1$) represents their stemness or \emph{clonogenic capacity}.
As shown in Figure \ref{schematic}, $z=0$ corresponds to cancer stem cells (CSCs) which have the maximum level of stemness, and $z=1$ corresponds to terminally differentiated cells (TDCs), which have lost their proliferative capacity 
and which can either enter replicative senescence or undergo cell death \cite{Lee2014}. %
\color{black} We assume that the population dynamics may by described by a reaction-advection-diffusion equation~(see Eq.~\ref{full1} below) which accounts for two essential physical/ecological processes.
First, cells \emph{move} along the stemness axis (i.e., in the $z$-direction) in response to extrinsic (micro-environment) and intrinsic (random epimutation) \emph{forces} \cite{Scott169615}, which give rise to advective and diffusive fluxes respectively. 
Finally, the effect of natural selection on the population is represented by the fitness function $F$, which models the net growth rate of the cells. 

\begin{figure}[h!]
	\centering
	\includegraphics[width=0.85\textwidth]{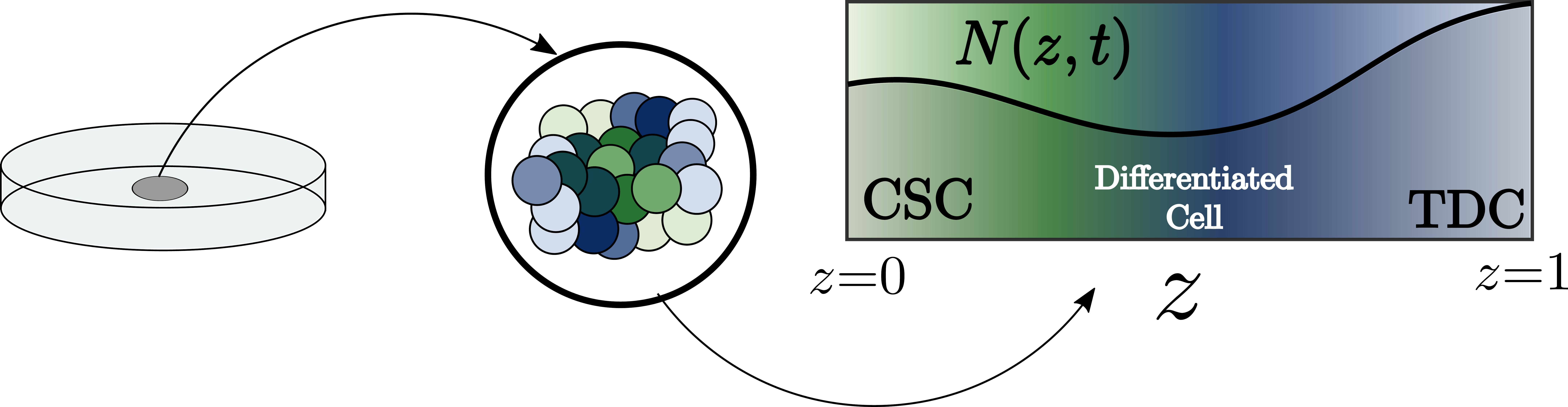}
	\caption{Schematic representation of the well-mixed, phenotypic model. We associate with each cell a stemness level $z$, which varies continuously between the cancer stem cell state (CSCs, with $z\sim0$), the differentiated cell state (with $z\sim 0.5$) and the terminally differentiated cell state (TDCs, with $z\sim 1$).}
	\label{schematic}
\end{figure}

While multiple nutrients and growth factors regulate the growth rate (or fitness function $F$) and phenotypic adaptation (i.e., the advective velocity $v_z$) of the tumour cells, 
here, for simplicity, we focus on a single nutrient, specifically oxygen. 
The critical role of low oxygen levels, or \textit{hypoxia}, in cancer has long been recognised due to its association with 
cell quiescence and poor therapeutic outcomes~\cite{hodgkinson,lewin,Saga}. 
Recent experimental results \cite{Soleymani2018} have shown that hypoxia also plays a role in de-differentiation by regulating pathways associated with a stem-like phenotype. 
We account for these phenomena in our model by assuming 
that all cells are exposed to the same level of oxygen, $c=c(t)$,
which mediates the values of the fitness function, $F$, and the advection velocity, $v_z$; the 
latter feature distinguishes our work from existing theoretical models in which intrinsic forces are assumed to dominate phenotypic variation (i.e., $v_z = 0$) \cite{Ardaseva2019, Villa2019}. 
By combining the processes mentioned above, we deduce that the evolution over time $t$ and along the phenotypic axis $z$ 
of the cell concentration, $N(z,t)$, is governed by the following non-local partial differential equation (PDE) and associated boundary and initial conditions:

\begin{subequations}
	\begin{align}
	\frac{\partial N}{\partial t}= \frac{\partial }{\partial z} \underbrace{\left(\theta \frac{\partial N}{\partial z}-N v_z(z,c)\right)}_{\text{structural flux}}+ \underbrace{F(z,\Phi,t;c)}_{\text{fitness}}N,\label{full1}\\
	\theta \frac{\partial N}{\partial z}-N v_z = 0, \quad z=\left\{0,1\right\}, \, t>0,\\[1pt]
	N(z,0)= N_0(z) \quad  z\in(0,1),\\[1pt]
	\Phi(t)=\int_0^1 N(z,t) \, dz.
	\end{align}%
In Equation~(\ref{full}), the non-negative constant $\theta$ represents the rate at which cells diffuse along the phenotypic axis, due to random epigenetic mutations, $\Phi(t)$ denotes the
	density of cells in the domain at time $t$, and $N_0(z)$ is the initial distribution of cells along the phenotypic axis.
	In ecology, the function $F$ is referred to as fitness landscape which is a mathematical representation of natural, or \textit{Darwinian}, selection \cite{Pisco2015}. We suppose it has the following form:
	\begin{align}
	\begin{aligned}
	F(z,\Phi,t;c)= \underbrace{p(z,c)\left(1-\frac{\Phi}{\Phi_{max}}\right)}_{\text{proliferation}}-\overbrace{f(z)}^{\text{\shortstack{natural cell\\ death}}}-\underbrace{\sum^{M}_{i=1} \log\left(\frac{1}{SF(z,c)}\right)\delta(t-t_i)}_{\text{radiotherapy}}.\label{F}
	\end{aligned}
	\end{align}\label{full}
\end{subequations}
In Equation~(\ref{F}), $p=p(z,c)$ denotes the phenotype-dependent growth rate of the cells (see Section~\ref{fit} for details). It is multiplied by a non-local (in the phenotypic sense) logistic term, with constant carrying capacity $\Phi_{max}$, to capture intra-population competition for space and other resources. We assume that oxygen levels remain sufficiently high so that necrosis can be neglected. Hence, the death rate, $f$, accounts only for natural cell death, or apoptosis, which is assumed to occur at a rate which is independent of the oxygen concentration, $c(t)$. Radiotherapy (RT) also contributes to cell death and, in so doing, reduces cell fitness. We suppose that $M$ rounds of RT are administered at discrete times $t_i$ ($i=1,2,\ldots, M$). After each treatment dose, the proportion of cells of phenotype $z$ that survive is denoted by the survival fraction $SF(z,c)$. By allowing $SF$ to depend on $z$, we can account for phenotypic-dependent radio-sensitivity, and, for example, view the CSCs (i.e. $z=0$) as the most radio-resistant tumour subpopulation \cite{Rycaj2014}. Additionally, the dependence of $SF(z,c)$ on $c(t)$ enables us to account for differential radio-sensitivity under normoxia and hypoxia \cite{Hockel1996,Sorensen2020}.  
In contrast to~\cite{lewin2}, where the
term $(1-SF)$ is used to capture cell death due to radiotherapy, here we use the term
$\log(1/SF)$, to ensure that the jump in tumour cells following each dose of radiotherapy is consistent with the Linear-Quadratic (LQ) model.  

We now partially rescale our model by recasting the dependent variables $N$ and $\Phi$ in the following way:
\begin{equation}
 n = \frac{N}{\Phi_{max}}, \qquad \phi = \frac{\Phi}{\Phi_{max}},
\end{equation}
where the units of time, $t$ [hr] are preserved in a dimensional form to facilitate the interpretation of the results. Under this rescaling, equations~(\ref{full}) become
\begin{subequations}
	\begin{align}
	\hspace{-10mm}
	\frac{\partial n}{\partial t}= \frac{\partial }{\partial z} \left(\theta \frac{\partial n}{\partial z}-n v_z(z,c)\right)+ F(z,\phi,t;c) n,\\
	\theta \frac{\partial n}{\partial z}-n v_z = 0, \qquad z\in\left\{0,1\right\}, \, t>0,\\
	n(z,0)= n_0(z) \quad z\in(0,1),\\
	\phi(t)=\int_0^1 n(z,t) \, dz,\\
	\begin{aligned}
	F(z,\phi,t;c)= p(z,c)\left(1-\phi\right) -f(z)-\sum^{N}_{i} \log\left(\frac{1}{SF(z,c)}\right)\delta(t-t_i).\label{Fitness}
	\end{aligned}
	\end{align}

In order to complete the model, it remains to specify several functional forms; this will be done in Sections \ref{fit} and \ref{vz_sec}. 
Extending the model to account for spatial variation is presented in~\ref{AppendixA}, and preliminary results are included in Section~\ref{conclusion} (a full investigation of the spatially-extended model is postponed to future work). 

In what follows, we assume that oxygen concentration $c$ has been rescaled so that $c=1$ corresponds to physiological oxygen levels, namely \textit{physoxia}, which is about $8\%$ oxygen \cite{McKeown2014}. 
When considering hypoxia, we focus on mild hypoxia, fixing $c=0.2$ which corresponds to $1.6\%$ oxygen in standard units 
(see~\ref{App_param} for details). 
At this oxygen concentration, necrosis can be neglected; it typically occurs at lower oxygen tensions (approximately $0.1\%$ oxygen in standard units).

Unless otherwise stated, we assume that the tumour initially comprises a small population of differentiated cells so that 
\begin{align}
n_0(z)=\frac{\phi_0}{\sqrt{2\pi \sigma^2}} e^{-\frac{\left(z-0.5\right)^2}{2\sigma^2}}\label{initial_cond},
\end{align}\label{mixedmodel}%
where the positive constants $\phi_0$ and $\sigma$ specify the initial size and phenotypic variance of the population.
\end{subequations}

The proportion of CSCs is often
used to characterise heterogeneous populations of cancer cells. 
CSCs are typically identified by their expression of specific markers (such as CD44/CD24 and ALDH1, depending on the tumour type \cite{frontiers}); thresholds in these markers are used to distinguish stem from differentiated cancer cells. 
Since our model treats stemness as a continuously varying cell property, we introduce a threshold $z^* \in (0,1)$ in our simulations, and classify cells with $0<z<z^*$ as CSCs. 
We therefore define the proportion of stem cells at time $t$ to be:
\begin{equation}
\phi_{CSC}(t,z^*)=\frac{\int_0^{z^*} n(t,s) \,ds}{\phi(t)}.
\label{cum_dist}
\end{equation} 
As a further statistical feature of the cell population, we introduce the phenotypic mean, $\mu(t)$, which is defined as follows:
\begin{equation}
\mu(t)=\frac{1}{\phi(t)} \int_0^1 z n(z,t) dz. \label{mean}
\end{equation}
In the absence of suitable experimental data, it is difficult to specify many of the parameters and functional forms in Equations~(\ref{mixedmodel}). 
For this reason, we focus on identifying the qualitative behaviours that the model exhibits across a range of `biologically-reasonable' situations.

\subsection{Fitness Landscape}
\label{fit}
\noindent When considering  the fitness landscape, we assume that, for fixed values of $c$, the proliferation rate, $p(z,c)$ has a multi-peaked profile, with local maxima centred around $z=0$ and $z=0.55$, representing respectively cells with stem-like ($z=0$) and intermediate phenotypes ($z=0.55$, this value being arbitrary). As shown in Figure \ref{fitness landscape}, this choice reduces the overlap of the two Gaussian profiles while maintaining the proliferation rate at $z=1$ close to zero. This asymmetry also emphasises that, under normoxia, more stem-like cells (i.e. $z<0.5$) proliferate at lower rates than more differentiated cells (i.e. $z>0.5$). 
Different environmental conditions (i.e., oxygen concentrations), will create distinct ecological \textit{niches} each of which will favour a particular phenotype. We account for this effect by assuming that the amplitude of the peaks in the proliferation rate are oxygen dependent. Accordingly we  write:
\begin{subequations}
	\begin{align}
	p(z;c)=p_0(c)\exp\left[-\frac{z^2}{g_0}\right]+p_1(c)\exp\left[-\frac{(z-0.55)^2}{g_1}\right],\\
	p_i(c)=p_i^{max}\myfrac[2pt]{c^4}{K_{i}^4+c^4}, \quad i=0,1,\label{pO2}
	\end{align}
	\label{eqnetprol}%
\end{subequations}
where $p_0(c)$ and $p_1(c)$ are Hill–Langmuir type equations with fourth order exponents, so that the growth rate decays rapidly when $c\sim K_{i}$. We assume that differentiated cells are fitter than CSCs under normoxia and, therefore, choose $p_1^{max}>p_0^{max}$. At the same time, we note that chronic hypoxia is widely considered to favour CSCs \cite{Ayob2018, Conley2012, Lan2018}. The plasticity of CSCs enables them more easily to adapt their metabolism to changing nutrient levels than differentiated cells \cite{Garnier2019,Snyder2018} and, therefore, to survive and proliferate in challenging conditions. This behaviour contrasts with that of differentiated cancer cells which tend to become quiescent when exposed to hypoxia. We account for these effects by assuming $K_0\ll K_1$. 

When we consider the rate of cell death due to apoptosis, $f(z)$, 
we note that apoptosis occurs predominantly when cells lose their clonogenic capacity. 
As such, it predominantly affects only TDCs with $z\sim 1$. 
Motivated by the mathematical models developed in \cite{ende2,Scott169615}, we propose
the following monotonically increasing function for $f(z)$: 
\begin{equation}
f(z)=d_f \,e^{-k_f(1-z)}.\label{apoptosis}
\end{equation}
Even though they may not proliferate, TDCs compete for space and resources and, thus, impact the tumour dynamics.
In what follows, we consider two different cases. First, guided by experimental results reported by Driessens et al.~\cite{Driessens2012}, we assume that apoptosis of TDCs occurs on a much longer timescale than that on which 
cells proliferate so that $d_f<<\max_{z} p(z;1)$. In the second case, the rates of cell proliferation and apoptosis are assumed to be comparable. This situation represents a tumour with high cell turnover and, as we will see, gives rise to a tumour population with higher clonogenic capacity.

In Figure \ref{fitness landscape}, we sketch the fitness landscape $F(z,0,t;c)$ for different environmental conditions in the absence of treatment and competition. In doing so, we have neglected competition and radiotherapy in Equations~(\ref{Fitness}), where $p$ and $f$ are defined by Equations~(\ref{eqnetprol})-(\ref{apoptosis}).  

\begin{figure}[h!]	
	\begin{subfigure}{0.32\textwidth}
		\centering
		\includegraphics[width=0.95\textwidth]{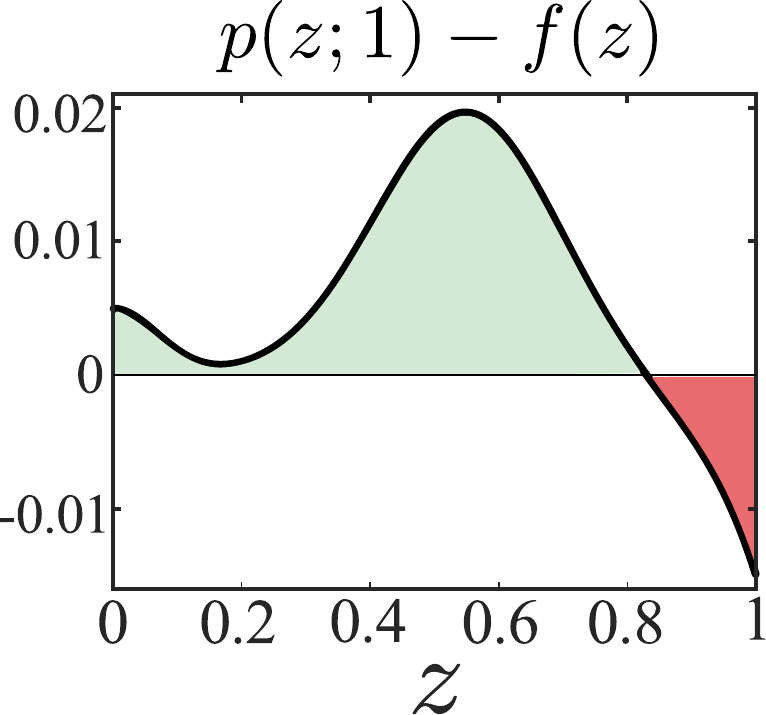}
		\caption{high $d_f$}
		\label{landnorm2}
	\end{subfigure}
	\begin{subfigure}{0.32\textwidth}
		\centering
		\includegraphics[width=0.95\textwidth]{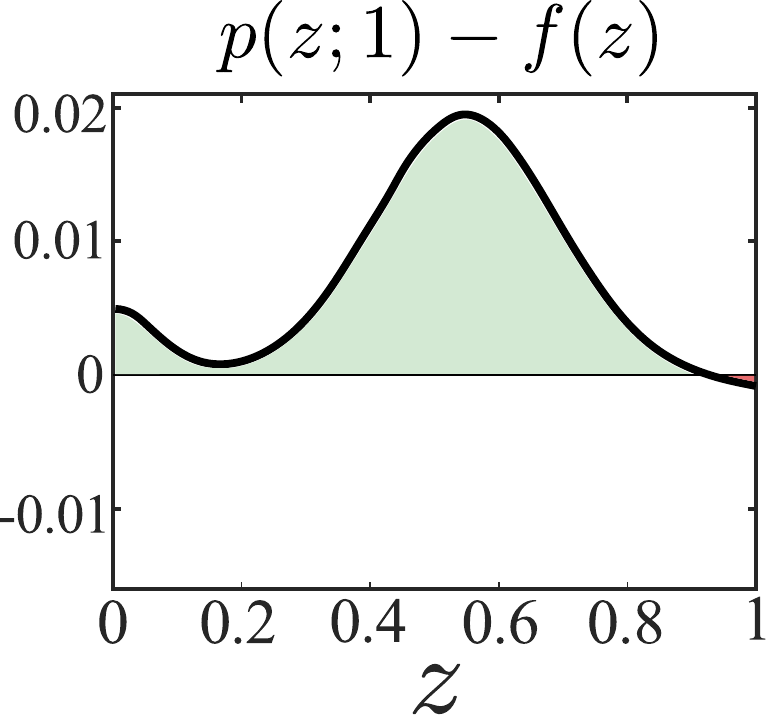}
		\caption{low $d_f$}
		\label{landnorm1}
	\end{subfigure}	
	\begin{subfigure}{0.32\textwidth}
		\centering
		\includegraphics[width=0.95\textwidth]{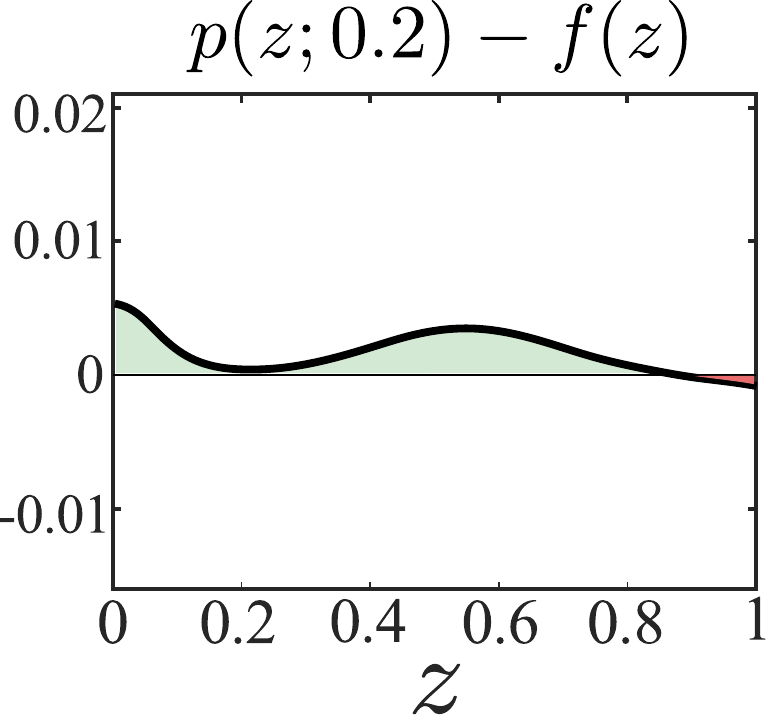}
		\caption{low $d_f$}
		\label{landhyp}
	\end{subfigure}	
	\caption{Series of sketches showing how the maximum growth rate $p(z,c)-f(z)$, as defined by Equations~(\ref{eqnetprol})-(\ref{apoptosis}) changes in different micro-environments: 
		(a)-(b) under normoxia ($c=1$), the progenitor cells ($z=0.55$) are the fittest phenotype, and the death rate may be either high (a) or low (b); (c) under hypoxia ($c=0.2$), the CSCs ($z=0$) are the fittest phenotype. The parameter values used to produce the sketches
		are listed in Table~\ref{par_fitness}. Regions of positive and negative fitness are highlighted in green and red, respectively.}
	\label{fitness landscape}
\end{figure}
\begin{table}[h!]
	\centering
	\subfloat[proliferation]{
		\begin{tabular}{c|c c c }
			\toprule[1.5pt]\addlinespace[2pt]
			& $p^{max}_i\, (hr^{-1})$ &$K_i$&$g_i$ \\
			\hline\addlinespace[2pt]
			$i$=0&$0.005$&$0.05$&$0.01$\\
			$i$=1& $0.02$ &$0.3$&$0.04$\\
			\bottomrule[1.5pt]
	\end{tabular}}
	\hspace{10mm}
	\subfloat[apoptosis]{
		\begin{tabular}{ c c}
			\toprule[1.5pt]\addlinespace[2pt]
			$d_f\, (hr^{-1})$ & $k_f$\\
			\hline\addlinespace[2pt]
			$\left\{0.001,0.015\right\} $& $10$\\
			\bottomrule[1.5pt]
	\end{tabular}}
	\caption{Range of parameter values used in the sensitivity analysis. More information on the specific parameter choice can be found 
	in~\ref{AppendixA}.}
	\label{par_fitness}
\end{table}

We now consider the impact of radiotherapy on cell fitness.
As mentioned above, CSCs possess protective mechanisms
that enable them to withstand damage caused by
radiation and oxidative stresses~\cite{Radioresistance,Clark2016, Diehn2009,Rycaj2014,cancersreview,frontiers,Vassalli2019}. They are, therefore, more resistant to treatment than their differentiated counterparts. It is well known that local oxygen
concentration levels also affect treatment outcomes~\cite{Horsman2012,Moulder1987}. While we account for this effect in the full spatial model (see~\ref{AppendixA}), here we focus on the role of phenotype-dependent radio-sensitivity.
In particular, we adapt the standard Linear-Quadratic (LQ) model so that the tissue specific coefficients, $\alpha (Gy^{-1})$ and $\beta(Gy^{-2})$, are phenotype dependent:
\begin{subequations}
	\begin{align}
	-\log(SF)=\alpha(z) d + \beta(z) d^2,\label{lq_model}
	\end{align}%
	where $d$ is the radiation dose in Grays (Gy). Equation~(\ref{lq_model}) is the natural, continuum extension of previous works \cite{Leder2014,Saga}, in which two-compartment models are used to describe the time-evolution of cancer cells and cancer stem cells exposed to radiotherapy, and CSCs are assumed to be radio-resistant. Accordingly, here, we assume $\alpha$ and $\beta$ are increasing functions of the phenotype $z$~\cite{Saga,cancersreview,frontiers} of the following form:  
	\begin{align}
	\alpha(z)= \alpha_{min}+(\alpha_{max}-\alpha_{min})\tanh\left(\frac{z}{\xi_R}\right),\label{rad_alpha}\\[2pt]
	\beta(z)=\beta_{min}+(\beta_{max}-\beta_{min})\tanh\left(\frac{z}{\xi_R}\right).\label{rad_beta}
	\end{align}\label{radiosensitivity}
\end{subequations}

In Equations~(\ref{rad_alpha})-(\ref{rad_beta}), $\xi_R$, $\alpha_{min,max}$ and $\beta_{min,max}$ are non-negative constants with $\alpha_{min}<\alpha_{max}$ and $\beta_{min}<\beta_{max}$. 
Where possible, parameter estimates are taken from the literature (see~\cite{Saga} for estimates of $\alpha_{min,max}$ and $\beta_{min,max}$); the value of $\xi_R=0.2$ is instead chosen so that differentiated cells (i.e. $z>0.5$) have maximum sensitivity to treatment (i.e., $\alpha(z)\sim \alpha_{max}$ for $z>0.5$).

\begin{table}[h!]
	\centering
	\begin{tabular}{c|l  l | l l }
		\toprule[1.5pt]\addlinespace[3pt]
		& \multirow{2}{*}{$[\alpha_{min},\alpha_{max}](Gy^{-1})$}&\multirow{2}{*}{$[\beta_{min},\beta_{max}](Gy^{-2})$}&\multirow{2}{*}{$\displaystyle\frac{\alpha_{min}}{\beta_{min}}(Gy)$}&\multirow{2}{*}{$\displaystyle\frac{\alpha_{max}}{\beta_{max}}(Gy)$}\\
		&&&&\\
		\hline\addlinespace[2pt]
		R1& $[0.005,0.15]$ &  $[0.002,0.10]$ &2.5&1.5\\[2pt]
		R2& $[0.050,0.20]$ & $[0.020,0.05]$   &2.5&4\\[2pt]
		R3& $[0.005,0.40]$ & $[0.002,0.05]$ &2.5&8\\[1pt]
		\bottomrule[1.5pt]
	\end{tabular}
	\vspace{3mm}
	\caption{Summary of the parameter values used in Equation~(\ref{radiosensitivity}) to describe the three different RT responses used in model simulations. In all cases, we fix $\xi_R=0.2$.}
	\label{rad_tab2}
\end{table}

We consider three different parameter sets (see Table \ref{rad_tab2}); they may represent three cell populations which differ in their sensitivity to radiotherapy (RT).
For cases R1 and R3, CSCs (with $z\sim 0$) respond in the same way to RT, whereas differentiated cancer cells (with $z> 0.5$) respond differently. For case R1, the small value of $\alpha_{max}/\beta_{max}$ for the sensitive cells ($z=1$) corresponds to a late responding tissue, whereas for case R3, the large value of $\alpha_{max}/\beta_{max}$ corresponds to an early responding tissue, with a low repair capacity, for which fractionation is known to be beneficial \cite{McMahon_2018}. Finally, case R2 is intermediate between cases R1 and R3. By assuming heterogeneity in the cell response to RT, we allow consideration of the selective pressure of RT. For a given dosage and LQ model, differences in the radio-sensitivity of CSCs and differentiated cells are determined by the ratios $\alpha_{min}/\alpha_{max}\in (0,1)$ and $\beta_{min}/\beta_{max}\in(0,1)$. When both fractions are small, CSCs are more likely to survive RT than their differentiated counterparts and, therefore, the selective pressure of RT on the population is high. By contrast, as $\alpha_{min}/\alpha_{max}$ and $\beta_{min}/\beta_{max}$ approach the value of unity, RT offers no selective advantage to CSCs as, at leading order, the response is independent of phenotype.  The latter also depends on the specific dose applied. For example, for high doses the quadratic term in Equation~(\ref{lq_model}) is dominant and the selective pressure is only associated with the value of $\beta_{min}/\beta_{max}$. By contrast, for lower doses, the linear and non-linear effects contribute to cell killing and, so, the selective pressure of RT is associated with both $\alpha_{min}/\alpha_{max}$ and $\beta_{min}/\beta_{max}$. For these reasons, we will consider two different RT protocols: either a single dose of $10\,Gy$ is delivered or a fractionated schedule is used (here five doses of $2\,Gy$ are delivered over five consecutive days \cite{Brenner1991,Dale1985}). While R2 is expected to have the least RT selective pressure in both scenarios, this might be higher in R1 or R3 depending on the treatment protocol considered. 
\subsection{Structural Flux}
\label{vz_sec}

Plasticity is an essential feature of phenotypic adaptation to changing environmental conditions~\cite{dirkse,Pisco2015}. It assumes that cells with the same genome can acquire distinct phenotypes depending on their epigenetic status, which is also inheritable. Phenotypic variation may be mediated by random (spontaneous) \textit{epigenetic mutations} \cite{Lorenzi2016}, which we assume to be rare. We account for this effect by including in the structural flux a diffusion term with a constant diffusion coefficient $\theta=5\times 10^{-6}$ hr$^{-1}$(see
Equation \ref{full1}). Such random mutations should not favour any specific phenotype, and \textit{Darwinian} selection (i.e. the fitness function $F$) drives phenotypic evolution of the population. This aspect has been widely studied in previous work in order to investigate how cells adapt to different environments \cite{Ardaseva2019,Lorenzi2016,Villa2019}. At the same time, there is evidence that phenotypic switching may be mediated by environmental factors via \textit{Lamarckian} selection (or induction) \cite{Pisco2015}. In this framework, cells adapt to their environment~\cite{dirkse,schaider} by following a preferential (\emph{biased}) trajectory in phenotypic space. We can, therefore, envisage situations in which a subpopulation may be prevalent in a population without being the fittest subpopulation (i.e. the population with the highest proliferation rate). 
For example, recent studies have identified cell de-differentiation and CSC maintenance as stress responses to harsh environmental conditions \cite{Pisco2015}, including hypoxia. 
More specifically, cells respond to hypoxic stress by up-regulating Hypoxia Inducible Factors (HIFs) which, in turn, promote the expression of stem-related genes~\cite{Garnier2019,Liu2014,Pistollato2010,Pistollato2009}. HIF suppression has also been linked to cell differentiation and reduced levels of stemness~\cite{Shiraishi2017}. 
We account for such micro-environment mediated adaptation by incorporating an advective term in the structural flux. Cells 
are assumed to evolve along the stemness axis with a velocity $v_z=v_z(z,c)$, that depends on the oxygen concentration $c$ and cell phenotype $z$. Under normoxia, cells tend to differentiate, and $v_z>0$. From this point of view, the model is similar to classical age-structured models \cite{Perthame2007,Webb2008}, with $v_z$ being analogous to a \emph{maturation} velocity. In our model, however, \emph{ageing} (i.e. differentiation or loss of clonogenic potential \cite{Scott169615}) may be reversible. For example, under hypoxia (i.e. $c \leq c_H$), we assume  $v_z<0$ (see Figure \ref{vel_prof}) and a more stem-like character is promoted.

\begin{figure}[h!]
\begin{subfigure}{0.32\textwidth}
	\centering
	\includegraphics[width=0.95\textwidth]{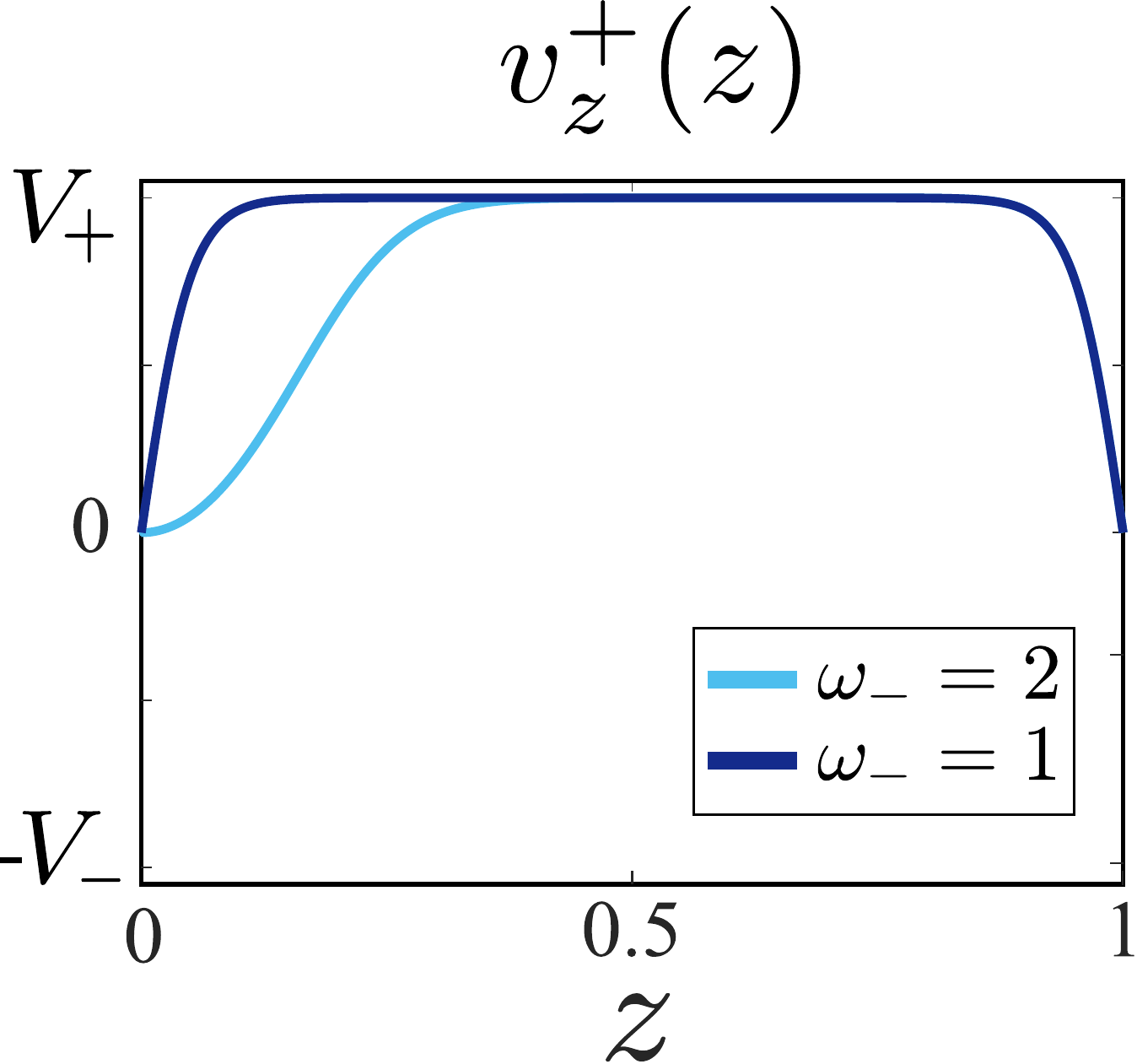}
	\caption{ $\xi_+=0.05$}
	\label{vel_norm1}
\end{subfigure}
\begin{subfigure}{0.32\textwidth}
	\centering
	\includegraphics[width=0.95\textwidth]{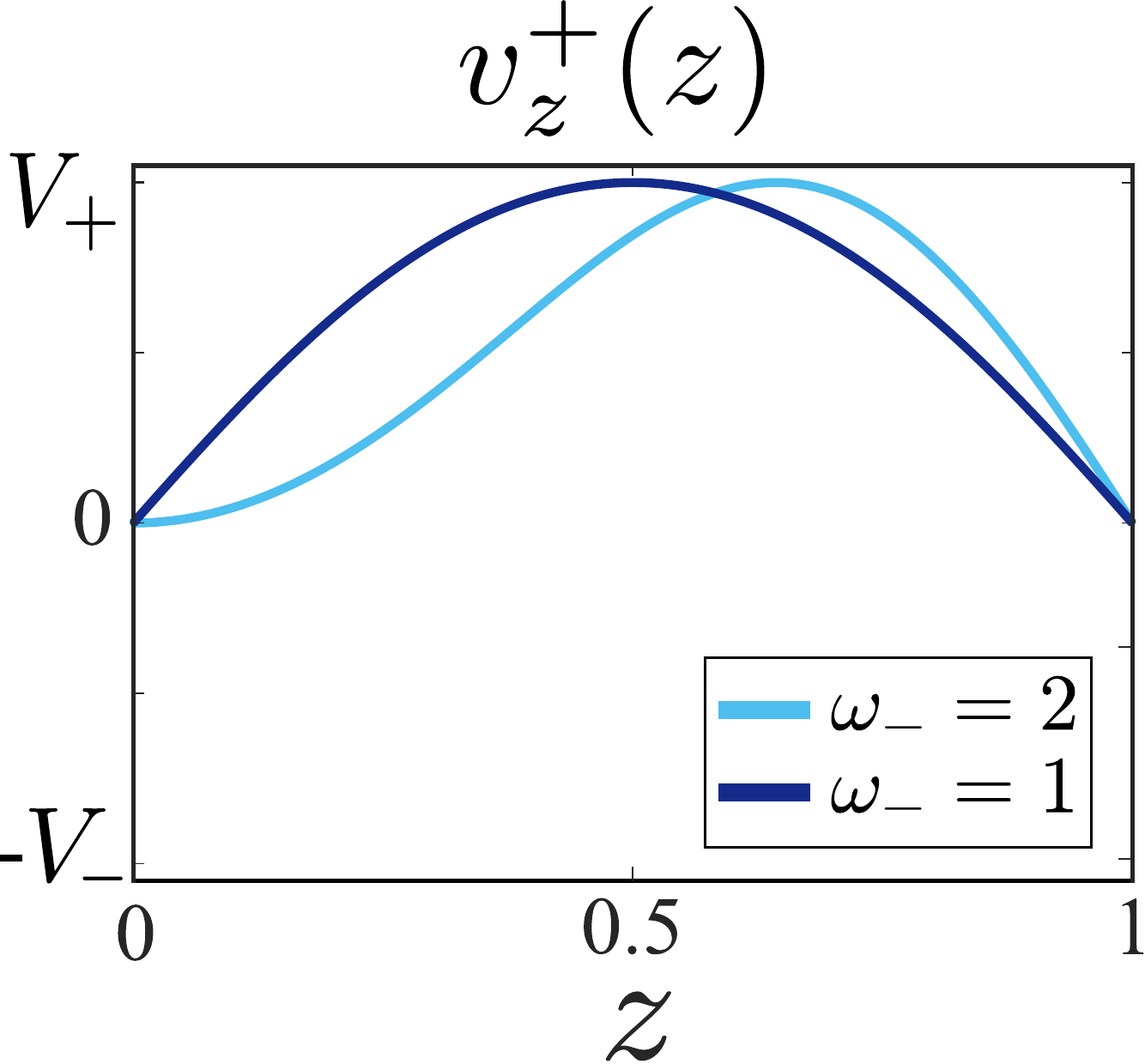}
	\caption{ $\xi_+=0.5$}
	\label{vel_norm2}
\end{subfigure}
\begin{subfigure}{0.32\textwidth}
	\centering
	\includegraphics[width=0.95\textwidth]{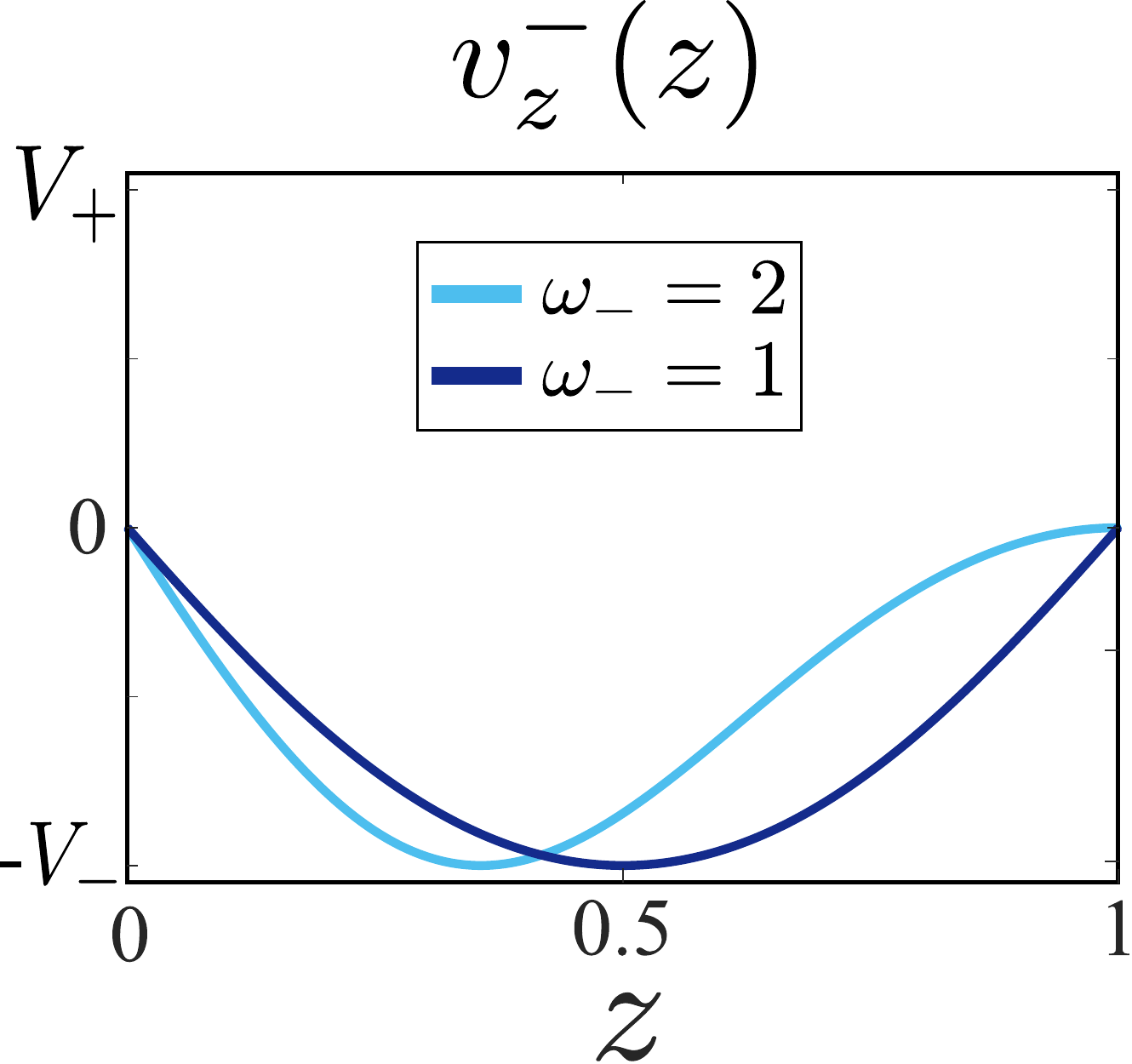}
	\caption{$\xi_-=0.5$}
\end{subfigure}
\caption{Series of sketches showing how $v_z^+$ and $v_z^-$, as defined by Equations~(\ref{eq_ad_norm}) and~(\ref{eq_ad_hyp}) respectively, change as
the parameters $\xi_\pm$ and $\omega_\pm$ vary.}
\label{vel_prof}
\end{figure}

Combining the above observations, and motivated  in part by recent, similar considerations~\cite{hodgkinson}, we  propose
the following functional forms for the phenotypic drift term, $v_z$:
\begin{subequations}
\begin{align}
v_z(z;c)=v^+_z(z) H_{\epsilon}(c-c_H)-v^-_z(z)H_{\epsilon}(c_H-c),\\[3pt]	
v^+_z(z)= \myfrac[2pt]{V_+}{V^*_+} \tanh\left(\myfrac[3pt]{z^{\omega_+}}{\xi_+}\right)\tanh\left(\myfrac[2pt]{(1-z)}{\xi_+}\right),\label{eq_ad_norm}\\[3pt]
v^-_z(z)=\myfrac[2pt]{V_-}{V^*_-} \tanh\left(\myfrac[2pt]{z}{\xi_-}\right)\tanh\left(\myfrac[3pt]{(1-z)^{\omega_-}}{\xi_-}\right).\label{eq_ad_hyp}
\end{align}\label{eq_ad}
\end{subequations}
where $H_\epsilon$ is a smooth variant of the Heaviside function approaching the latter in the limit of $\epsilon \rightarrow 0$ (i.e., $H_\epsilon(x)= {(1+\tanh(\epsilon^{-1}x))}/{2}$). In Equations~(\ref{eq_ad}), the normalising factors $V_\pm^*$ ensure that $\left(\max_z v^\pm_z\right)/V_\pm=1$ and $V_\pm\,(hr^{-1})$ corresponds to the magnitude of the velocity. Further, by controlling
the advection speed along the stemness axis, $V^{-1}_\pm$ determines the timescales for maturation and de-differentiation. 
The parameters $\xi_\pm$ regulate the slopes of $v_z$ at the boundaries $z=0,1$. 
As shown in Figure \ref{vel_norm1}, when $\xi_\pm \ll 1$, the advection velocity is steep when $z\sim 0,1$ and flatter elsewhere. 
This functional form is similar to that proposed in~\cite{hodgkinson}.
For larger values of $\xi_\pm$, the variation is more gradual, with a single maximum (or minimum)
near $z\sim0.5$ (see Figure \ref{vel_norm2}). The exponents $\omega_\pm$ allow us to tune the symmetry/asymmetry in $v_z$ and also to modulate the flux at the boundaries (see Figure \ref{vel_prof}). For example, if $\omega_+=2$,  then $v(0)=\partial_z v(0)=0$ which means that CSCs will be less likely to differentiate compared to the case $\omega_+=1$.
In the absence of experimental data with which to specify the parameters in the phenotypic drift velocity, we consider combinations of the following parameter sets: 
\begin{itemize}
    \item $V_\pm \in \left\{2,4,8\right\} \times 10^{-4} \left[hr^{-1}\right]$, \item $\xi_\pm\in\left\{0.05,0.1,0.5\right\}$, and 
    \item $\omega_\pm\in\left\{1,2\right\}$.
\end{itemize}

In summary, our phenotype-structured model for the growth and response to radiotherapy of a solid tumour is defined by Equations~(\ref{mixedmodel})-(\ref{eq_ad}). A list of the model parameters and estimates of their values can be found in Table~\ref{param_set} in~\ref{AppendixA}. 
\section{Population Dynamics in the Absence of Treatment}
\label{notreatment}

\noindent In this section, we present numerical solutions of  Equations~(\ref{mixedmodel})-(\ref{apoptosis}) and~(\ref{eq_ad}) showing how, in the absence of treatment, the tumour cell distribution along the stemness axis evolves under normoxia and hypoxia. Our numerical solutions are generated using the method of lines, with discretisation performed in the $z$-direction. In more detail and following \cite{Gerisch2006}, we use a finite volume scheme, opting for a Koren limiter to control the advection component of the structural flux. In this way, we reduce~(\ref{mixedmodel}) to a system of time-dependent, ordinary differential equations which can be solved in MATLAB using \textit{ode15s}, an adaptive solver for stiff equations.
The numerical simulations are validated in Section \ref{LSA} where we perform a linear stability analysis. The associated eigenvalue problem is solved numerically using MATLAB's \textit{chebfun} package \cite{Driscoll2014}.

\subsection{Normoxic Conditions}
\label{oxygen_env}
\noindent In well-oxygenated environments, the advection velocity is positive and cells are driven towards a terminally differentiated phenotype, with $z=1$. Depending on the balance between the advective flux and cell renewal (i.e., Darwinian selection and Lamarckian induction), the model predicts a variety of long-time behaviours: the system relaxes to its steady state via damped fluctuations or monotonically. We start by considering symmetric velocity profiles (see Figure \ref{vel_norm1}). 
As summarised in Figures \ref{norm_sym} and  \ref{steady_state}, as the magnitude of the advection velocity, $V_+$, and its steepness, $\xi_+$, are varied, the system exhibits different long time behaviours, even though the dynamics at early times are similar for all parameter sets considered (see Figure \ref{norm_sym}). If simulations are initialised with a small population of  cells with $z\sim 0.5$, then the dynamics are initially dominated by proliferation. Over time, as $\phi$ increases, competition slows the cell proliferation rate and phenotypic advection becomes more important. As the cells mature, they accumulate near $z=1$, and the rate of natural cell death exceeds the rate of cell proliferation. From this time onwards, the growth curves corresponding to different parameter sets start to deviate. 

\begin{figure}[h!]
	\centering
	\includegraphics[width=\textwidth]{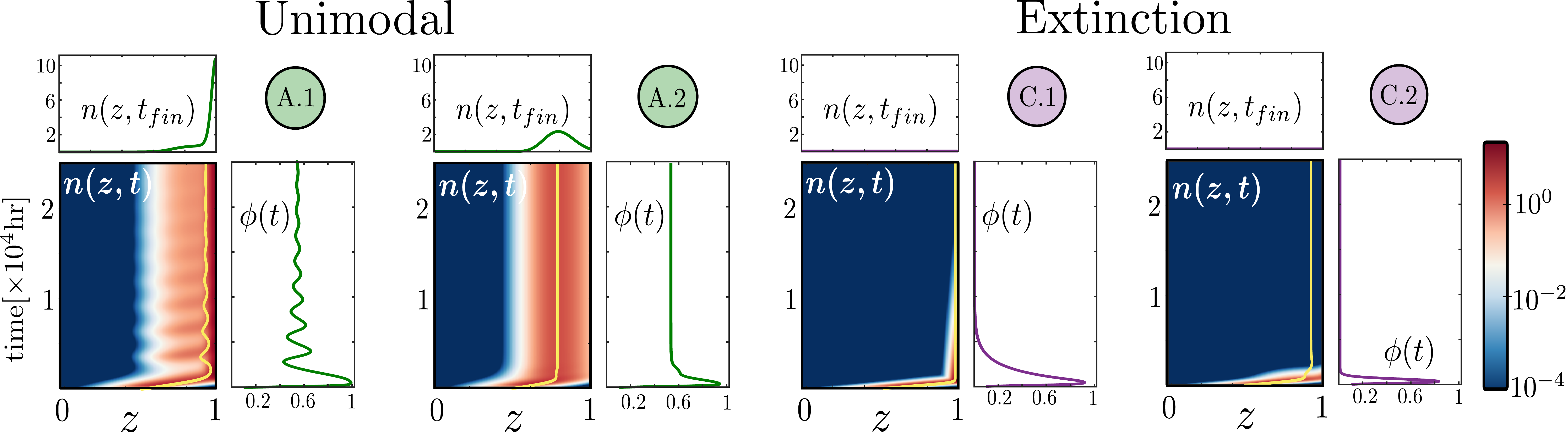}
	\caption{
	Results from a series of numerical simulations of 
	Equations~(\ref{mixedmodel})-(\ref{apoptosis}) and~(\ref{eq_ad}), showing how the cell distribution, $n(z,t)$, the phenotypic mean, $\mu(t)$, and the cell density, $\phi(t)$, change over time when we use a symmetric velocity profile (i.e., $\omega_+=1$ in Equation~(\ref{eq_ad})). 
	As $V_+$ increases and $\xi_+$ decreases, the system can be driven to extinction. 
	See Figure \ref{steady_state} for the values of the other model parameters.}
	\label{norm_sym}
\end{figure}

For example, in case A.2, the system rapidly relaxes to a non-zero steady state distribution characterised by cells with medium clonogenic capacity (i.e., a mix of highly proliferating and terminally differentiated cells or TDCs). 
Similarly, for cases C.1 and C.2, the cell density, $\phi(t)$, decays exponentially to extinction at a rate dictated by $d_f$. In other parameter regimes, the relaxation phase is characterised by damped fluctuations. In case A.1, for example, fluctuations are driven by the interplay between apoptosis, competition and advection. As TDCs are eliminated, the reduction in competition allows re-growth of highly proliferative cancer cells (i.e., $z\sim0.55$). 
As these cells proliferate, competition slows growth and advection becomes dominant, resulting in the alternating pattern of red and white stripes observed in the surface plot for $n(z,t)$ shown in Figure \ref{norm_sym} for case A.1. Over time, the fluctuations decay and the system relaxes to its steady state distribution. 
In Section~\ref{LSA}, we present a complementary 
investigation of this behavior, relating the damped oscillations 
to a complex eigenvalue in the linearisation about the
equilibrium solution. 

\begin{figure}[h!]
	\centering
	\includegraphics[width=0.9\textwidth]{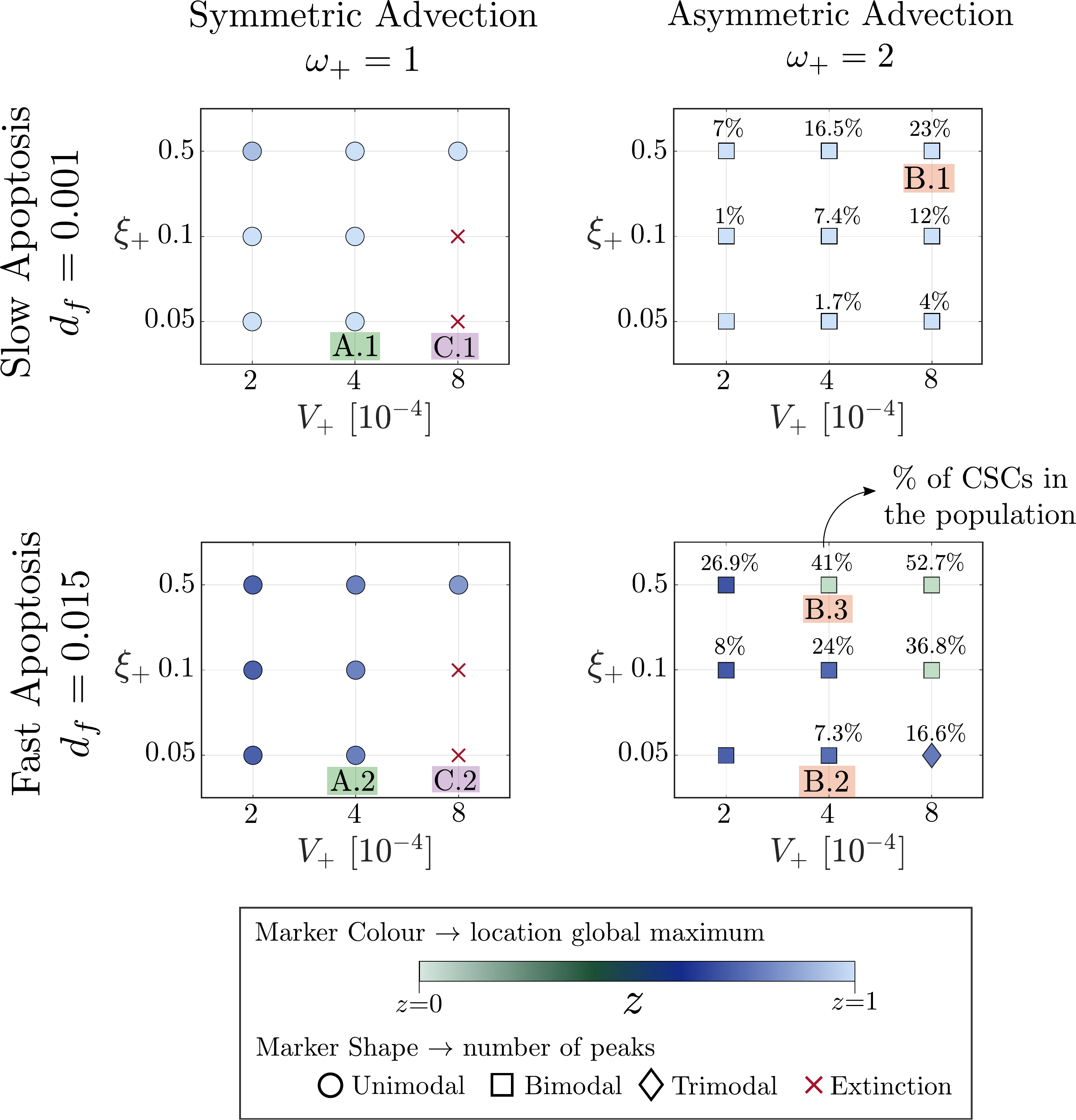}
	\caption{Series of phase diagrams characterising the steady state distribution predicted by the model as properties of the advection velocity, $v_z$, vary
		(i.e., for different values of the parameters $V_+$, $\xi_+$ and $\omega_+$), and the rate of apoptosis, $d_f$. At each point in $(V_+,\xi_+)$ parameter space, 
		we characterise the equilibrium distribution based on the number of peaks and the dominant phenotype (i.e., the $z$-locations of the local maxima) for different values of the parameters $\omega_+$ and $d_f$. For parameter sets that give rise to a significant fraction of CSCs 
		(i.e., $\%$ CSCs $\geq 1\%$), the value of $\phi_{CSC}(0.3,t_{\infty})$, as defined by Equation~(\ref{cum_dist}), is also indicated.}
	\label{steady_state}
\end{figure}

Focusing on the long time behaviour, the symmetric advective profile gives rise to a population with a unimodal equilibrium distribution where the location of the peak is dictated by the values of the other parameters. For example, for small values of the maximum death rate, $d_f$ (see case A.1), the distribution is skewed towards $z=1$, while for higher values of $d_f$ the peak is shifted towards the centre of the domain. These observations are summarized in Figure \ref{steady_state}, where we have further analysed how the properties of the equilibrium distribution depend on other parameters in the model. We note that as the advective velocity increases (i.e., larger $V_+$)  the value of $\xi_+$ determines whether total extinction occurs. This suggests that there is a bifurcation 
as $V_+$ and $\xi_+$ vary, with the system transitioning from a trivial to a non-zero steady state (this behaviour will be
investigated in Section~\ref{LSA}).

By contrast, the equilibrium distribution for an asymmetric velocity profile (i.e., $\omega_+=2$, as in Figure \ref{vel_norm2}), has a multimodal distribution, typically characterised by two peaks. In this case, since the CSCs have a lower propensity to mature, they accumulate and persist in the population, even under normoxia. The second column of Figure \ref{steady_state} shows that the proportion of CSCs at long time increases as the death rate, $d_f$, the steepness parameter, $\xi_+$, and the maturation velocity, $V_+$, increase, until the CSCs become the dominant subpopulation (see, for example, Case B.3 in Figure \ref{norm_asym}). 
Varying the death rate, $d_f$, does not significantly affect whether extinction occurs; rather, it determines the location of the maximum peak in the equilibrium distribution (see, for example, case B.2 in Figure \ref{norm_asym}). For low death rates, cells are predominantly 
in a terminally differentiated state. As the death rate increases, the peak moves to the left, producing an equilibrium distribution in which a higher proportion of rapidly proliferating cells balances the high death rate.
Figure \ref{norm_asym} shows how the system relaxes to its steady state when $\omega_+=2$. Comparison with Figure \ref{norm_sym} reveals that in this case the dynamics are characterised by secondary regrowth, driven by the accumulation of CSCs. For example, in case B.1, phenotypic diffusion enables the cancer cells to de-differentiate, acquire a stem-like phenotype and, therefore, contribute to population growth.

\begin{figure}[h!]
	\centering
	\includegraphics[width=\textwidth]{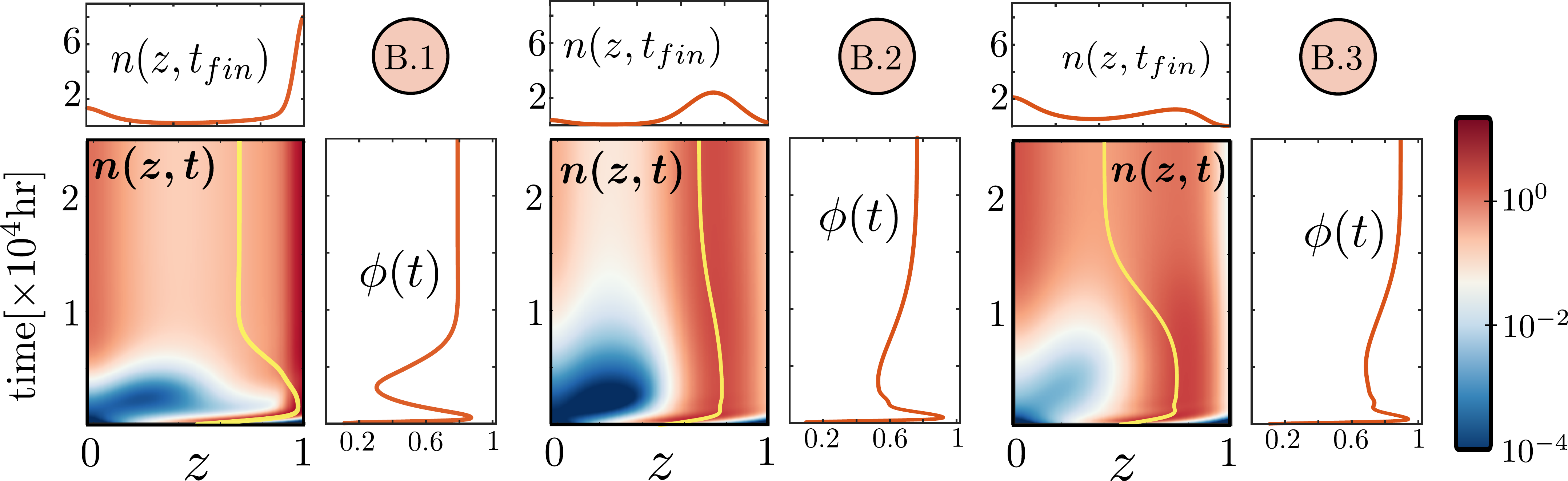}
	\caption{Results from a series of numerical simulations of 
	Equations~(\ref{mixedmodel})-(\ref{apoptosis}) and~(\ref{eq_ad}), showing how the cell distribution, $n(z,t)$, the phenotypic mean, $\mu(t)$, and the cell density, $\phi(t)$, change over time. For these results, we use an asymmetric velocity profile (i.e., $\omega_+=2$ in Equation~(\ref{eq_ad})).  See Figure \ref{steady_state} for the values of the other model parameters.}
	\label{norm_asym}
\end{figure}

To summarise, the properties of the advection velocity $v_z$, determine whether the model predicts extinction or persistence of CSCs, regardless of whether they are present initially. When $\omega_+=2$, random mutations (i.e., diffusion), may dominate the advective force near $z=0$, allowing CSCs first to form, then to proliferate and ultimately to comprise a significant proportion of the equilibrium population. CSCs have been observed in normoxic regions; for example, they have been found in perivascular tumour regions, where endothelial cells secrete factors that inhibit CSC maturation \cite{Calabrese2007}. By contrast, when $\omega_+=1$
(i.e., for symmetric velocity profiles), all cells mature over time, leading to the eventual extinction of CSCs. This behaviour could describe that of tumours which lack CSCs, or the effect of drugs which induce stem cell differentiation and, thereby, reduce the incidence of resistance to other treatments, such as radiotherapy. 
We conclude that targetting $V_+$ and $\xi_+$ may be effective for eliminating CSCs, increasing tumour sensitivity to treatment and, in certain scenarios, driving tumour extinction. 

\subsection{Hypoxic Conditions}
\label{hyp cond}
\noindent Under hypoxia, the advection velocity in our model is negative and cells will be driven to de-differentiate. In this case, the equilibrium distribution is unimodal, with the dominant phenotype at $z=0$. Although varying the death rate $d_f$ does not effect the equilibrium distribution (compare cases H3 and H4 in Figure \ref{hyp1}), the values of $\omega_-$ and $\xi$ influence the width of the peak 
(compare cases H1 and H2 in Figure \ref{hyp1}) and, therefore, the variability in the population.

\begin{figure}[h!]
\begin{subfigure}{\textwidth}
	\includegraphics[width=\textwidth]{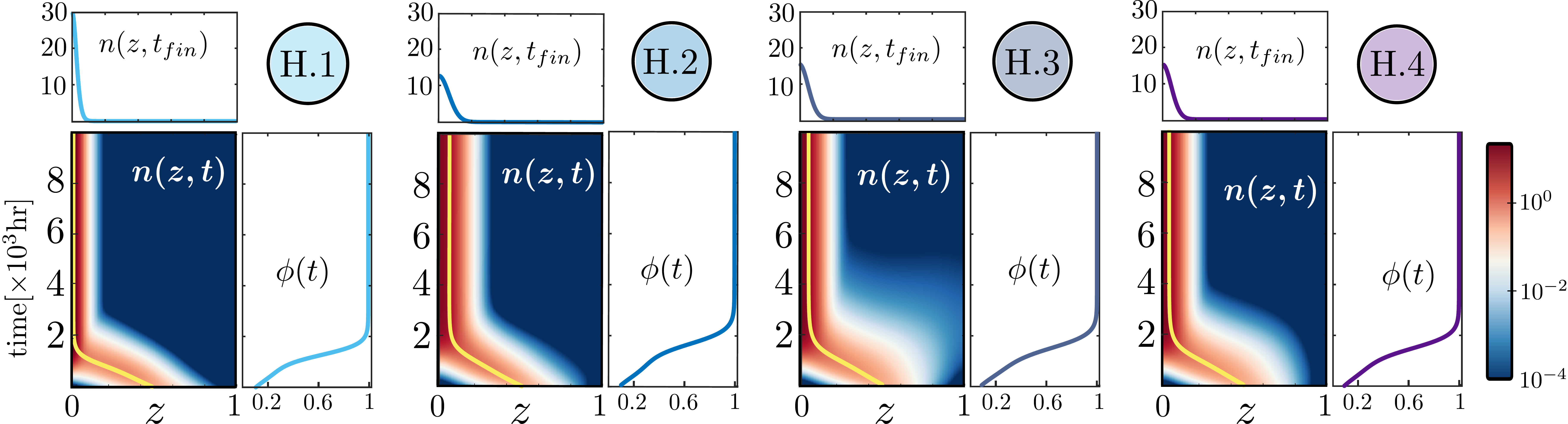}
	\caption{}
	\label{hyp1_cond1}
\end{subfigure}
\begin{subfigure}{\textwidth}
	\centering
	\includegraphics[width=\textwidth]{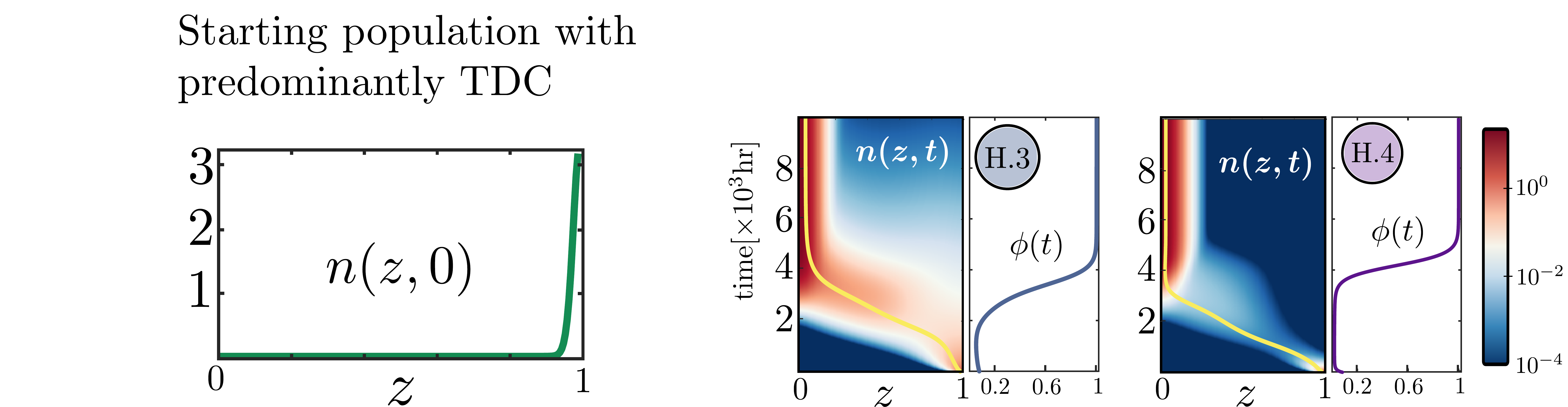}
	\caption{}
	\label{hyp1_cond2}
\end{subfigure}
\caption{Numerical results under hypoxic condition for four parameter sets, all with $V_-$= $4\times 10^{-4}$.
	In (a) we use the standard initial condition defined by Equation~(\ref{initial_cond}) while in (b) the population is centred around $z=1$. The other parameter values are as follows: 
	(H1) $\xi_-$= 0.05, $\omega_-$= 1 and $d_f$= 0.001; 
	(H2)  $\xi_-\mbox{= }0.5$, $\omega_-\mbox{= }1$ and $d_f\mbox{= }0.001$; 
	(H3) $\xi_-\mbox{= }0.5$, $\omega_-\mbox{= }2$ and $d_f\mbox{= }0.001$; 
	(H4) $\xi_-\mbox{= }0.5$, $\omega_-\mbox{= }2$ and $d_f\mbox{= }0.015$. }
\label{hyp1}
\end{figure}

Differences in the system dynamics also arise as the initial conditions $n_0(z)$ vary.
The results in Figure \ref{hyp1_cond1} indicate little variation in the system dynamics
when the initial conditions from Section \ref{oxygen_env} are used. By contrast, in
Figure \ref{hyp1_cond2} we observe marked differences when the initial conditions are centred around the TDCs. 
In this case, population regrowth is delayed, the delay depending on the choice of parameter values. 
For example, when $\omega_-=2$, the velocity in a neighbourhood of $z=1$ is smaller than when
$\omega_-=1$. Consequently, cells de-differentiate more slowly, delaying tumour regrowth. 
Similarly, increasing the death rate, $d_f$, reduces the number of cells that can de-differentiate and, subsequently, delays regrowth. Therefore, while $d_f$ does not affect the equilibrium distribution, it influences the system dynamics.
These results show how the formation of hypoxic regions can shape the development of a tumour. In particular, the emergence of hypoxia maintains and enhances the
pool of CSCs, preventing population extinction (see, for example, scenario D in Section \ref{oxygen_env}).

\subsection{Linear Stability Analysis}
\label{LSA}

\noindent We now validate some of the above numerical results by performing a linear stability analysis which enables us to characterise the equilibrium states. We denote by $\bar{n}=\bar{n}(z)$ a steady state for the (untreated) system~(\ref{mixedmodel})-(\ref{eq_ad}), with a total cell density $\bar{\phi}=\int_0^1 \bar{n}(z) dz$ and let $\delta n$ represent a small perturbation to this solution. Then we can approximate the solution $n$ in a neighbourhood of $\bar{n}$ as:
\begin{equation}
n(z,t)= \bar{n} + \delta n(z,t),\quad \|\delta n\| \ll 1
\quad \,\forall t>0.
\end{equation}
Substituting this ansatz into~(\ref{mixedmodel}) and retaining linear terms, we obtain the following equation for $\delta n$:
\begin{subequations}
	\begin{align}
\begin{aligned}
\frac{\partial \delta n}{\partial t}=
\mathcal{M}\delta n,
\end{aligned}
\label{defM}
\\[2pt]
\frac{\partial \delta n}{\partial z} = 0, \qquad z = 0, 1,\\
\delta n(z,0)\equiv 0,\label{neu}
\end{align}
\end{subequations}
where $\mathcal{M}$ is the following integro-differential operator
\begin{equation}
 \mathcal{M}\delta n \equiv \frac{\partial }{\partial z} \left(\theta \frac{\partial \delta n}{\partial z}-  v_z\delta n\right)+ \left[p\left(1-\bar{\phi}\right) - f \right] \delta n  - p\bar{n}\int_0^1 \delta n dz. \label{Mdef}
\end{equation}
The solution $\bar{n}$ is \textit{spectrally}
stable if the spectrum of the operator, $\sigma(\mathcal{M})$, does
not contain eigenvalues with positive real part, i.e.,
\begin{equation}
\sigma(\mathcal{M}) \bigcap \left\{\lambda \in \mathbb{C}: \Re(\lambda)>0\right\}=\emptyset.
\end{equation}
Moreover, the dynamics of the system will be dominated by the fastest growing mode (i.e., the eigenfunction corresponding to the eigenvalue with the largest real part, $\lambda_0$). 

In \ref{AppendixB} we transform the above eigen-problem so that it does not include any first order derivatives. For a non-zero steady state, we retain a non-local term in the eigenvalue problem and this can give rise to a spectrum with a pair of complex eigenvalues. Recalling case A.1 from Section \ref{oxygen_env} (see Figure \ref{norm_sym}), the numerically estimated value of $\lambda_0$ is indeed complex ($\lambda_0=-1.535 \times 10^{-4} \pm i  \, 2.24\times 10^{-3}$, where $i^2 = -1$). This result, in turn, explains why damped fluctuations are observed in the numerical simulations.

By contrast, when considering the trivial steady state, $\bar{n}\equiv 0$, which is always a fixed point for the system, the non-local term vanishes and we obtain the standard form analysed by 
Sturm-Liouville theory. Using well known results, we can identify sufficient conditions for the stability/instability of the trivial steady state (see Lemmas \ref{lemma1}-\ref{lemma3} in \ref{AppendixB}). Under hypoxia, where $v_z<0$,
we find that the trivial steady state is unstable (for the parameter sets in Table \ref{param_set}) and the system evolves to a non-zero distribution, which is consistent with the numerical results from Section \ref{hyp cond}. We note that the results relate only to the behaviour of the fitness function and advection velocity near the boundary $z=0$, suggesting that the most relevant parameters are $p_0^{max}$, $V_-$, $\theta$ and $\xi_-$.
By contrast, under normoxia, and for the range of parameter considered here, the system undergoes a bifurcation. For sufficiently small $V_+$, the trivial steady state is unstable; for sufficiently large $V_+$ and for large values of the death rate, $d_f$, the trivial steady state is stable, (see, for example, case C2 in Section~\ref{oxygen_env}). To investigate other parameter regimes that we can not tackle analytically, we rely on numerical estimation of the largest eigenvalue, $\lambda_0$. As shown in Figure \ref{xi_crit}, it is possible to identify the boundary of the region of stability in $(\xi_+,V_+)$ space. This
diagram does not change significantly as the death rate varies in the range from $d_f=0.001$ to $d_f=0.015$ (results not shown).
However, the results are highly sensitive to the value of $\omega_+$. Comparing Figures \ref{xi_crit1} and \ref{xi_crit2}, we see that setting $\omega_+=2$ favours the formation of a non-trivial equilibrium distribution, with the curve shifting to the far right of the parameter space (i.e., small values of $\xi_+$ and large
values of $V_+$). In the latter case, this implies that even higher velocities $V_+$ are needed to stabilise the tumour elimination solution. This is consistent with the numerical results in Section~\ref{oxygen_env}, where setting $\omega_+=2$ (see scenario B in Section~\ref{oxygen_env}) favoured the accumulation of CSCs which acted a reservoir for tumour cells. 

\begin{figure}
	\centering
	\begin{subfigure}{0.49\textwidth}
		\centering
		\includegraphics[width=\textwidth]{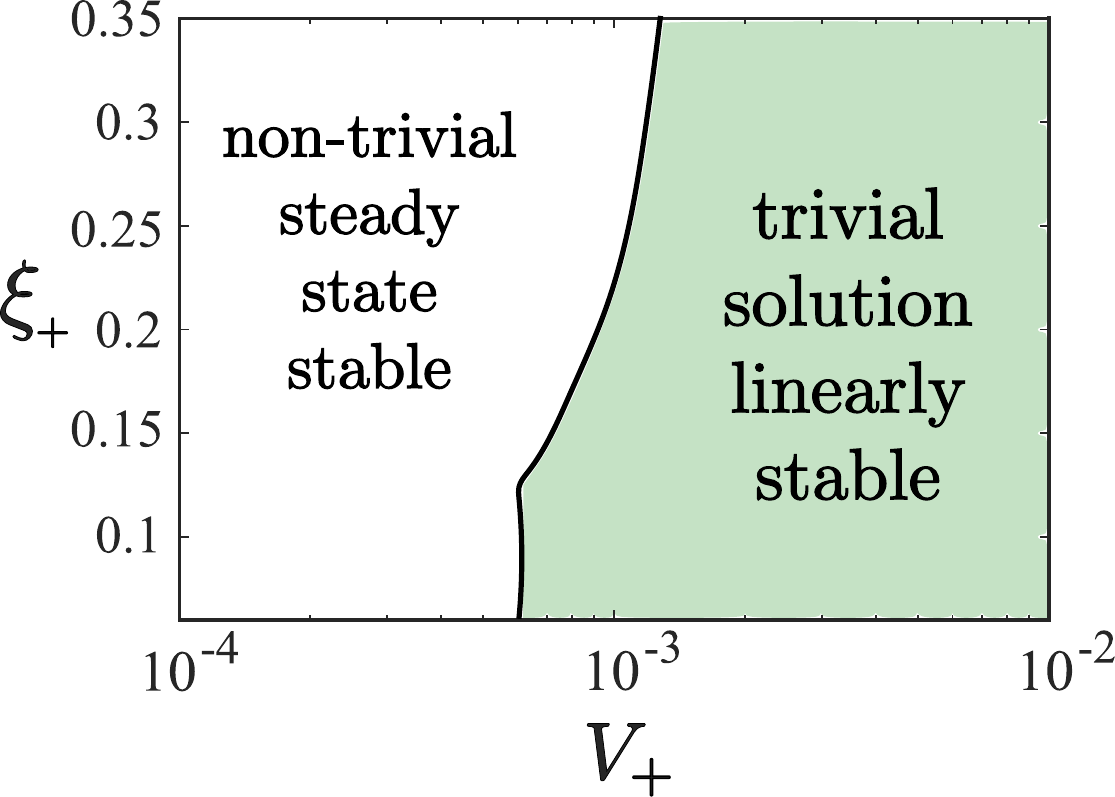}
		\caption{$\omega_+=1$}
		\label{xi_crit1}
	\end{subfigure}
	\begin{subfigure}{0.49\textwidth}
		\centering
		\includegraphics[width=\textwidth]{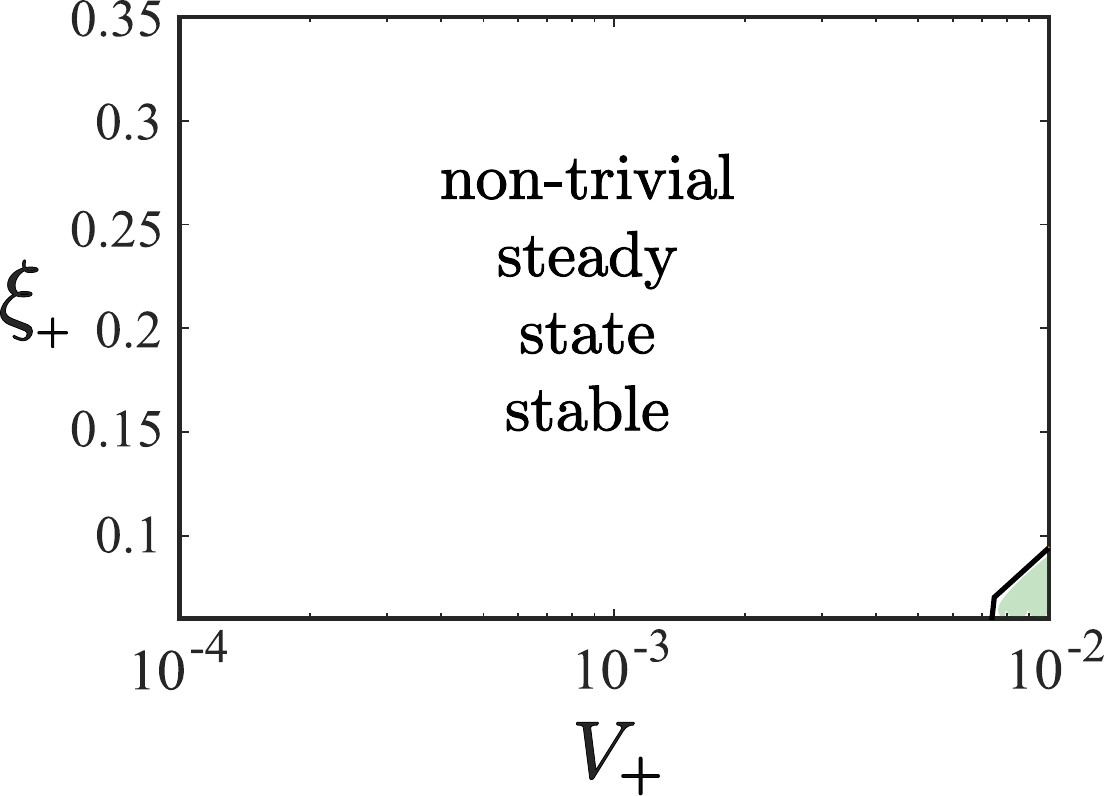}
		\caption{$\omega_+=2$}
		\label{xi_crit2}
	\end{subfigure}	
	\caption{Series of phase diagrams partitioning the $(V_+,\xi_+)$ parameter space into regions where the trivial steady state is linearly stable (green regions) and unstable (white regions). 
		The diagrams are obtained for $d_f=0.001$. 
		We note that changing $\omega_+$ has a significant impact on the size of the region of $(V_+,\xi_+)$ parameter space in which the non-trivial steady state is stable (compare (a) and (b)).}
	\label{xi_crit}
\end{figure}

\section{Population Dynamics in the Presence of Treatment}
\label{radio_result}

\noindent In the previous section, we found that the system possesses a 
stable steady state to which the dynamics converge
for the range of parameter values considered. 
Therefore, we anticipate that, while treatment can perturb the system from its equilibrium, it will eventually relax to its stable steady state once treatment ends. Thus we expect extinction to occur for parameter values lying in the stability region of the trivial steady state (see Figure \ref{xi_crit}).
From this point of view, we are interested in understanding how different environmental conditions (i.e. normoxia and hypoxia), different treatment protocols and different tumour compositions affect the relaxation phase and, in particular, the time to recurrence. 

To account for variability in tumour responses, we consider the different advection velocities used in our earlier analysis (see Table \ref{tab_rad3}). 
Starting from the initial condition~(\ref{initial_cond}), cells follow different pre-treatment protocols as specified in Table \ref{tab_rad3}.  Without loss of generality, we shift time so that $t=0$ corresponds to $24$ hours before treatment begins. While attention will focus on tumour responses in constant environmental conditions, we also consider briefly treatment responses in changing environments.
For each scenario, we simulate the response to treatment for the range of values of the radiation parameters listed in Table \ref{rad_tab2}. 
We denote by $n^{(S1,R1)}(z,t)$ the solutions corresponding to scenario $S1$ from Tables \ref{tab_rad3} and radio-sensitivity parameter set $R1$ from Table \ref{rad_tab2}.
\begin{table}
	\hspace{-4mm}
	\begin{tabular}{c c c c}
		\toprule[1.5pt]\addlinespace[2pt]
		Scenario  & Protocol &Parameters&	Subsection\\[2pt]
		\hline\addlinespace[6pt]	
		\multirow{4}{*}{$S1$} &\multirow{2}{*}{\parbox[c]{120pt}{\includegraphics[width=0.3\textwidth]{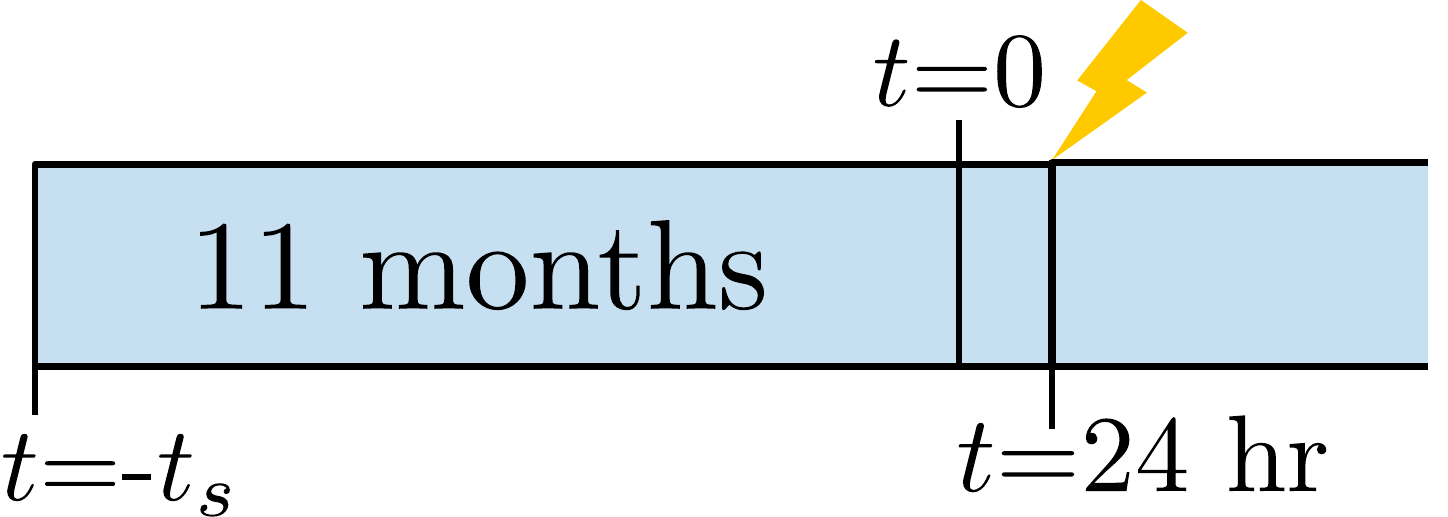}}} &&\multirow{12}{*}{\ref{radio_norm}} \\
		&&$(V_+[10^{-4}],\xi_+,\omega_+,d_f)$&\\
		& &  =$\left(4,0.05,2,0.015\right)$&\\
		&&&\\[2pt]
		\multirow{4}{*}{$S2$}&\multirow{2}{*}{\parbox[c]{120pt}{\includegraphics[width=0.3\textwidth]{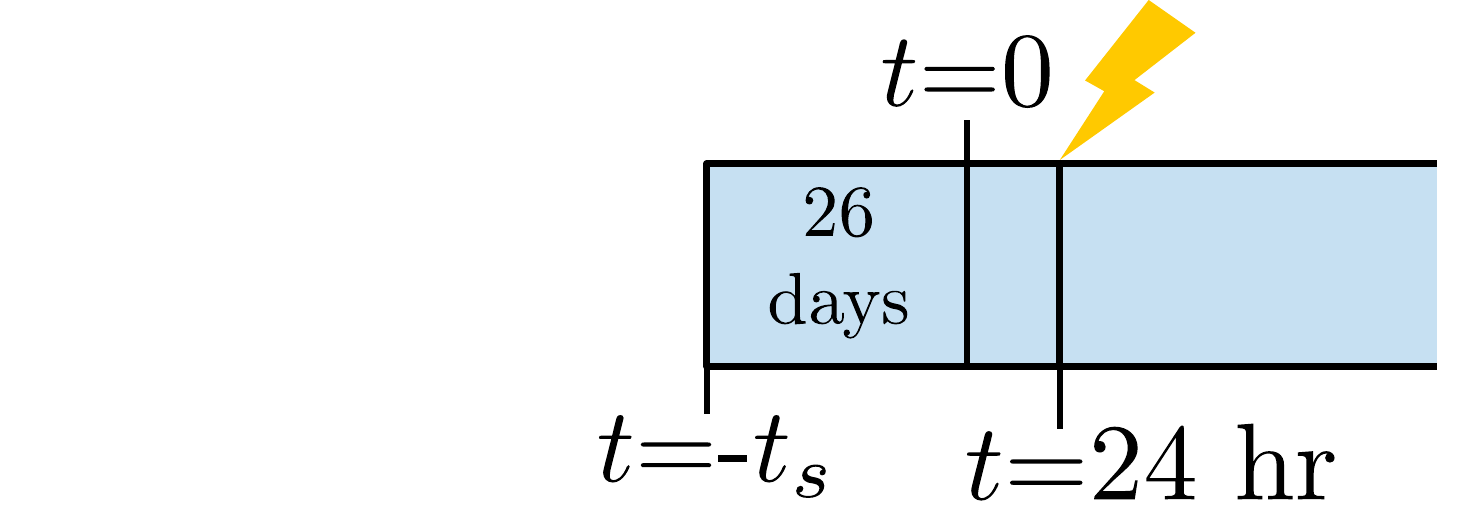}}}&&\\
		&&$(V_+[10^{-4}],\xi_+,\omega_+,d_f)$&\\
		&&=$\left(8,0.05,1,0.001\right)$&\\
		&&&\\[2pt]
		\multirow{4}{*}{$S3$}& \multirow{2}{*}{\parbox[c]{120pt}{\includegraphics[width=0.3\textwidth]{figures/radio/growthcurves/protocols/protocols2}}}&&\\
		&&$(V_+[10^{-4}],\xi_+,\omega_+,d_f)$&\\
		&&=$\left(8,0.05,2,0.001\right)$&\\
		&&&\\[2pt]
			\hline\addlinespace[4pt]
		\multirow{4}{*}{$S4$}&\multirow{2}{*}{\parbox[c]{120pt}{\includegraphics[width=0.3\textwidth]{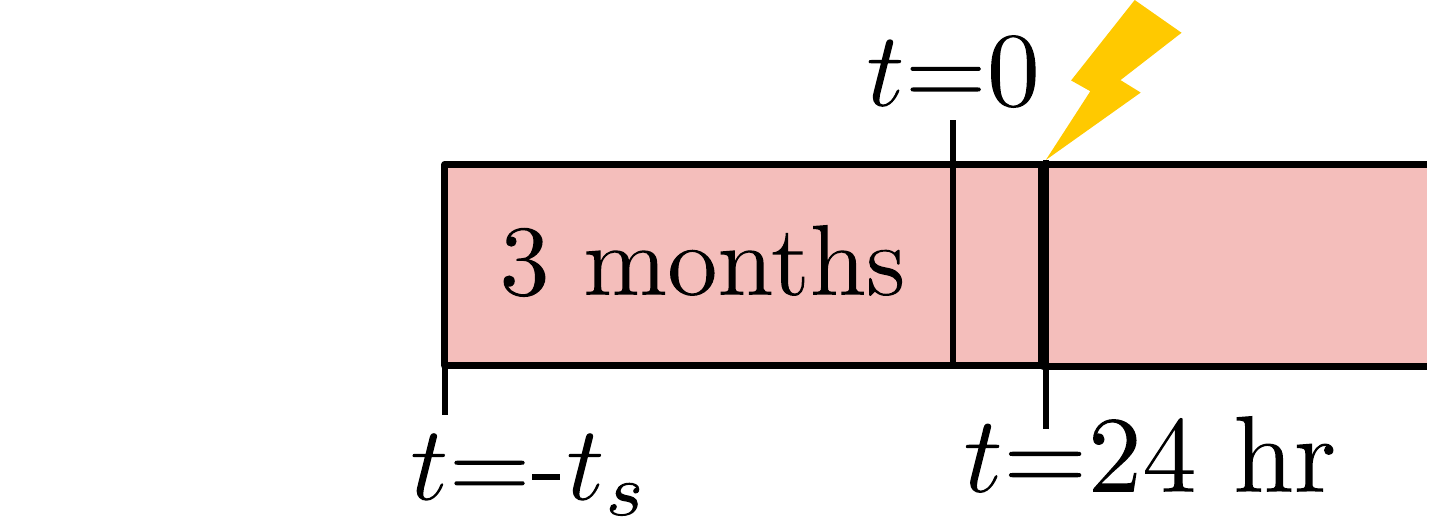}}}&&\multirow{4}{*}{\ref{radio_hyp}}\\
		&&$(V_-[10^{-4}],\xi_-,\omega_-,d_f)$&\\
		&&=$(2,0.5,2,0.001)$&\\
		&&&\\[2pt]
			\hline\addlinespace[8pt]
		\multirow{4}{*}{$S5$} &\multirow{2}{*}{\parbox[c]{120pt}{\includegraphics[width=0.3\textwidth]{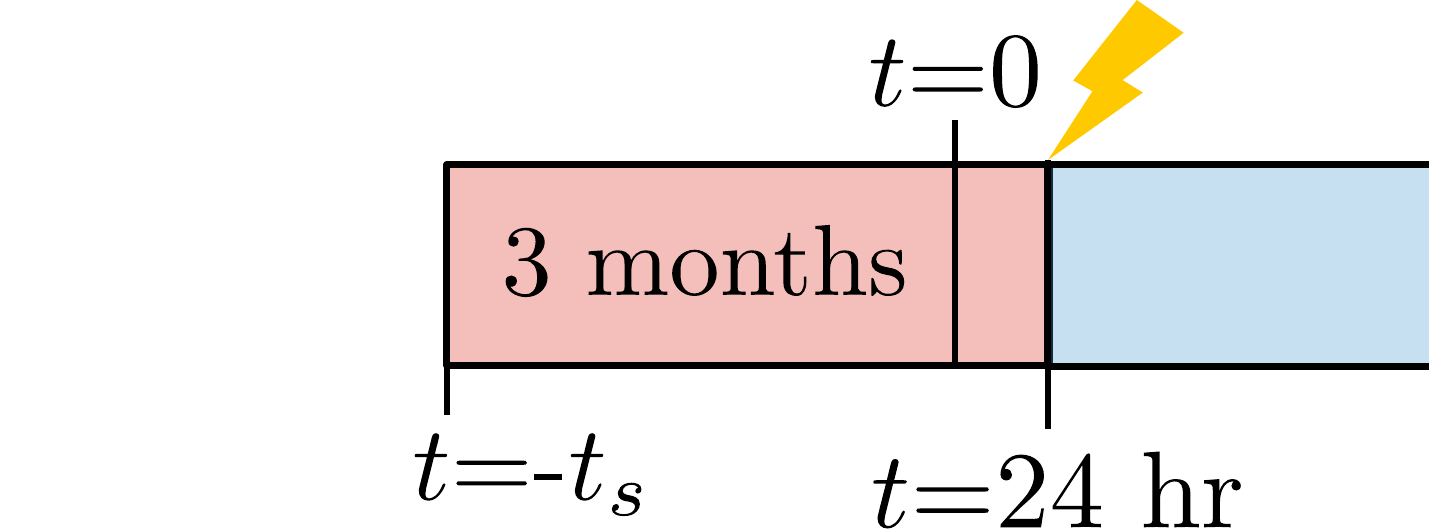}}}&&\multirow{8}{*}{\ref{radio+change}}\\
		&&$(V_\pm[10^{-4}],\xi_+,\xi_-,\omega_+,\omega_-,d_f)$&\\
			&&=$(8,0.05,0.5,1,2,0.001)$&\\
		&&&\\[2pt]
		\multirow{4}{*}{$S6$} &\multirow{2}{*}{\parbox[c]{120pt}{\includegraphics[width=0.3\textwidth]{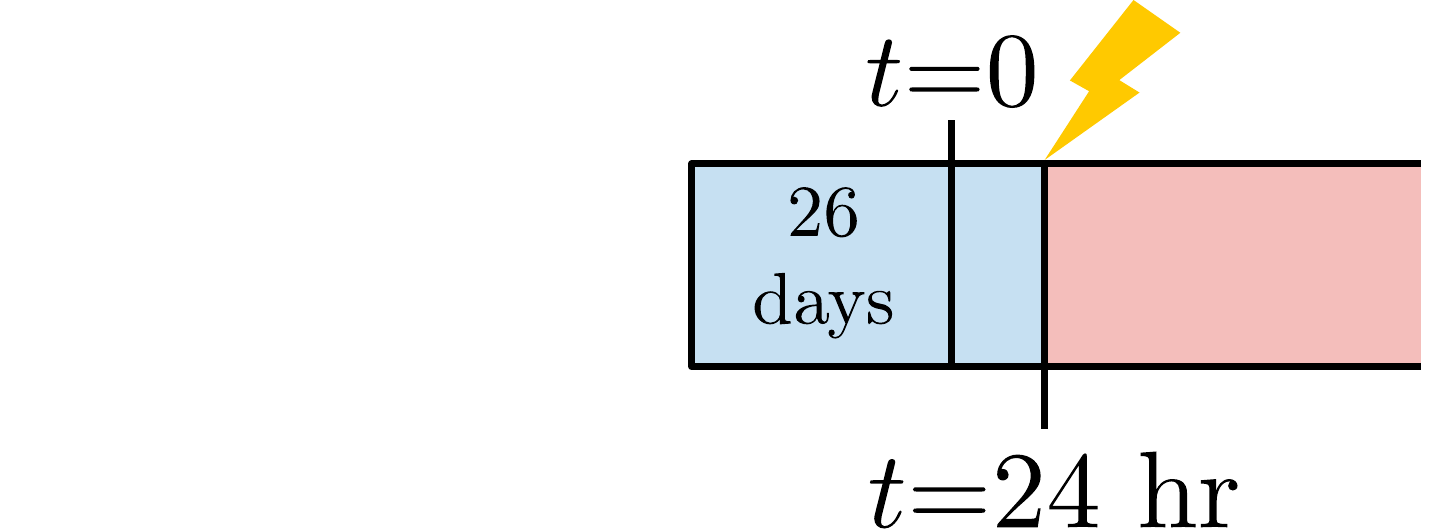}}}&&\\
			&&$(V_\pm[10^{-4}],\xi_+,\xi_-,\omega_+,\omega_-,d_f)$&\\
		&&=$(8,0.05,0.5,1,2,0.001)$&\\[2pt]
		&&&\\
		\bottomrule[1.5pt]
	\end{tabular}
	\vspace{2mm}
	\caption{Parameter sets used to generate the numerical simulations in Section~\ref{fit}, together with the corresponding environmental conditions pre- and post-treatment (blue: normoxia, red: hypoxia). Simulations are initialised using equation \ref{initial_cond} at different times $t=-t_s$ as indicated in the second column. Radiotherapy is administered at time $t=24$ hours. The parameter values have been chosen to illustrate the range of qualitative behaviours that the model exhibits.}
	\label{tab_rad3}
\end{table}

\subsection{Treatment Response in Normoxic Conditions}
\label{radio_norm}
\noindent The simulation results presented in Figure \ref{radio_norm_growth} illustrate the different regrowth dynamics that can arise when well-oxygenated tumour cells are exposed to a single dose of RT. We identify three distinct behaviours: instantaneous regrowth (S1), decay and extinction (S2) and initial remission with subsequent regrowth (S3). While the cell survival fraction immediately post-treatment depends on the parameter values used in the LQ-model (see Equation~(\ref{radiosensitivity})), the qualitative population regrowth dynamics post-treatment do not depend on these values. 
 
In more detail, for scenario S1, the cell density increases rapidly after treatment, driving the system towards its (asymptotic) equilibrium. By contrast, for scenarios S2 and S3, the growth curves initially decrease at similar rates until about $40$ days after treatment. Thereafter, for scenario $S3$ the tumour exhibits rapid regrowth to the equilibrium distribution, whereas for scenario $S2$, the tumour continues to shrink, until it is eventually eliminated.  

\begin{figure}[h]
	\centering			
	\includegraphics[width=1\textwidth]{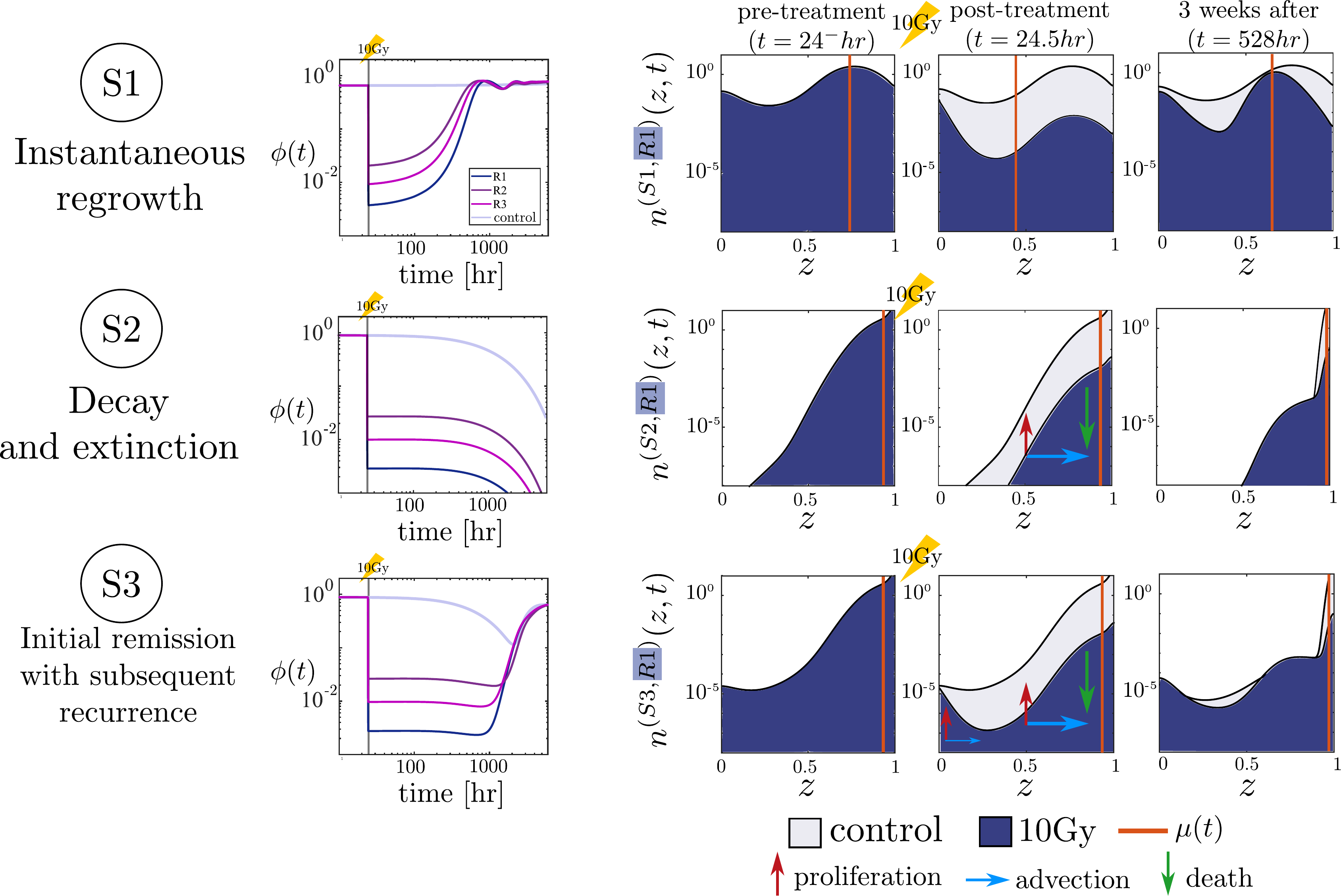}
	\caption{Different treatment outcomes under normoxia. For each scenario S1, S2 and S3 (see Table \ref{tab_rad3}) we consider the dynamics of the total cell number, $\phi(t)$, and compare the responses for the radio-sensitivity parameter sets R1, R2 and R3 (see Table \ref{rad_tab2}) to the control, untreated case. For each scenario we also present plots of the phenotypic cell distribution, $n(z,t)$, at different times for radiotherapy protocol R1. The vertical line indicates the time of irradiation, while a line is also shown that follows the evolution of the control (i.e., in the absence of treatment).}
	\label{radio_norm_growth}
\end{figure} 

The origin of such differences can be understood from the time evolution of $n(z,t)$ post-radiotherapy. Figure \ref{radio_norm_growth} shows that for
case R1 of Table~\ref{rad_tab2}, the balance between cell proliferation and advection drives the system dynamics. 
The reduction in the cell density $\phi(t)$ post-radiotherapy reduces intra-population competition and allows the cells to resume
proliferation. Depending on the magnitude of the advection velocity (which is positive), 
the cells either regrow ($S3$) or they are driven
to a terminally differentiated state and, thereafter, become extinct
($S2$). For scenario $S3$, the
presence of radioresistant CSCs post treatment and a small positive velocity at $z=0$ 
together drive regrowth. As the CSCs start to
mature, there is a continuous source of highly
proliferative cells which, in turn, drive rapid regrowth of the
tumour. As the total cell number increases, intra-population competition slows cell proliferation until eventually advection becomes dominant, driving the cells to de-differentiate. By contrast, for scenario $S2$, advection dominates proliferation along the entire phenotypic axis. Additionally, CSCs are absent so that all cells are rapidly terminally differentiated and, thereafter, undergo cell death.

Comparison of scenarios S2 and S3 reveals how different phenotypic compositions can generate treatment responses which are initially qualitatively similar, but differ markedly at long times. 
This finding is reinforced in Figure  \ref{radio_mean_norm} where we plot the mean phenotypes, $\mu = \mu(t)$, as defined by Equation~(\ref{mean}). 
For scenarios S2 and S3, the dynamics of the mean phenotype are indistinguishable at short times and do not start to diverge until approximately 20 days after treatment. 

\begin{figure}[h]
	\includegraphics[width=0.95\textwidth]{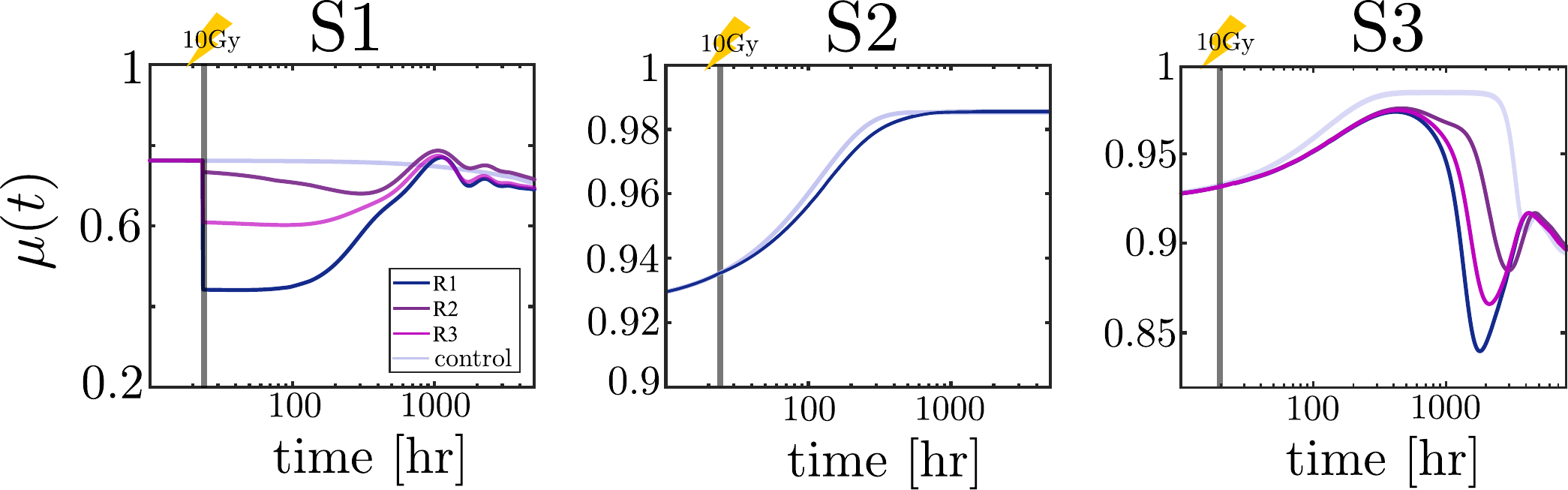}
	\caption{Series of plots showing the evolution of the phenotypic mean, $\mu(t)$, for scenarios S1, S2 and S3 (see Figure \ref{radio_norm_growth}). We note that the scales used on the vertical axes are different.}
	\label{radio_mean_norm}
\end{figure}

More generally, the results presented in Figure \ref{radio_mean_norm} reveal three characteristic behaviours for the evolution of the phenotypic mean following radiotherapy. The dynamics of $\mu$ may be the same as those prior to treatment, with negligible deviation from the control (see scenario S2). A discontinuity in $\mu$ may be induced by radiotherapy (see scenario $S1$). In this case, CSCs comprise a significant proportion of the population prior to RT and the effect of radioresistance is pronounced (see Figure \ref{radio_norm_growth}). As CSCs
are more likely to survive radiotherapy than more mature cells, we observe an ``instantaneous'' shift in $\mu$ towards
less mature phenotypes.
The size of the discontinuity depends on the relative sensitivity of CSCs and TDCs to RT, or, using the terminology introduced in Section \ref{fit}, the selective power of RT. Since we are considering high radiation dosages, the discontinuity is determined by the ratio $\beta_{min}/\beta_{max}$. In order for the selective pressure of treatment to be apparent, CSCs must comprise a significant fraction of the population prior to treatment. This explains why, for scenario S3, 
there is an initial transient period during which, as for 
scenario S2, there is no discernible deviation from the control. Only at later times does the difference in the evolution of $\mu(t)$ for the different parameter sets become apparent. 

\begin{figure}[h!]
    \centering
    \includegraphics[width=0.9\textwidth]{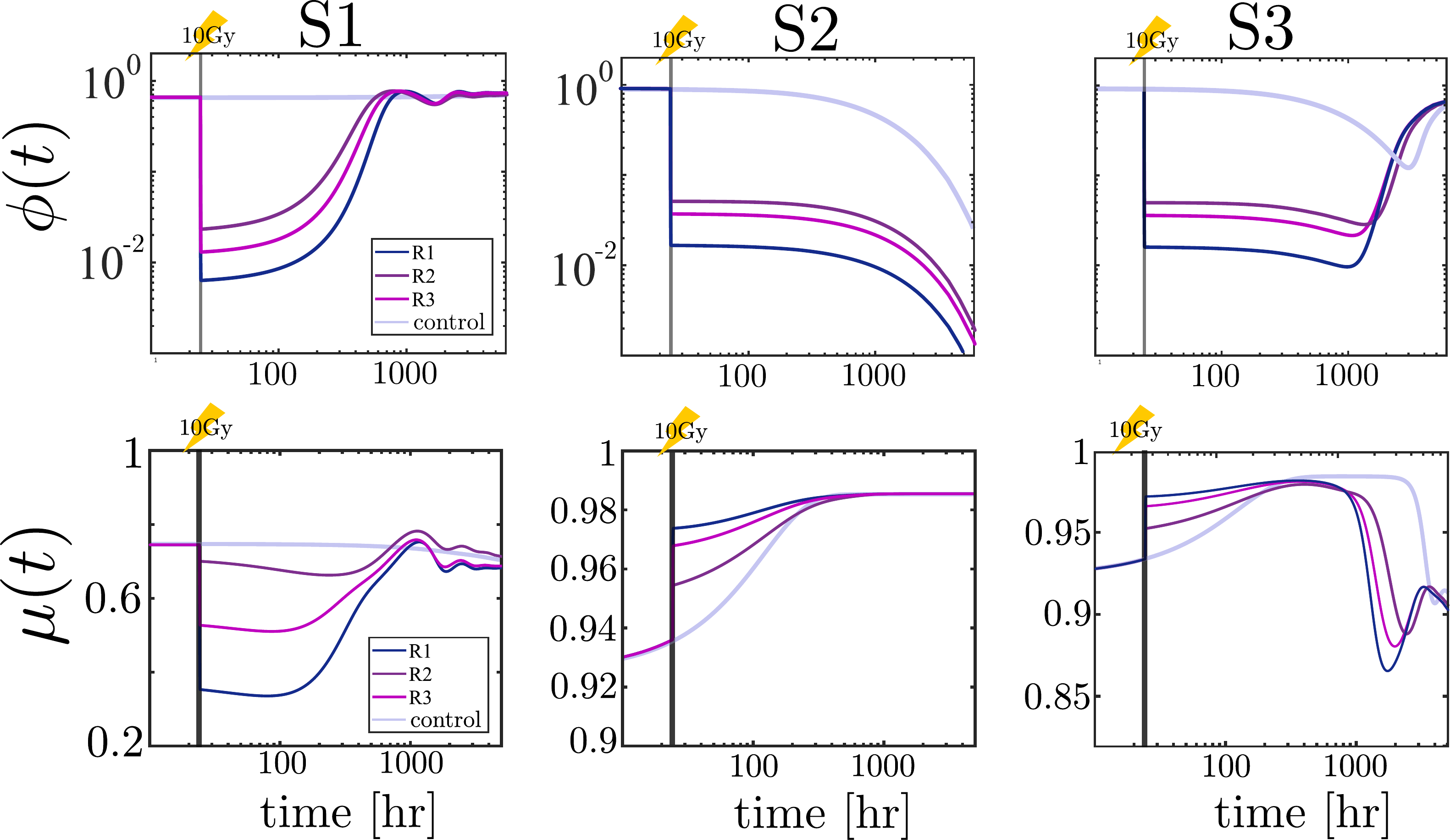}
    \caption{Series of numerical results showing how the growth dynamics and the phenotypic mean evolves following exposure to a single dose of radiotherapy when cell radio-sensitivity is a non-monotonic function of cell phenotype. The simulations are analogous to those presented in Figure~\ref{radio_norm_growth} and \ref{radio_mean_norm}, except that Equations~(\ref{eq:new_al_bet}) are used in place  of Equations~(\ref{rad_alpha})-(\ref{rad_beta}). }
    \label{fig:App_radio1}
\end{figure}

We note that other factors, in addition to stemness, influence cell radio-sensitivity. 
It is natural to expect
cells that have permanently exited the cell-cycle will be less radio-sensitive than cycling cells, as the DNA damage response may already be active in such cells \cite{Lee2014}. 
The functional forms for $\alpha$ and $\beta$ defined by 
Equations~(\ref{rad_alpha})-(\ref{rad_beta}) assume that radio-sensitivity increases monotonically with cell phenotype, $z$.
In order to investigate situations in which TDCs have lower radio-sensitivity than proliferating cancer cells, we now
the following, non-monotonic functional forms:
\begin{subequations}
\begin{align}
    \alpha(z)=\alpha_{min}+(\alpha_{max}-\alpha_{min})\tanh\left(\frac{z}{\xi_R}\right)H_{0.075}(1-z),\\[2pt]
	\beta(z)=\beta_{min}+(\beta_{max}-\beta_{min})\tanh\left(\frac{z}{\xi_R}\right)H_{0.075}(1-z),
\end{align}\label{eq:new_al_bet}%
where $H_\epsilon$ is defined in \S \ref{vz_sec}, and we arbitrarily fix $\epsilon=0.075$ (all other parameters are as defined in \S\ref{fit}). 
\end{subequations}

When the single dose experiment is repeated with the new radio-sensitivity profile, we observe an overall increase in the population survival fraction (compare Figures \ref{fig:App_radio1} and \ref{radio_norm_growth}) and changes in the dynamics of the population mean $\mu(t)$ (compare Figures \ref{fig:App_radio1} and \ref{radio_mean_norm}).  The differences are most pronounced for scenarios $S2$ and $S3$ where TDCs, localised near $z=1$, are dominant in the population prior to treatment. The qualitative growth dynamics (i.e., $\phi(t)$) is similar for both cases. 
Further investigation of these differences is beyond the scope of the current study and is postponed for future work.

\begin{figure}[h]
\centering
\includegraphics[width=0.7\textwidth]{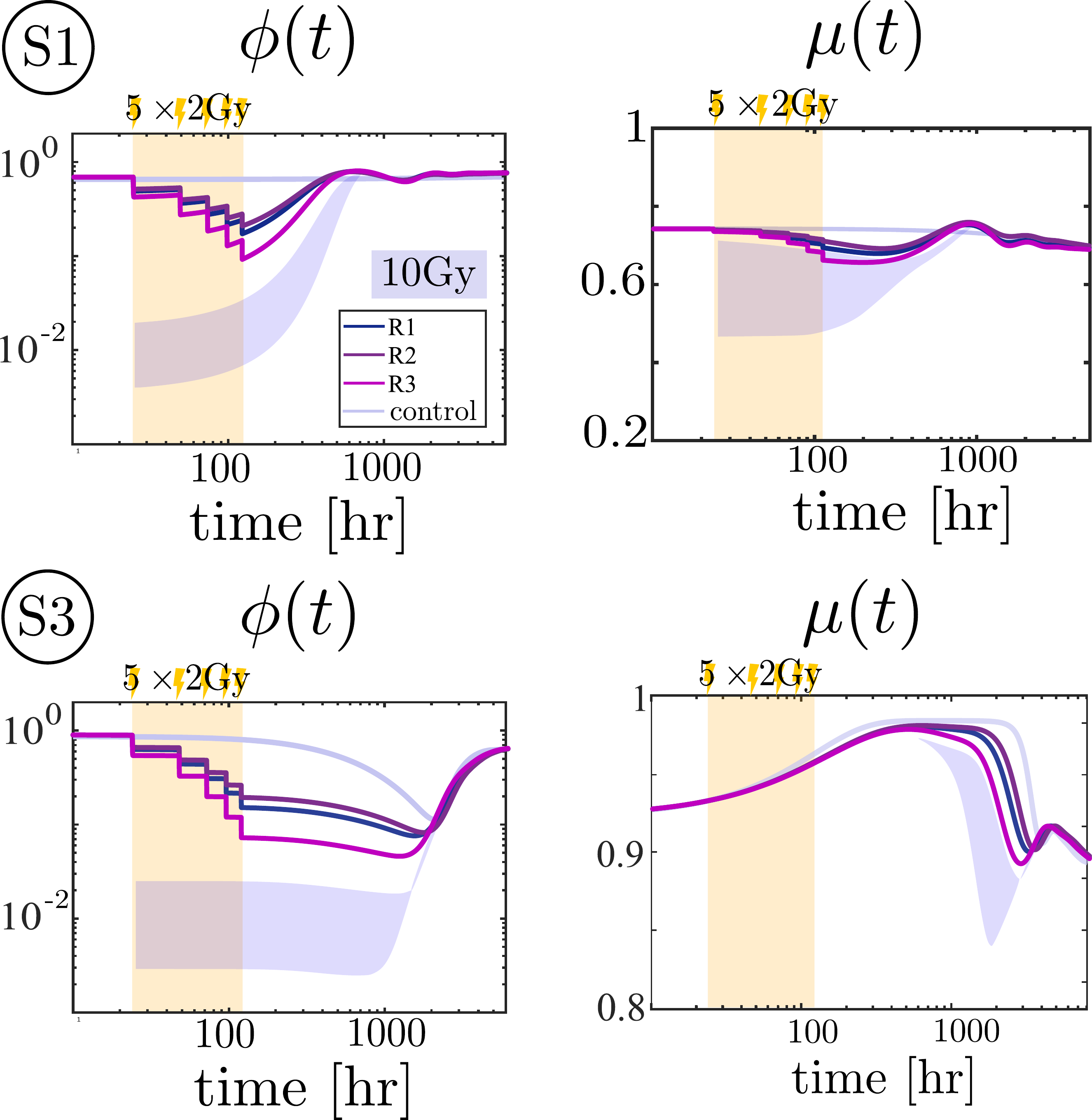}	
\caption{Simulation results for fractionated radiotherapy protocols, showing how the total cell number $\phi(t)$ and the phenotypic mean 
$\mu(t)$ evolve for scenarios S1 and S3 (see Figure \ref{radio_norm_growth} for details). In all plots, the light purple shaded area indicates the variability in responses when a single dose of 10 Gy is administered and is included for comparison with the fractionated treatments (see Figure \ref{radio_mean_norm}). The yellow shaded area indicates the duration of the treatment for the fractionated case.} 
\label{frac_growth}
\end{figure}
In practice, delivery of a single (high) dose of 10 Gy may not be practical for treating patients, due to adverse side effects \cite{Taylor2011}. Therefore, we now consider tumour responses to fractionated RT protocols.
The trends for fractionated RT are similar to those for single doses for all scenarios in Table \ref{tab_rad3}. Typically, the proportion of cells that survive fractionated therapy is larger than for the single-dose case, by a factor of about 100. 
Consequently, for scenarios $S1$ and $S3$, the time to return to the equilibrium population distributions is reduced. For S2, while treatment causes a monotonic decrease in the cell density $\phi$, since more cells survive fractionated RT, it takes longer for the cell population to become extinct. 
For scenarios $S1$ and $S3$, we recall that for high doses of RT, the phenotypic mean  was markedly affected by the specific LQ model parameters considered; this is not the case when lower doses are applied (see Figure \ref{frac_growth}).

\begin{figure}[h!]
	\centering	
	\includegraphics[width=0.95\textwidth]{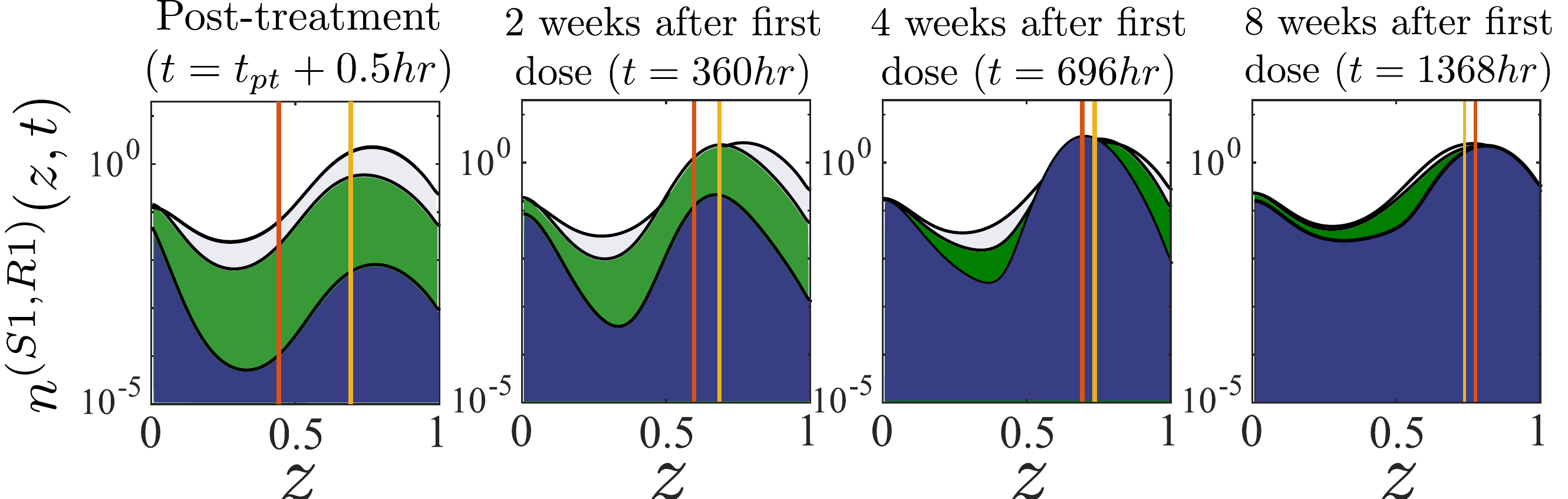}
	\caption{Phenotypic distribution $n^{(S1,R1)}(z,t)$ for the control (light blue), the colony exposed to a single dose (dark blue) and the one treated with fractionated dose $2$ Gy
	$\times5$ (green). The orange and yellow lines indicate the phenotypic mean for the single dose (orange) and fractionated (yellow) therapy respectively. Note that the first panel corresponds to the end of the treatment so that $t_{pt}$ is 24 hr and 120 hr for the 10 Gy and fractionated protocol respectively. On the other hand, the remaining panels are measured relative to the beginning of the treatment, which is at $t=24$ hr for both protocols.}
	\label{distFd}
\end{figure}

The variability in responses for scenarios S1 and S3 following a single dose of radiotherapy can be attributed to the temporary advantage CSCs have post treatment. When using a fractionated protocol, intra-population competition is maintained at the cost of fewer cells being killed. This is apparent when we compare the phenotypic distribution at different times for the two treatment protocols (see Figure \ref{distFd}). When $10$ Gy is administered in one dose (first panel, dark blue region), the peak of the distribution is at $z=0$. On the other hand, after 5 doses of $2Gy$ per day (first panel, green region), the proportions of differentiated and cancer stem cells are approximately equal. 
Given that the former proliferate faster than the latter, the differentiated cells quickly become the dominant phenotype. Consequently, one month after treatment ends (third panel in Figure \ref{distFd}), the proportion of CSCs in the population is the same for both protocols. 
We conclude further that the single dose protocol outperforms the fractionated protocol when we compare the total number of cells (the blue curve is below the green one for all values of $z$).

\subsection{Treatment Response in Hypoxic Conditions}
\label{radio_hyp}
\noindent Cell populations that are continuously exposed to hypoxia, exhibit instantaneous re-growth following RT, as shown in Figure \ref{rad_hypoxia_growth}. Compared with the treatment outcome under normoxia, a higher percentage of cells survive radiation, because there is a larger proportion of radio-resistant cells in the population under hypoxia. Even though a smaller fraction of cells are killed, re-growth is also usually slower under hypoxia than under normoxia. 
We note also that, following exposure to the single and fractionated protocols, the phenotypic mean $\mu(t)$ shifts toward $z=0$ under hypoxia, favouring CSCs as the dominant phenotype (see Figure \ref{rad_hypoxia_growth}). The drift in $\mu$ is less pronounced for the fractionated case, suggesting the latter protocol is less favourable for the immediate accumulation of resistant subpopulation of CSCs than the single dose.

\begin{figure}[h]
	\centering
	\includegraphics[width=0.75\textwidth]{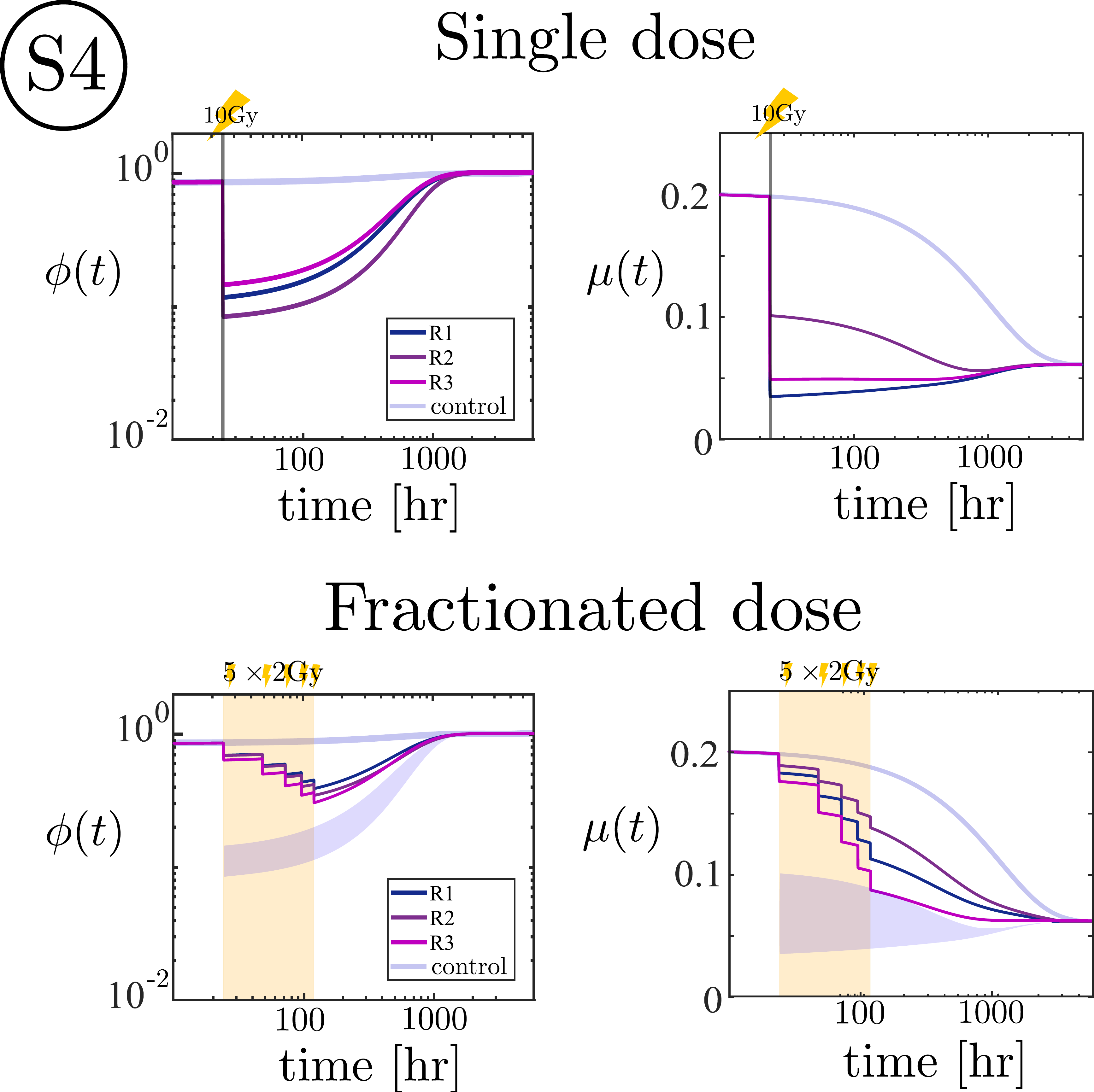}	
	\caption{Comparison of the tumour cell responses to single and fractionated radiotherapy protocols under hypoxia for scenario S4
		(See Table \ref{tab_rad3}). Simulation results showing the time evolution of the cell density, $\phi(t)$, and phenotypic mean, $\mu(t)$, are presented. For comparison, the light purple shaded areas in the fractionated plots indicate the variability in the response when a single dose of $10$ Gy is administered. The yellow shaded areas indicate the duration of treatment for the fractionated case.} 
	\label{rad_hypoxia_growth}
\end{figure}

Taken together, our simulation results suggest that, under hypoxia, RT may accelerate the accumulation of resistant cells, while significantly reducing the overall growth rate of the population. 
 
\subsection{Treatment Response in a Changing Environment}
\label{radio+change}
\noindent Thus far we have assumed that the oxygen concentration remains constant throughout treatment. 
While this may accurately describe RT responses for cells cultured \textit{in vitro}, such control is likely to be absent \textit{in vivo} \cite{Arnold2018,Fenton2001,Kempf2015}. There is currently no consensus about the impact of RT on tumour vasculature and, hence, tissue re-oxygenation. On the one hand, high doses of radiotherapy may damage the vasculature~\cite{Hormuth}, and decrease nutrient availability post radiotherapy. 
On the contrary, moderate RT may transiently increase tissue oxygenation
by \emph{normalising} the tumour vasculature 
(vessel normalisation is a phenomenon that has been observed when tumours are exposed to vascular-targetting agents which destroy some of the blood vessels in a way that increases blood flow through the network and, thereby, tissue oxygen levels \cite{Carmeliet2011,Jain2014}). 

Moreover, as tumour cells are killed, the pressure on immature vessels, not damaged by the radiation, decreases, and oxygen supply to the surviving cells may increase. Equally, hypoxic regions may form at later times as the tumour regrows. From this point of view, radiotherapy may impact both the phenotypic distribution of the cell population (and, thereby, its radio sensitivity),  and oxygen levels post-treatment.
We can use our mathematical model to investigate these scenarios, by assuming that oxygen levels change post radiotherapy.  

Based on the results presented in Sections \ref{radio_norm} and \ref{radio_hyp}, we anticipate that reoxygenation of a hypoxic tumour will be beneficial in certain cases, driving CSC maturation, and even leading to tumour eradication. The results presented in Figure \ref{posthypo_2} show that the long-term tumour regression is preceded by an initial phase of regrowth during which CSCs that survive treatment de-differentiate and proliferate. Such a treatment might initially be considered unsuccessful, although
the stability of the trivial steady state upon 
re-oxygenation leads to extinction at longer times.

\begin{figure}[h]
	\begin{subfigure}{0.4\textwidth}
		\begin{subfigure}{\textwidth}
		\includegraphics[width=0.9\textwidth]{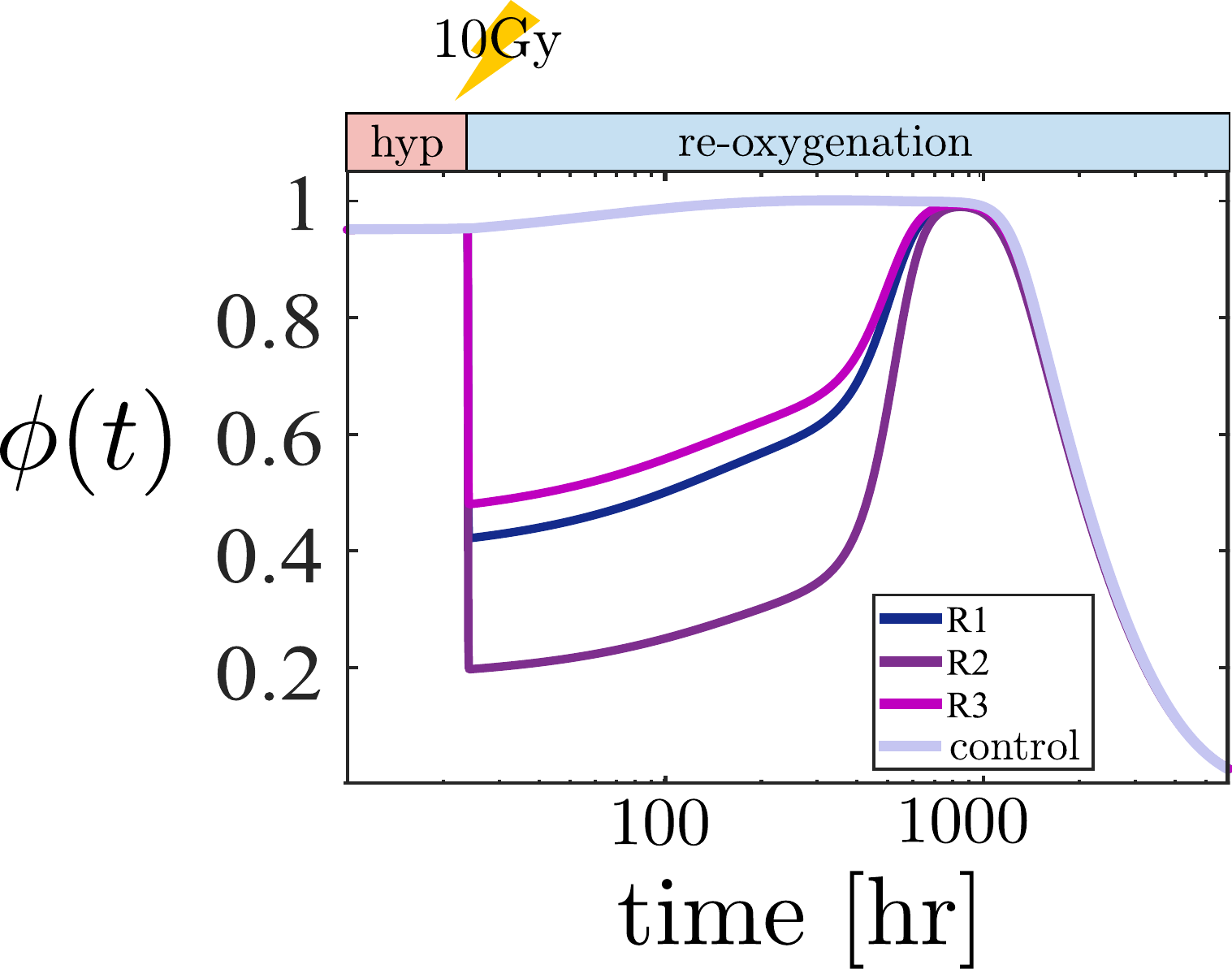}
		\caption{S5}
	\end{subfigure}	

	\begin{subfigure}{\textwidth}
		\includegraphics[width=0.9\textwidth]{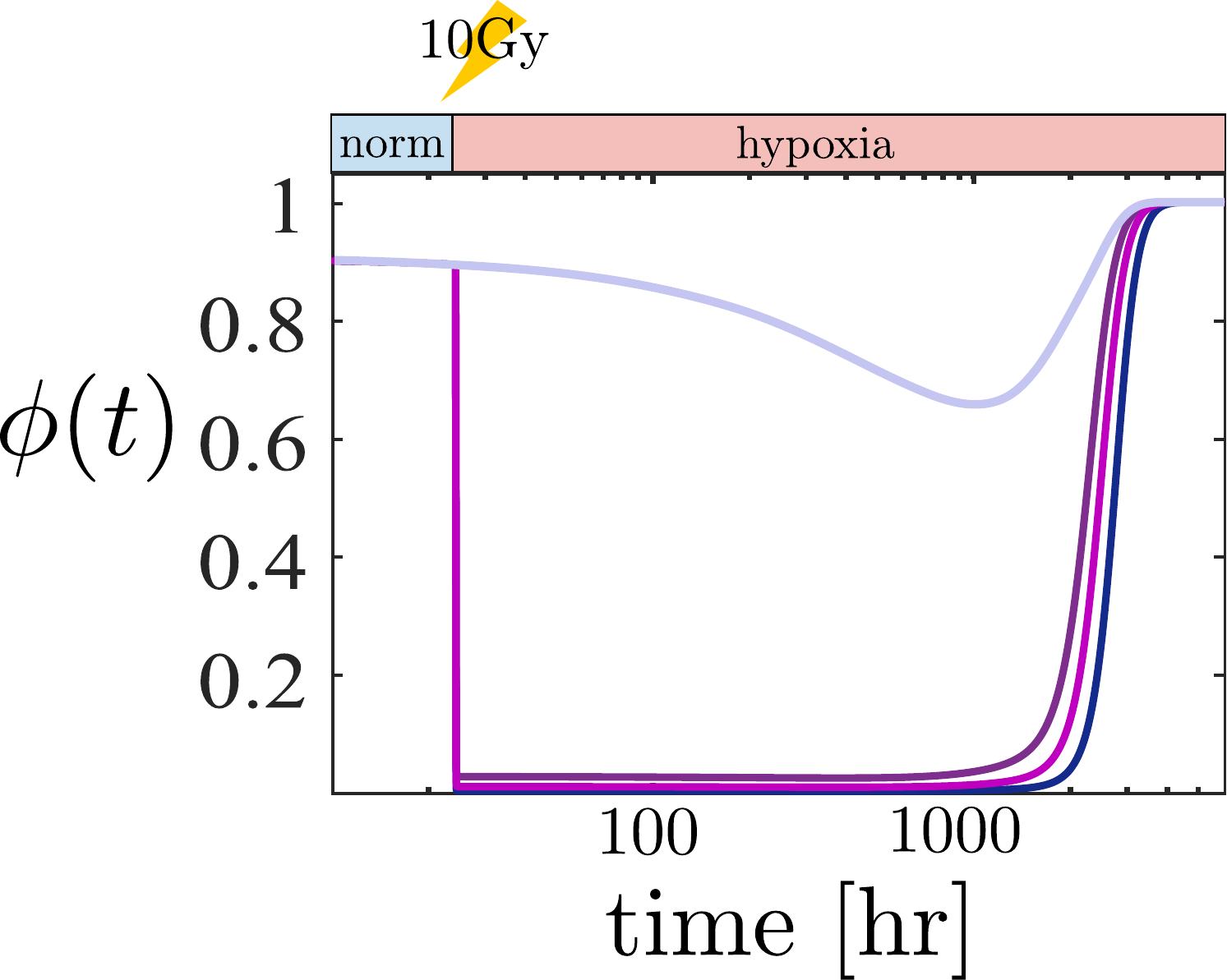}
		\caption{}
		\label{posthypo_2b}
	\end{subfigure}	
	\end{subfigure}
	\hspace{5mm}
	\begin{subfigure}{0.5\textwidth}
	\includegraphics[width=0.9\textwidth]{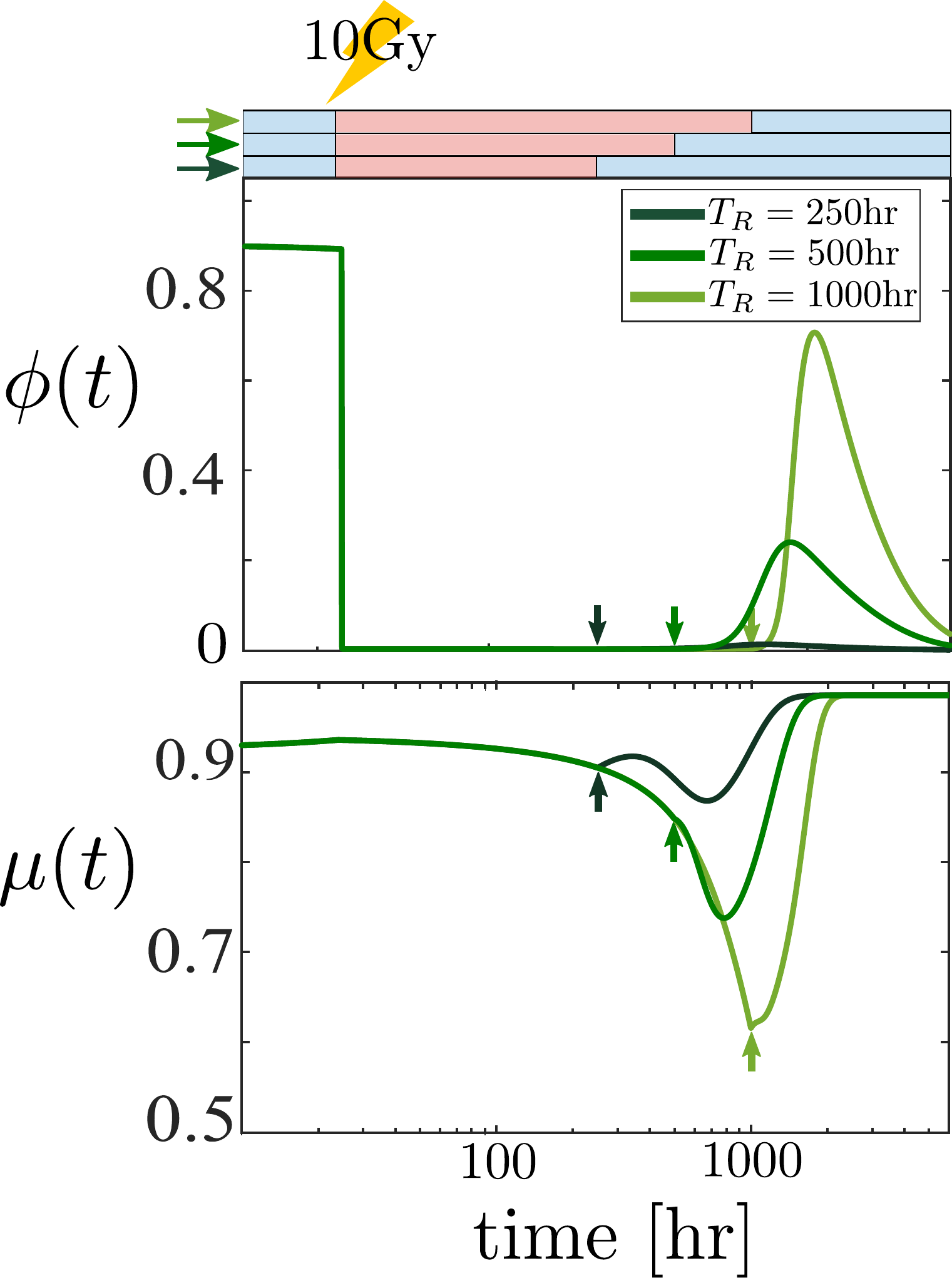}
	\caption{S6}
	\label{posthypo_2c}
	\end{subfigure}	
\caption{Growth curves for changing environmental conditions: (a) re-oxygenation and (b) post-radiation hypoxia for the parameter values S5 and S6 in Table \ref{tab_rad3}, respectively. Different response to treatment are compared based on parameter values from Table~\ref{rad_tab2}. (c) Growth curve $\phi(t)$ and phenotypic mean $\mu(t)$ evolution for model $R1$ from Table~\ref{rad_tab2}, when exposed to transient post-treatment hypoxia. We denote by $T_R$ the time at which re-oxygenation occur (indicated by the arrows in the plot). If $T_R$ is sufficiently small than re-oxygenation does not drive re-growth of the cells population. If we waited for a sufficiently long time (as in case $T_R=1000$) then re-oxygenation would first drive regrowth. Areas in blue and pink correspond to intervals of normoxia and hypoxia respectively.}
\label{posthypo_2}
\end{figure}

As mentioned previously, when high radiation doses are applied \textit{in vivo}, it is likely that the vessel network is also damaged, potentially inducing hypoxia \cite{Arnold2018}. Figure \ref{posthypo_2b} shows that 
such environmental changes may negatively impact the outcome. 
The formation of an hypoxic region favours the development and 
maintenance of radioresistant CSCs, reducing the treatment efficacy and making it more difficult to eradicate the tumour. At the same time, environmental changes may be transient: damaged blood vessels are likely to be replaced by new vessels which form via angiogenesis and re-oxygenate 
the damaged regions. As shown in Figure \ref{posthypo_2c}, depending on the time-scale required for vessel regrowth (indicated by $T_R$), different behaviours may arise. If the duration of RT-induced periods of hypoxia is sufficiently short, then the size of the cell population remains low. 
By contrast, if there is sufficient time for cells to de-differentiate (see $T_R=1000$), then re-oxygenation leads to a rapid increase in cell number, although 
eventually the cells die out. These results highlight the complex interplay between tumour growth and treatment response \textit{in vivo} and the importance of environmental factors in 
determining the eventual outcome of radiotherapy treatment. 

\section{Conclusion and Future Challenges}
\label{conclusion}
\noindent We have developed a structured model to investigate how clonogenic heterogeneity affects the growth and treatment response of a population of tumour cells. Cell heterogeneity is incorporated via an
independent and continuous structural variable which represents \emph{stemness}. As proposed by \cite{Scott169615,Chisholm2016}, we view stemness as a plastic trait, with cells becoming more, or less, stem-like depending on their environmental conditions. 
Our mathematical model accounts for cell proliferation and apoptosis,
inter-cell competition, and phenotypic movement along the stemness axis, via diffusion and advection.

Studies of the population dynamics in the absence of treatment revealed that, under normoxia, a variety of qualitative behaviours may arise
depending on the functional forms used to represent the structural flux and fitness landscape. When advection dominates movement along the stemness axis, its magnitude, relative to the rates of proliferation and cell death, determines whether the population is driven to extinction. Multimodal distributions,  which allow for the formation and maintenance of CSCs pools, are observed for asymmetric velocity profiles. Under hypoxia, the population distribution is unimodal and skewed toward stem-like phenotypes, with little intra-population variability. 
The resulting cell distribution is highly
resistant to radiotherapy, the tumour will typically
regrow following treatment. By contrast, under
normoxia (or re-oxygenated hypoxia), and 
for suitable parameter values, the tumour may become extinct following radiotherapy.

There are many ways in which the work presented in this paper could be extended. 
A first, natural extension would be to incorporate structural and spatial heterogeneity (i.e., both phenotypic and spatial dimensions) \cite{hodgkinson}. 
This would enable us to consider \textit{in vivo} situations, where spatial gradients in oxygen levels emerge naturally, due to oxygen consumption by the cells as it diffuses from blood vessels. As outlined in~\ref{AppendixA}, in such a model oxygen consumption rates may vary with cell phenotype, and spatial fluxes may account for random movement of the cells. 
Preliminary results for such a model are presented in Figure~\ref{space1}. 
We consider a 1D Cartesian geometry and focus on a tumour region of  
width $L$, in which a blood vessel located at $x=0$ provides a continuous 
supply of oxygen to the tissue. If the tumour initially comprises a spatially homogeneous distribution of terminally differentiated cells (see Equation~(\ref{initial_cond})), then the oxygen rapidly relaxes to a steady state and a hypoxic region forms at distance from $x=0$. In contrast to the well-mixed model, cells are now able to move, by random motion, between normoxic and hypoxic regions. 
While terminally differentiated cancer cells are dominant in the well-oxygenated region, a small fraction persists in the hypoxic region (in particular, near the boundary of the hypoxic region, orange line in the plots in Figure~\ref{space1}). 
This is due to the influx of 
cells from the well-oxygenated portion of the domain. Similarly, CSCs are dominant in the hypoxic region, but a small fraction of hypoxic CSCs migrate towards $x=0$, where re-oxygenation induces their maturation, creating a differentiated and highly proliferative cell phenotype, alongside terminally differentiated cancer cells.
These results illustrate how the interplay between space, resources
and phenotypic adaptation may give rise to complex behaviours; their investigation is the focus of ongoing work.
\begin{figure}[h!]
	\centering
	\includegraphics[width=\textwidth]{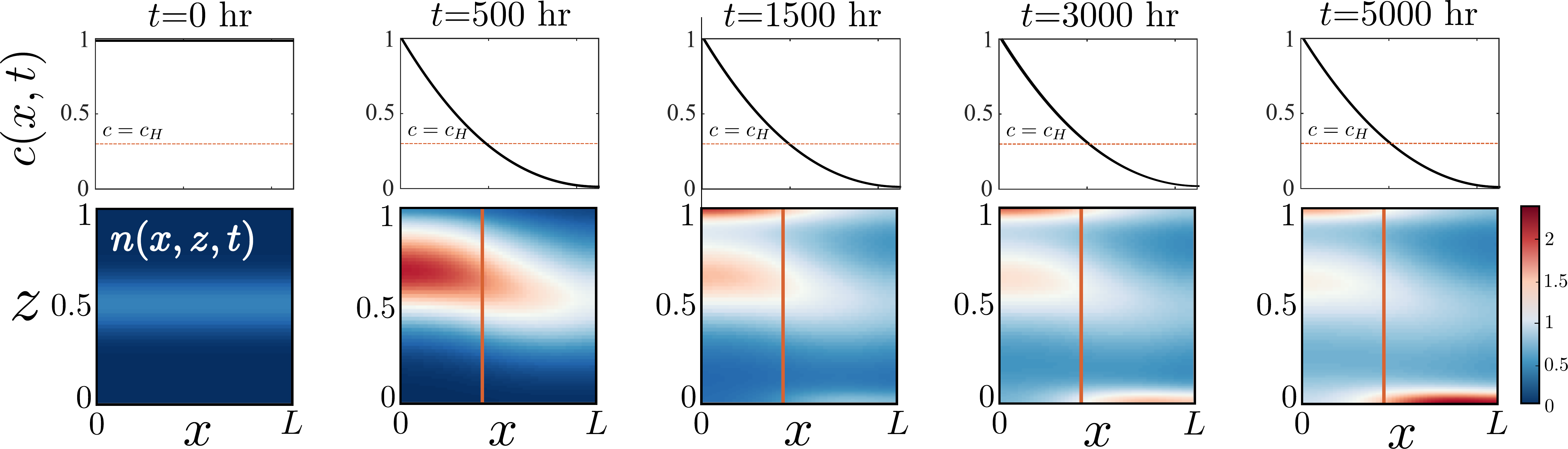}
	\caption{Series of plots showing how, in the absence of treatment, the cancer cell population $n(x,z,t)$ and the oxygen concentration $c(x,t)$ change over time $t$ when we account for spatial and phenotypic variation (see Equations~(\ref{spatial_mod})). 
	We indicate the threshold $c=c_H$ which defines the boundary of the hypoxic region with a horizontal red line in the upper plots and with a vertical orange line in the lower plots. We fix $V_\pm=4\times10^{-4}$, $\xi_\pm=0.1$, $\omega_+=1$, $\omega_-=2$ and $d_f=0.001$, while the remaining model 
	parameters are fixed at the values stated in Table \ref{param_set}.}
	\label{space1}
\end{figure}

A significant challenge of the modelling approach presented in this paper is the determination of model parameters and functional forms. 
In the longer term, techniques such as \textit{single-cell RNA sequencing}~\cite{tirosh,venteicher} will make it be possible to 
quantify specific aspects of
our model, such as the dependence of the proliferation and apoptosis rates on cell stemness and the dependence on 
the tumour micro-environment of the (phenotypic) advection velocity 
associated with cell maturation and de-differentiation. 
In spite of their current limitations, we believe that studies of
such models can increase understanding of the ways in which specific 
physical processes may influence 
the phenotypic distribution of cell populations in different environments. 
At the same time, we acknowledge that 
it remains a matter of debate as to whether asymmetric cell distributions are driven by micro-environmental signals (as in the model presented here), asymmetric division, or a combination of the two~\cite{Roeder2006}. By using a non-local proliferation kernel to account for asymmetric division, we could investigate these alternative hypotheses and identify conditions under which they lead to different outcomes. 

A important feature of our model is the way in which the response to radiotherapy (RT) varies with cell stemness (i.e., $z$). 
Our analysis shows how the functional forms used to describe the advection velocity and fitness functions can affect the system dynamics post-RT. 
While unimodal phenotypic distributions lead to monotonic growth curves post-treatment, more complex behaviour is observed when heterogeneous populations, with a pool of CSCs, are considered. 
For example, under normoxia, the presence of radio-resistant CSCs can drive recurrence, despite an initial phase of tumour regression. As the CSCs mature into highly-proliferating cancer cells, rapid re-growth is accompanied by re-sensitisation of the population to RT. 
Under hypoxia, CSCs maintain their stemness, leading to a slowly growing, 
radio-resistant cell population. 
More complex outcomes arise when we consider the effect that treatment might have on the environment.
As noted in~\ref{radio+change}, changes in the vasculature induced by radiotherapy can result in either post-treatment re-oxygenation or hypoxia. While re-oxygenation increases the radio-sensitivity of the population, hypoxia increases their radio-resistance. In practice, such environmental changes are likely to be transient. Even in an untreated tumour, 
fluctuations in oxygen levels can occur. Consider, for example, 
cells in a neighborhood of immature blood vessels. As the cells proliferate, they exert mechanical pressure on the vessels, causing them to collapse and local oxygen levels to fall. Under hypoxia, the tumour cells stimulate the growth of new blood vessels from pre-existing ones, via angiogenesis. In this way, tumour regions may cycle between periods of hypoxia and normoxia. 
It would be of interest to extend the model to account explicitly for the tumour vasculature and its interaction with tumour cells. This could be achieved at a ``high level'' of description, via
simple ODE models such as \cite{Hahnfeldt1999, Stamper2010}, or via more complex, multi-phase \cite{Hubbard2013} or multi-scale approaches \cite{Byrne2010,Macklin2009,Vavourakis2017,Walpole2013}. 

This would enable us to better capture the different time-scales on which the oxygen dynamics and cell adaptation velocity change. As shown in 
Figure \ref{space2}, variations in oxygen levels emerge naturally within spatially-resolved models. Here, cell killing leads to tissue re-oxygenation which, in turn, disrupts the CSC niche. Depending on the time scale over which the cells adapt to their new environmental conditions, this may increase the overall radio-sensitivity. Understanding and accounting for such phenomena is particularly relevant for predicting responses to RT and comparing alternative treatment protocols.

\begin{figure}[h!]
	\centering
	\includegraphics[width=\textwidth]{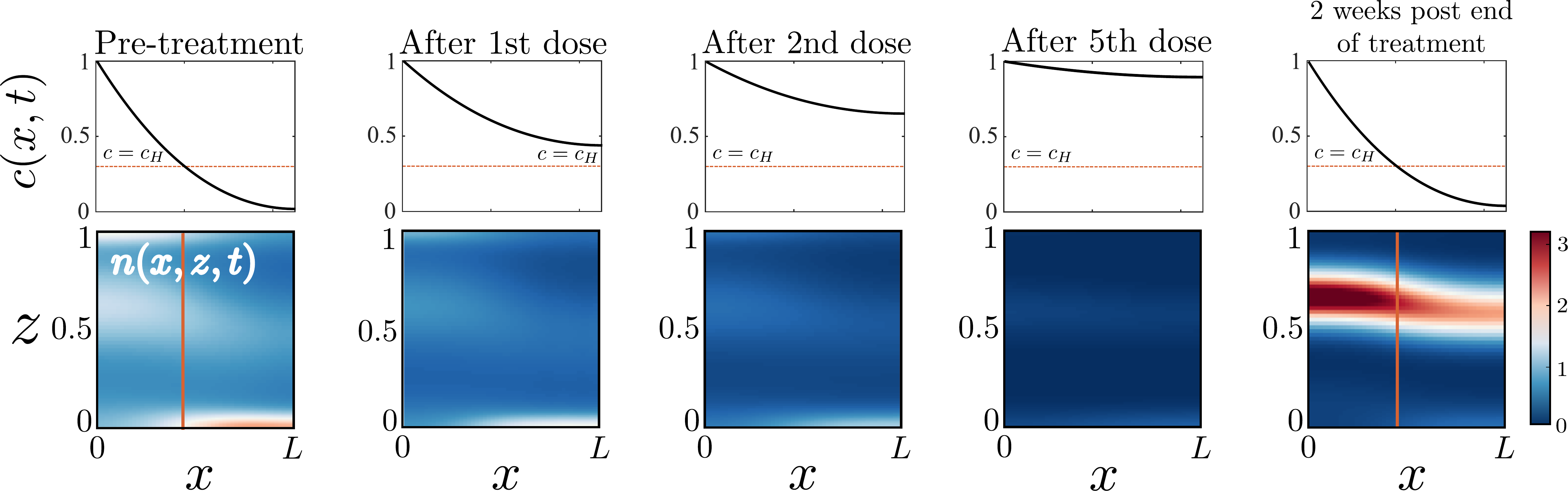}
	\caption{Evolution of the population $n(x,z,t)$ in the spatial and phenotypic dimensions following a cycle of fractionated radiotherapy 
	(5 $\times$ $2$ Gy). The parameter values are the same as those used in Figure \ref{space1} and the initial cell distribution is the same as the final distribution in Figure \ref{space1}. For the LQ-model we used parameter set $R3$ in Table \ref{rad_tab2}.}
	\label{space2}
\end{figure}

In the extinction scenario, or post administration of high RT doses, the number of cells in the population can become low and our continuum model may cease to be valid. In such conditions, stochastic effects which are neglected herein may become important. As in \cite{Ardaseva2020_2,Franz2013,Spill2015}, stochastic and mean field approaches may be combined with hybrid discrete-continuum techniques to account for small population effects and to study their impact on the probability of tumour extinction. 

In this paper, we considered only single dose and fractionated treatment protocols. In future work, we could investigate alternative strategies, such as \textit{adaptive therapeutic} protocols \cite{Gatenby2009} and/or multi-drug treatments, which have been proposed as an effective way to overcome radio-resistance. From this point of view, considerable efforts have been invested  in designing treatments that exploit features of CSCs, such as their
metabolic plasticity~\cite{frontiers}. 
Motivated by recent metabolically-structured models~\cite{Ardaseva2019,hodgkinson, Villa2019}, a natural extension of our model would be to include a 
``metabolic dimension'' in order to investigate the interplay between stemness, metabolic switching and
resistance. A biologically informed model that incorporates
metabolic and phenotypic effects, together with the tumour micro-environment
and vascular remodelling lies at the heart of a
mathematical program that would enable systematic comparison with
{\it in vivo} observations. The framework and results outlined in this work represent a first step towards achieving this long-term goal.

\appendix
\section{Spatial Model}
\label{AppendixA}
\noindent We outline here  the set up for the 1D simulations presented in Section \ref{conclusion}. As a full description of the spatial model goes beyond the scope of the present work, we focus on the main changes to~(\ref{mixedmodel})-(\ref{eq_ad}). 
We now view the oxygen concentration $c$ as a dependent variable, rather than a prescribed function. We suppose that oxygen is supplied to the region by blood vessels on the domain boundary $\partial \Omega_2$ (see Figure~\ref{schematicspatial}). Oxygen diffuses from the boundary into the tissue where it is consumed by the tumour cells at rates which depend on their phenotype and the local oxygen concentration. The evolution of the dimensionless cell density,  $n=n(\vec{x},z,t)$, is driven by a phenotypic flux of the same form as in Equation~(\ref{mixedmodel}) but a spatial flux is included to account for random motion in the spatial dimension. 

\begin{figure}[h!]
	\centering
	\includegraphics[width=0.85\textwidth]{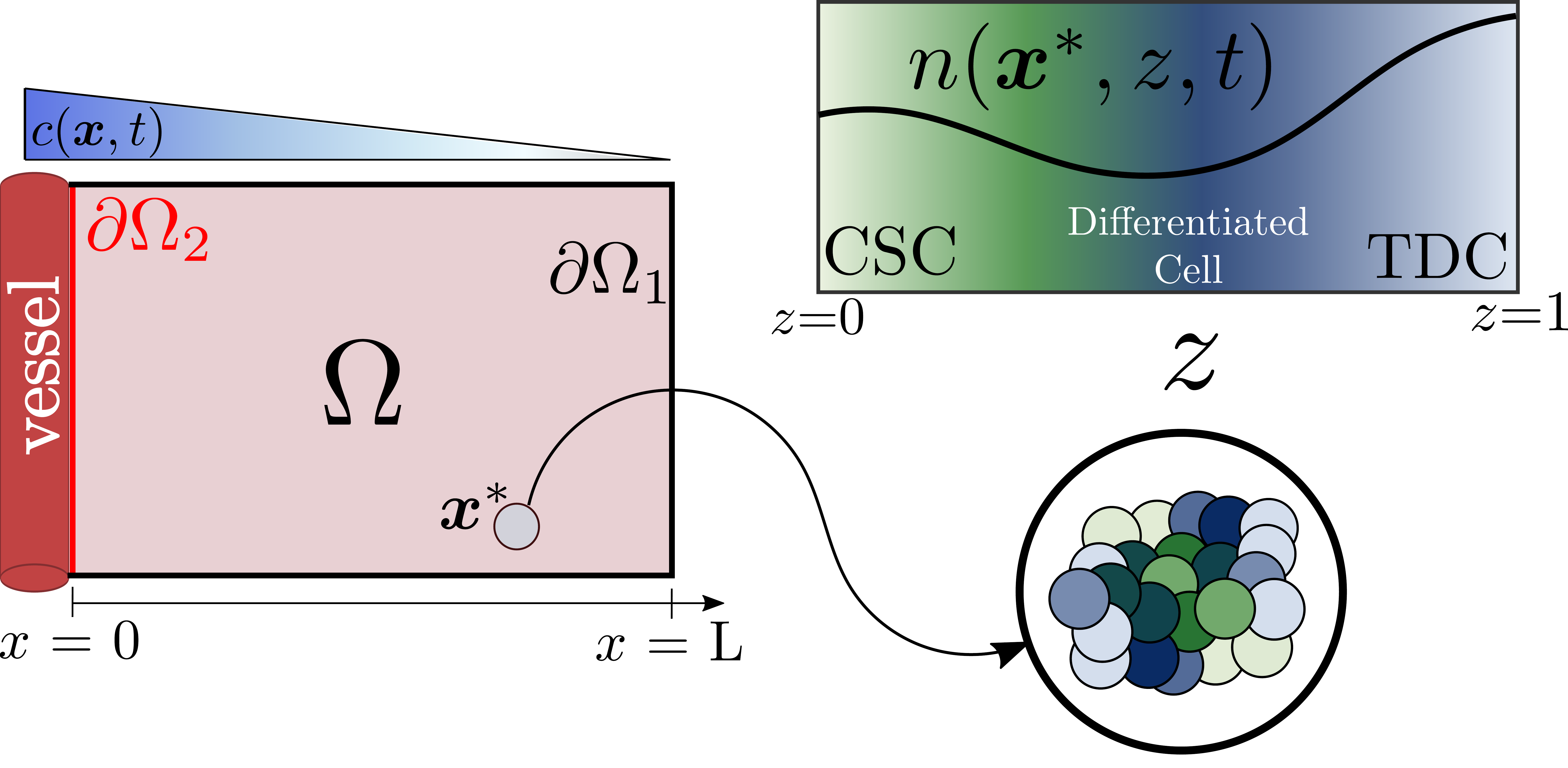}
	\caption{Schematic representation of the phenotypic and spatial model.}
	\label{schematicspatial}
\end{figure}

As shown in Figure \ref{schematicspatial}, we consider a fixed tissue slice where the oxygen supply (i.e. vasculature) is confined to one of the tissue boundaries. Given the assumed symmetry of the problem, we can consider a 1D Cartesian geometry with $x\in[0,L]$. The spatial model is defined by the following system of coupled PDEs:
\begin{subequations}
	\begin{align}
	\hspace{-10mm}
	  \frac{\partial n}{\partial t}=\underbrace{ D_N \frac{\partial^2 n}{\partial x^2}}_{spatial \hspace{1mm} flux}+ \frac{\partial }{\partial z} \left(\theta \frac{\partial n}{\partial z}-n v_z(z,c)\right)+ F(z,c,\phi,t) n,\\
	\frac{\partial c}{\partial t} = D_C\frac{\partial^2 n}{\partial x^2}-\Gamma(t,x,c),\label{ox}\\
	\theta \frac{\partial n}{\partial z}-n v_z = 0, \qquad z\in\left\{0,1\right\},\, x\in [0,L],\, t>0,\\
	\left.\frac{\partial n}{\partial x}\right|_{x=0}=\left.\frac{\partial n}{\partial x}\right|_{x=L}=0, \quad z\in(0,1),\, t>0,\\
	\left.\frac{\partial c}{\partial x}\right|_{x=L}=0, \quad c(0,t)=c_{\infty}, \quad t>0,\\[2pt]
	n(x,z,0)= n_0(x,z) \quad x\in[0,L],\, z\in(0,1),\\[2mm]
	c(x,0)= c_0(x) \quad  x\in[0,L],\\
	\phi(x,t)=\int_0^1 n(x,z,t) \, dz,\\
	\Gamma(t,x,c)=\int_0^1 \gamma(z,c) n(x,z,t) \, dz,\\
	\begin{aligned}
	F(z,c,\phi,t)= p(z,c)\left(1-\phi\right) -f(z) - \underbrace{g H(c_N-c)}_{necrosis}\\-\sum^{N}_{i} \log\left(\frac{1}{SF(z,c)}\right)\delta(t-t_i).
	\end{aligned}
	\end{align}\label{spatial_mod}
\end{subequations}
In Equation~(\ref{spatial_mod}), $D_N$ and $D_C$ are the assumed constant spatial diffusion coefficient for the cells and oxygen, respectively, while $\gamma$ denotes the rate at which cells of phenotype $z$ consume oxygen and $\Gamma$ the net rate of oxygen consumption at position $x$ and time $t$. The advection velocity $v_z$ is as defined by Eq.~(\ref{eq_ad}), while the fitness function $F$ is analogous to that defined in Section \ref{fit}, with an additional term to account for necrosis. The latter is assumed to occur at a constant rate $g \geq 0$, independent of cell phenotype, when the oxygen concentration falls below a threshold value, $c_N \geq 0$. We also modify the definition of the survival fraction $SF$ given in \S\ref{fit} (see Equation~(\ref{lq_model})) to account for the \textit{oxygen-enhancement ratio} (OER)~\cite{lewin,lewin2}. According to the \textit{oxygen fixation hypothesis} \cite{HallEricJ2012Rftr}, part of the biological damage induced by radiation is  indirect, being mediated by the presence of free radicals. Thus, when oxygen is limited, radio-sensitivity is accordingly reduced. Based on experiments, the range of oxygen concentrations at which this effect is relevant corresponds to more severe levels of hypoxia (where $c\sim 0.5\%$ or lower). We do not consider such situations for the well-mixed model, where we consider mild hypoxia. However, accounting for the OER will be important for the spatially extended model. Recall from Section \ref{hyp cond} that hypoxia is a favourable \textit{niche} for CSCs. Therefore the OER will endow them with additional protection from radiation. 
Denoting by $c^R_H$ the oxygen threshold at which the OER becomes active, we use the following functional form for the survival fraction when simulating the spatially-extended model: 
\begin{align}
SF(z,c)=\begin{cases}
\exp\left[-\alpha(z)d-\beta(z) d^2\right] \quad c>c^R_H\\[2mm]
\exp\left[-\myfrac[2pt]{\alpha(z)}{OER}d-\myfrac[2pt]{\beta(z)}{OER^2} d^2\right] \quad c<c^R_H.
\end{cases}\label{SF_space}
\end{align}
In Equation (\ref{SF_space}), $\alpha$ and $\beta$ are defined by~(\ref{radiosensitivity}).
We note that in the main text, we consider $c=1$ (normoxia) and $c=0.2$ (hypoxia), so that the OER does not impact cell responses to RT.  


For the well-mixed model, the oxygen concentration is typically maintained at a prescribed, constant value. By contrast, for the spatially extended model, we suppose that the tumour cells consume oxygen at a rate $\gamma$ which depends on their phenotype, $z$. As mentioned previously, stem cells are known to have a glycolytic metabolism and, thus, we assume that they consume less oxygen than cancer cells. Consequently, we consider $\gamma$ to be a monotonically increasing function of the phenotypic variable $z$ which asymptotes to its maximum value for $z>0.5$:
\begin{equation}
\gamma(z,c)= H(c-c_N)\left[\gamma_{max} -\frac{\gamma_{max}}{2} e^{-k_{\gamma} z}\right].\label{gamma}
\end{equation}   
In Equation (\ref{gamma}), $H=H(x)$ is the Heaviside function 
(i.e. $H(x)=1$ if $x>0$ and $H(x)=0$ if $x\leq 0$).
In order to continue their normal function, glycolytic cells consume oxygen, albeit at a lower rate. Motivated by results presented in \cite{consumption}, we assume that glycolytic CSCs consume oxygen at approximately half the rate of terminally differentiated cancer cells.

\subsection{Parameters}
\label{App_param}
\begin{table}[h!]
	\centering\footnotesize  
	\begin{tabular}{l c@{\hspace{0.5cm}} l@{\hspace{0.5cm}} l@{\hspace{0.5cm}} c c}
		\toprule[2pt]\addlinespace[2pt]
		&Parameter &  Value &  Units &  Reference & Label\\[2pt]
		\toprule\addlinespace[4pt]
		Phenotypic Diffusion&$\theta$ & $5\times 10^{-6}$& $hr^{-1}$ & -& \\[3pt]
		\hline\addlinespace[3pt]
		\multirow{3}{20mm}{\centering Advection velocity $v_z$ Eq~(\ref{eq_ad})}& $V_\pm$ & $\left\{2,4,8\right\}\times 10^{-4}$& $hr^{-1}$ & -& \\
		& $\xi_\pm$ & $\left\{0.05,0.1,0.5\right\}$& - & -& \\
		& $\omega_\pm$ & $\left\{1,2\right\}$& - & -& \\	
		\toprule\addlinespace[2pt]
		\multirow{9}{25mm}{\centering Fitness $F$ Eq~(\ref{eqnetprol})-(\ref{apoptosis})} &$p^{max}_0$ & 0.005& $hr^{-1}$&\cite{Sweeney1998} &\\[2pt]
		&$K_{H,0}$ & $0.05$ & -&-&\\[2pt]
		&$g_0$ &0.01&-&-&\\[2pt]
		&$p^{max}_1$ & 0.02& $hr^{-1}$&\cite{Sweeney1998}&\\[2pt]
		&$K_{H,1}$ & $0.3$ & -&-&\\[2pt]
		&$g_1$ &0.04&-&-&\\[2pt]
		&$d_f$ & $\left\{0.001,0.015\right\}$ &$hr^{-1}$ &-&\\[2pt]
		&$k_f$ & $10$ & -&-&\\[2pt]
		& $\Phi_{max}$ & $10^8$ & cell/cm$^3$&\cite{DelMonte2009}&\\[5pt]
		\multirow{4}{25mm}{\centering Survival Fraction $SF$ Eq~(\ref{radiosensitivity})/Eq~(\ref{SF_space})}& $\alpha_{min,max}$& Table \ref{rad_tab2}& Gy&\cite{Saga}&\\
		& $\beta_{min,max}$& Table \ref{rad_tab2}& Gy$^{-2}$&\cite{Saga}&\\[2pt]
		& $\xi_R$ & 0.2& - & -&\\[3pt]
		& OER & 3 &- & \cite{lewin2}& S\\[2pt]
		\toprule\addlinespace[2pt]
		\multirow{2}{30mm}{\centering Initial phenotypic distribution $n_0$}&$\phi_0$ & 0.4& $hr^{-1}$&-&\\[2pt]
		&$\sigma$& 0.1 & -&- & \\[2pt]
		\toprule[1.5pt]\addlinespace[3pt]
		\centering Spatial Diffusion &$D_N$ &$1.25\times 10^{-4}$& mm$^2$hr$^{-1}$&&S\\[2pt]
		\centering Domain Size & L & 0.45 & mm&-&S\\[3pt]
		\toprule		
		Oxygen Diffusion & $D_c$ & $6.84\times 10^{-1}$&mm$^2$hr$^{-1}$&-&S\\[3pt]
		\multirow{2}{25mm}{\centering Consumption $\gamma$ Eq~(\ref{gamma})}&$\gamma_{max}$ &$3.11\times 10^{-12}$&g(cell hr)$^{-1}$&\cite{Boag1970}&S\\
		&$k_\gamma$ & $10$ & -&-&S\\[3pt]	
		\multirow{3}{25mm}{\centering Oxygen thresholds}&$c_\infty$ & 1& -&\cite{Lewin2018}&S\\[2pt]
		&$c_H$ & 0.3&  -& \cite{lewin,ester2}&S\\[2pt]
		&$c_N$ & 0.0125 &  -&\cite{ester2}&S\\[2pt]
		\bottomrule[2pt]\addlinespace[2mm]
	\end{tabular}
	\caption{List of the parameters values in model~(\ref{mixedmodel})-(\ref{eq_ad}) and/or its spatial extension~(\ref{spatial_mod})-(\ref{gamma}). Where the parameters are free, we list the set of values considered in the paper. We further label with (S) those parameter that are only present in the spatial model.}
	\label{param_set}
\end{table}  
The model contains a large number of parameters, most of which will vary in value between tumours and patients. The main focus of this work is to study the role played by phenotypic advection (as it interacts with cell proliferation and apoptosis, as well as competition mechanisms). On this basis, we decided to perform a parameter sweep for parameters associated with the advection velocity, while holding all other model parameters fixed at values previously reported in the literature, where such values exist. 
The main challenge is to identify the phenotypically dependent parameters, such as the growth rate in Equation (\ref{pO2}). As most data reported in the literature refer to processes, such as cell proliferation, at the population/cell colony and do not account for phenotypic variation, it was difficult to estimate parameters that characterise the phenotypic variation in these processes. 

We based our estimates of the proliferation rate on the doubling times reported by \cite{Sweeney1998} for two breast cancer cell lines, MCF-7 and BT-549. The former belong to the class of \textit{laminal}-like cells which are characterised by low stemness levels \cite{Ricardo2011} and high proliferation rates (doubling time $1.8$ days, i.e., growth rate $0.016$ hr$^{-1}$). On the other hand, BT-549 belong to the class of \textit{triple-negative} cells whose population is dominated by highly aggressive but slowly proliferating stem-like cells \cite{Ricardo2011} (doubling time $3.7$ days \cite{Sweeney1998}, i.e., growth rate $0.008$ hr$^-1$). Given the variability in the phenotypic distribution of these cell lines, we have rounded the values to those presented in Table \ref{param_set}.

As is common in the literature, we have chosen the source of oxygen (i.e. $c_\infty$) to be at a pressure of $100$ mmHg \cite{Lewin2018}. Given that atmospheric pressure corresponds to $760$ mmHg with  $21\%$ O$_2$, the oxygen tension corresponding to $c_\infty$ is about $8\%$\, O$_2$. The hypoxic and necrotic thresholds ($c_H$ and $c_N$) are then equivalent to oxygen pressures of $2.5\% \, O_2$ and $0.1\% \, O_2$ in line with \cite{ester,ester2}. These values can be converted into oxygen concentrations by use of Henry's law \cite{Lewin2018}, see Table \ref{param_set}.

\section{Linear Stability Analysis.}
\label{AppendixB}

\noindent As mentioned in Section \ref{LSA}, in order to compute the largest eigenvalue $\lambda_0$ numerically we rely on the \textit{Chebfun} package for MATLAB \cite{Driscoll2014}. In order to solve the eigenvalue problem we first 
make the following substitution in Equation~(\ref{neu}):
\begin{equation}
\delta{n}=y(z) \exp\left[\frac{1}{2\theta}\int^z v_z(s) \,ds\right].
\end{equation} 
It is straightforward to show that the function $y$ satisfies the following eigenvalue problem:
\begin{eqnarray}
\begin{aligned}
\theta \frac{d^2 y}{d z^2}+q(z;c,\bar{\phi})y-p(z;c)\bar{n}\int_0^1 y(s)k(s,z) \,ds=\lambda y
\end{aligned}\label{eq:ApBeig}\\
\mbox{where} \quad  q(z;c,\bar{\phi})=p(z;c)(1-\bar{\phi})-f(z)-\frac{1}{2}\frac{d v_z}{d z}-\frac{1}{4}\frac{v^2_z}{\theta},\label{eq:q}\\
\mbox{and} \quad k(s,z)=\exp\left[\frac{1}{2\theta}\int_s^z v_z(p) \,dp\right],\\
\frac{d y}{dz}=0\quad \mbox{at} \; z=0,1.
\end{eqnarray}

Note that the integral in Equation~(\ref{eq:ApBeig})is of the form of a Fredholm integral which is built in the \textit{Chebfun} package \cite{Driscoll2014}. The above differential equation for $\bar{n}=0$
corresponds to the standard form of a
Schr{\"o}dinger-type, \textit{Sturm-Liouville} eigenvalue problem, where  the \textit{Hermiticity} of the differential operator
implies the existence of purely real eigenvalues.
In the case of the null steady state, the eigenvalue problem simplifies to:
\begin{subequations}
\begin{align}
\begin{aligned}
\theta \frac{d^2 y}{d z^2}+\tilde{q}(z;c)y=\lambda y
\end{aligned}
\label{neu_cond0}
\\
\frac{d y}{dz}=0 \quad \mbox{at} \; z=0,1. \label{neu_cond}
\end{align}\label{eq:eigenSL}%
\end{subequations}
where $\tilde{q}(z;c)=q(z;c,0)$ as defined in~(\ref{eq:q}). 
Therefore, applying the \textit{Sturm Oscillation Theorem}~\cite{coddingtonlevinson}
to~(\ref{eq:eigenSL})
we deduce that $\sigma(\mathcal{M})$ has infinitely many simple and real eigenvalues which can be enumerated in strictly decreasing order:
\begin{equation}
\lambda_0>\lambda_1>\ldots, \, \lim\limits_{n\rightarrow \infty} \lambda_n=-\infty.
\end{equation}
We conclude that the trivial steady state is either a stable node (if $\lambda_0<0$) or a saddle (if $\lambda_0>0$). 

In addition to numerical estimation of $\lambda_0$, 
analytical approximations and bounds can be obtained via the so-called  \textit{Rayleigh quotient} $R(y)$. 
If we multiply Equation~(\ref{neu_cond0}) by $y$ and integrate by parts, then we obtain:
\begin{subequations}
\begin{align}
R(y)=\frac{1}{\|y\|^2_{L^2}} \: 
\int_0^1 \left \{ \theta y\frac{d^2 y}{d z^2}+\tilde{q}(z;c)y^2 \right \}  dz,
\end{align}
where $y$ also satisfies the Neumann boundary conditions \ref{neu_cond}. 
We deduce that the following therefore holds:
\begin{align}
\lambda_0 =\sup_{y\in E, \ y\neq 0} R(y)
\end{align}
\end{subequations}
where $E$ is the set of twice differentiable functions that satisfy condition~(\ref{neu_cond}). 
\begin{lemma}
	If the function $\tilde{q}$ is such that $\max\limits_{z\in(0,1)} \tilde{q}<0$ then the null steady state is stable.\label{lemma1}
\end{lemma}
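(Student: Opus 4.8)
The plan is to use the variational characterisation $\lambda_0 = \sup_{y\in E,\, y\neq 0} R(y)$ established just above, and to show that the hypothesis $\bar q := \max_{z\in(0,1)}\tilde q(z;c) < 0$ forces $\lambda_0 \le \bar q < 0$. Since it was already noted that the trivial steady state is a stable node when $\lambda_0 < 0$, this immediately gives the result.

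First I would integrate the diffusive term in $R(y)$ by parts. Every admissible $y \in E$ satisfies the Neumann conditions $y'(0) = y'(1) = 0$ (this is precisely why the substitution $\delta n = y\,\exp[\tfrac{1}{2\theta}\int^z v_z(s)\,ds]$ was introduced: it turns the original mixed boundary conditions into Neumann ones), so the boundary contributions vanish and $\int_0^1 \theta\, y\, y''\,dz = -\theta \int_0^1 (y')^2\,dz \le 0$. Hence for any $y \neq 0$,
\[
R(y) = \frac{-\theta \int_0^1 (y')^2\,dz + \int_0^1 \tilde q(z;c)\,y^2\,dz}{\|y\|_{L^2}^2} \;\le\; \frac{\int_0^1 \tilde q(z;c)\,y^2\,dz}{\|y\|_{L^2}^2}.
\]

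Next I would bound the numerator pointwise: since $\tilde q(z;c) \le \bar q$ for all $z \in (0,1)$ and $y^2 \ge 0$, we have $\int_0^1 \tilde q(z;c)\,y^2\,dz \le \bar q\,\|y\|_{L^2}^2$, and therefore $R(y) \le \bar q$ for every $y \neq 0$. Taking the supremum over $E$ gives $\lambda_0 \le \bar q < 0$. By the Sturm--Liouville structure recorded above, $\sigma(\mathcal M)$ (for $\bar n \equiv 0$) consists only of the real, simple eigenvalues $\lambda_0 > \lambda_1 > \cdots$, so $\lambda_0 < 0$ forces every eigenvalue to be negative; hence $\sigma(\mathcal M) \cap \{\lambda \in \mathbb{C} : \Re(\lambda) > 0\} = \emptyset$ and, $\lambda_0$ being simple, the null state is a stable node.

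There is no serious obstacle here — the argument reduces to a single elementary estimate once the Rayleigh quotient and the Neumann conditions are in hand. The only points needing a little care are confirming that the integration-by-parts boundary terms genuinely vanish, and observing that the conclusion $\lambda_0 < 0$ is strict (it follows from $\lambda_0 \le \bar q$ together with $\bar q < 0$, so one need not check whether the supremum defining $\lambda_0$ is attained).
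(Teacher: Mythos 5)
Your proposal is correct and follows essentially the same route as the paper: integrate the diffusive term in the Rayleigh quotient by parts, drop the nonpositive $-\theta\int_0^1 (y')^2\,dz$ term, and bound $R(y)$ by the weighted average of $\tilde q$, which is negative under the hypothesis. Your version is marginally tighter in that it extracts the explicit uniform bound $\lambda_0\le \bar q<0$ (making the strict negativity of the supremum immediate), but the argument is the same one the paper gives.
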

\begin{proof}
Consider the numerator of the quotient defining $R(y)$:
\begin{subequations}
\begin{align}
\begin{aligned}
\int_0^1\left \{ \theta y\frac{d^2 y}{d z^2}+\tilde{q}(z;c)y^2 \right \} dz = - \cancelto{0}{\left[y\frac{d y}{d z}\right]_{0}^1} + \int_0^1\tilde{q}(z;c)y^2-\theta \left(\frac{d y}{d z}\right)^2 dz \\
\leq \int_0^1\tilde{q}(z;c)y^2 dz. \qquad \qquad
\end{aligned}
\end{align}
We deduce that
\begin{align}
R(y) \leq \int_0^1 \tilde{q}(z;c) \frac{y^2}{\|y\|_2^2} dz=R_{up}(y).
\end{align}
It is therefore apparent that if the function $q$ is negative throughout the domain, then $R_{up}$ is negative for any choice of $y\in E$. In such a case, we have that: 
\begin{align}
\lambda_0=\sup_{y\in E, \ y\neq 0} R(y)<\sup_{y\in E, \ y\neq 0} R_{up}(y)<0.
\end{align}
\end{subequations} 
\end{proof}

We now show that under normoxia, $q < 0$ if the death rate is high and 
the magnitude of the phenotypic advection velocity is sufficiently large. 
\begin{lemma}
	If the model proliferation rate, apoptosis rate and phenotypic 
	advection velocity and diffusion coefficient are chosen such that: 
	\begin{equation}
	 \int_0^1 \left \{ p(z,c)-f(z)- \frac{v_z^2}{4\theta} \right \} dz>0,\label{cond_ins}
	\end{equation}
	then the trivial steady state is unstable. \label{lemma2}
\end{lemma}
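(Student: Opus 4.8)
The plan is to use the variational characterisation $\lambda_0 = \sup_{y\in E,\, y\neq 0} R(y)$ established just above, together with the fact that the trivial steady state is unstable precisely when $\lambda_0>0$ (it is then a saddle, as noted after the Sturm oscillation argument, and since the eigenvalues here are real, ``$\Re(\lambda_0)>0$'' is simply ``$\lambda_0>0$''). Because $\lambda_0$ is a supremum, it suffices to exhibit a \emph{single} admissible test function $y\in E$ for which $R(y)>0$; the hypothesis~(\ref{cond_ins}) is then exactly what makes the most natural choice work.

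First I would take the constant function $y\equiv 1$. It is smooth and satisfies the Neumann conditions~(\ref{neu_cond}), so $y\in E$, and $\|y\|_{L^2}^2=1$ while $d^2y/dz^2\equiv 0$. Hence $R(1)=\int_0^1 \tilde q(z;c)\,dz$, with $\tilde q = q(\,\cdot\,;c,0)= p(z;c)-f(z)-\tfrac12\,dv_z/dz - v_z^2/(4\theta)$ from~(\ref{eq:q}).

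Next I would dispose of the advection-derivative term: $\int_0^1 \tfrac12\,(dv_z/dz)\,dz = \tfrac12\big(v_z(1)-v_z(0)\big)$, and the functional forms~(\ref{eq_ad}) vanish at both endpoints (each $\tanh(z^{\omega_+}/\xi_+)$, $\tanh((1-z)/\xi_+)$, and the analogous factors for $v_z^-$, is zero at the relevant boundary), so this contribution is zero. Therefore $R(1)=\int_0^1\{p(z,c)-f(z)-v_z^2/(4\theta)\}\,dz$, which is precisely the left-hand side of~(\ref{cond_ins}). Under the hypothesis this is strictly positive, so $\lambda_0\geq R(1)>0$, and the trivial steady state is unstable.

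There is no serious analytic obstacle here --- the argument reduces to a one-line test-function estimate. The only point needing a moment's care is the vanishing of the boundary term $\tfrac12(v_z(1)-v_z(0))$; if one wished to avoid invoking the specific $\tanh$ structure one could instead integrate $-\tfrac12 v_z\,dy/dz$ by parts against a non-constant test function, but for the velocity profiles used throughout the paper the constant test function already suffices and yields the stated sufficient condition cleanly.
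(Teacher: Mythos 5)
Your proof is correct and is essentially identical to the paper's: both take the constant test function $y_0\equiv 1$ in the Rayleigh quotient and conclude $\lambda_0\geq R(y_0)>0$ from hypothesis~(\ref{cond_ins}). You are slightly more careful than the paper in explicitly noting that $\int_0^1 \tfrac12 (dv_z/dz)\,dz=\tfrac12(v_z(1)-v_z(0))=0$ for the velocity profiles~(\ref{eq_ad}), a step the paper's one-line computation of $R(y_0)$ silently absorbs.
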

\begin{proof}
	Consider $y_0\equiv 1$, then $y_0\in V$ and $\lambda_0=R(y_0)$ where:
	
	\begin{equation*}
	R(y_0)=\int_0^1 \left \{ p(z,c)-f(z)- \frac{v_z^2}{4\theta} \right \} dz>0.
	\end{equation*}
	
	Consequently, $\sup_{y\in V} R(y)\geq R(y_0)>0$, and our steady state is unstable.
	\end{proof}
\begin{remark}
	Note that for~(\ref{cond_ins}) to hold we require $\int_0^1 (p-f) dz>0$ so that cell proliferation dominates apoptosis.
	Based on the functional form defined in Section~\ref{fit}, we have that:
	\begin{equation}
	\begin{aligned}
	\hspace{-8mm}I(c;d_f)&=\int_0^1 (p-f ) dz\\
	&= \sqrt{g_1} p_1(c) \left[\mathcal{Z}\left(\frac{z-0.55}{\sqrt{g_1}}\right)
	+\sqrt{g_0} p_0(c) \mathcal{Z}\left(\frac{z}{\sqrt{g_0}}\right)+\frac{d_f}{k_f}e^{-k_f z}\right]_{z=0}^{z=1}\\
	&\sim \frac{\sqrt{g_0} p_0(c)}{2}+\sqrt{g_1} p_1(c)-\frac{d_f}{k_f}
	\end{aligned}
	\end{equation} 
    where $\mathcal{Z}$ denotes the cumulative distribution function for the normal distribution. We note that $I(1;d_f)>0$ while $I(0.2;d_f)<0$ for all values of the parameters listed in Table \ref{par_fitness}. We conclude that under normoxia there is a threshold $\mathcal{V}_+(\xi_+,\omega)$ such that the system is unstable for all choices of $V_+<\mathcal{V}_+(\xi_+,\omega)$:
	\begin{subequations}
		\begin{align}
	\mathcal{V}_+=\sqrt{\frac{2I(1;d_f)\theta}{I_v(\xi_+,\omega_+)}},\\
	\mbox{where} \; I_v(\xi_+,\omega_+)=\int_0^1 \left(\frac{1}{V^*_+}\tanh\left(\myfrac[3pt]{z^{\omega_+}}{\xi_+}\right)\tanh\left(\myfrac[2pt]{(1-z)}{\xi_+}\right)\right)^2 dz.
	\end{align}
	We note also that higher values of $\theta$ favour instability of the trivial solution as $\mathcal{V}_+$ increases with $\theta$. By inspecting Figure \ref{vel_prof}, we note qualitatively that $I_v$ is expected to decrease for increasing values of $\xi_+$ and $\omega_+$. 
	\end{subequations} 
\end{remark}

\begin{figure}[h!]
	\begin{subfigure}{0.49\textwidth}
		\centering
		\includegraphics[width=0.8\textwidth]{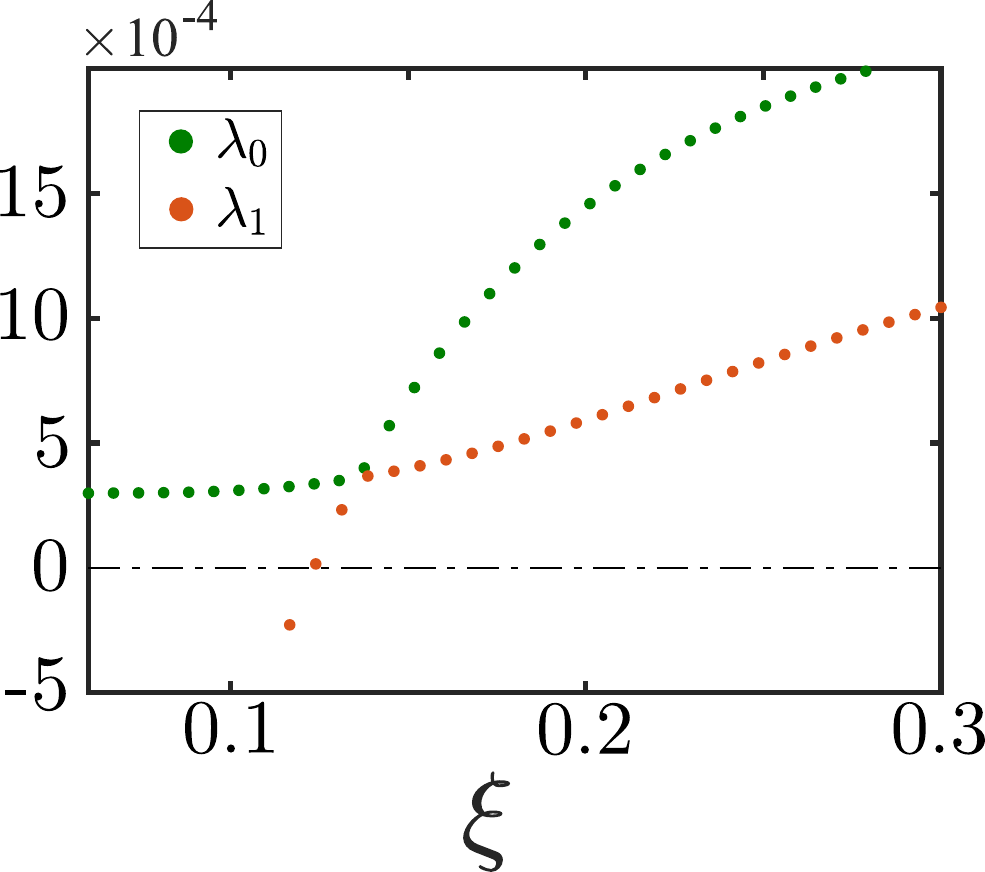}
		\caption{}
		\label{lsa1_a}
	\end{subfigure}	
	\begin{subfigure}{0.49\textwidth}
		\centering
		\includegraphics[width=0.8\textwidth]{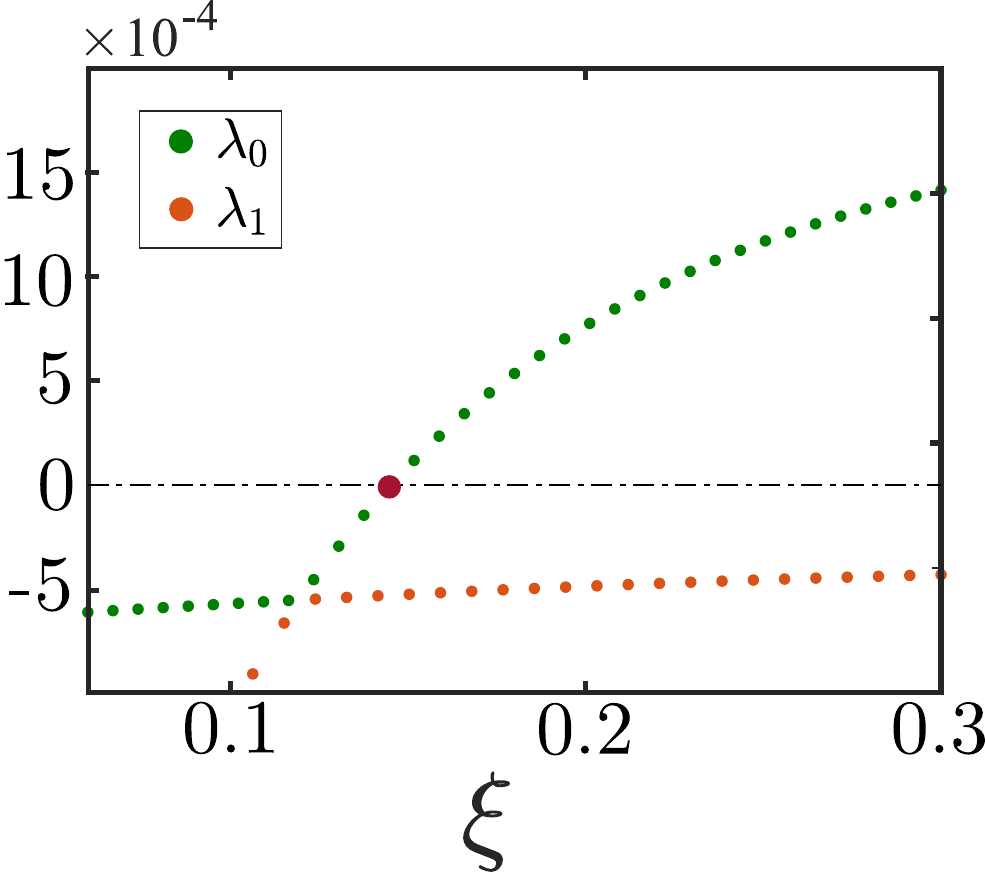}
		\caption{}
		\label{lsa1_b}
	\end{subfigure}	
	\caption{Linear stability analysis 
	of the trivial solution: plot of the two largest
		eigenvalues $\lambda_0(\xi)$ and $\lambda_1(\xi)$ for (a) $V_+=6\times 10^{-4}$, $\omega_+=1$ and $d_f=0.001$, and (b) $V_+=8\times 10^{-4}$, $\omega_+=1$, $d_f=0.001$. In (a), $\lambda_0 > 0$ for all values of $\xi$. In (b), $\lambda_0$ changes sign as $\xi$ increases and we can 
		identify a critical value of $\xi$ at which 
		the trivial solution loses stability, favouring the emergence of a nontrivial, phenotypic cell distribution.}
	\label{lsa1}
\end{figure}

To analyse other regions of parameter space, where neither of the sufficient conditions holds, we rely on numerical estimates of the eigenvalue $\lambda_0$. As shown in Figure \ref{lsa1}, and as expected based on the above findings, when the magnitude of the velocity $V_+$ is small,
$\lambda_0>0$ for all $\xi$ and the trivial solution is unstable.
By contrast, as the magnitude of the advection velocity
increases, its steepness, $\xi$, determines
the stability of the trivial solution. Using this estimate, we can identify the region of stability of the trivial steady state (see Figure \ref{xi_crit} in Section \ref{LSA}).
We remark that the boundary between the regions of stability is non-smooth. This is because $\lambda_0 = \lambda_0(\xi)$ plateaus as $\xi\ll1$ (see Figure \ref{lsa1}). By computing the second largest eigenvalue, $\lambda_1(\xi)$, we observe that the sharp change in the profile of $\lambda_0$ as $\xi$ decreases
occurs where $|\lambda_0-\lambda_1|$ attains its minimum value. 
It is possible to show that the two eigenvalues do not cross, as expected by the Sturm Oscillation theorem. A similar phenomenon occurs in quantum physics~\cite{cohent} where it is known as \textit{avoided crossing}.

Finally, we consider the stability of the trivial solution in an hypoxic environment. We confirm the numerical simulations from \S\ref{hyp cond} by showing that, under hypoxia, the trivial solution is always unstable.
\begin{lemma}
	Under hypoxia (i.e. when $c=0.2$), and for the parameter values listed in Table \ref{param_set}, the trivial steady state is always unstable. \label{lemma3}
\end{lemma}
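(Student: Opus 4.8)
The plan is to bound the principal eigenvalue $\lambda_0$ from below by evaluating the Rayleigh quotient of~\ref{AppendixB} at a single, well-chosen trial function. At the null steady state the eigenproblem has the self-adjoint form~(\ref{eq:eigenSL}), with potential $\tilde q(z;c)=p(z;c)-f(z)-\tfrac12 v_z'(z)-\tfrac{1}{4\theta}v_z(z)^2$ and $\lambda_0=\sup_{y\in E}R(y)$. Under hypoxia ($c=0.2<c_H$) I take the sharp-interface limit $\epsilon\to0^{+}$, so that $v_z(\cdot\,;0.2)=-v_z^{-}$ (for small positive $\epsilon$ this differs only in an arbitrarily thin layer near $z=1$, which is irrelevant below). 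Set $V(z):=\int_0^z v_z(s)\,ds$ and take the ``drift ground state''
\[
y_0(z)=\exp\!\Big[\tfrac{1}{2\theta}\,V(z)\Big].
\]
Since $v_z(0)=v_z(1)=0$ (both $v_z^{+}$ and $v_z^{-}$ vanish at the endpoints), $y_0$ is smooth with $y_0'(0)=y_0'(1)=0$, so $y_0\in E$; and a one-line computation gives $\theta y_0''=\big(\tfrac12 v_z'+\tfrac{1}{4\theta}v_z^2\big)y_0$, i.e. $y_0$ is exactly annihilated by the drift part of the operator in~(\ref{eq:eigenSL}).

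Substituting $y_0$ into $R$ and using $\theta y_0 y_0''=\big(\tfrac12 v_z'+\tfrac{1}{4\theta}v_z^2\big)y_0^2$, the drift contribution to $\tilde q\,y_0^2$ cancels, leaving
\[
\lambda_0\;\ge\;R(y_0)\;=\;\frac{\displaystyle\int_0^1\big(p(z;0.2)-f(z)\big)\,e^{V(z)/\theta}\,dz}{\displaystyle\int_0^1 e^{V(z)/\theta}\,dz}.
\]
So the statement reduces to showing that the net growth rate $p-f$ has strictly positive average against the probability density $\propto e^{V/\theta}$. Under hypoxia $V$ is non-increasing with $V(0)=0$ and strictly decreasing on $(0,1)$ (because $v_z^{-}>0$ there), so this density is peaked at $z=0$ and, since $\theta=5\times10^{-6}$ is tiny, concentrates almost all its mass in a narrow boundary layer near $z=0$ --- precisely where stem-cell proliferation is active. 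This is also why the constant trial function $y\equiv1$ is not enough: it gives $R(1)=\int_0^1(p-f)\,dz=I(0.2;d_f)<0$ by the Remark following Lemma~\ref{lemma2}, so instability here genuinely rests on the drift-induced concentration of $e^{V/\theta}$.

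The remaining work, and the main obstacle, is to make that concentration quantitative for every admissible parameter choice ($V_-\in\{2,4,8\}\times10^{-4}$, $\xi_-\in\{0.05,0.1,0.5\}$, $\omega_-\in\{1,2\}$, $d_f\in\{0.001,0.015\}$). I would proceed in three steps. First, from the explicit forms~(\ref{eqnetprol})--(\ref{apoptosis}) and Table~\ref{param_set}, verify that $p(z;0.2)-f(z)>0$ on $[0,z_1]$ with $z_1=0.6$ --- the proliferation bumps centred at $z=0$ and $z=0.55$ keep $p$ above $f$, while $f(z)=d_f e^{-k_f(1-z)}$ is exponentially small away from $z=1$ --- recording a lower bound $\underline h:=\min_{[0,z_1]}(p-f)>0$ (of order $10^{-4}$) and the crude bound $p-f\ge -d_f$ on $[z_1,1]$. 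Second, split the numerator at $z_1$; monotonicity of $e^{V/\theta}$ then yields $R(y_0)>0$ provided $\int_0^{z_1}e^{V/\theta}\big/\int_{z_1}^1 e^{V/\theta}>d_f/\underline h$. Third, bound this ratio below by $\tfrac{\delta}{1-z_1}\exp\!\big[\tfrac{1}{\theta}\int_{\delta}^{z_1}v_z^{-}(s)\,ds\big]$ for a convenient $\delta$ (e.g. $\delta=0.15$) and check that $\int_{\delta}^{z_1}v_z^{-}$ exceeds $\theta\ln\!\big(\tfrac{(1-z_1)d_f}{\delta\,\underline h}\big)$ for all the parameter sets above; the inequality is slackest for the flattest, slowest drift ($\xi_-=0.5$, $\omega_-=2$, $V_-=2\times10^{-4}$), where the two sides differ by only a modest factor, but it still holds. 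With $R(y_0)>0$ established we get $\lambda_0>0$, so by the dichotomy noted after~(\ref{eq:eigenSL}) the trivial steady state is a saddle, hence unstable, consistent with the simulations of Section~\ref{hyp cond}.
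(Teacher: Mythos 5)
Your proposal is correct, and it takes a genuinely different route from the paper. The paper's proof also works with the Rayleigh quotient, but its trial function is a normalised Gaussian plus a small parabolic correction $Az^2$ (needed to enforce the Neumann condition at $z=1$), with the width $\kappa=\sqrt{2\theta/|v_z'(0)|}$ chosen so that $m(z)=\tfrac{v_z}{2\theta}+\tfrac{z}{\kappa^2}$ has vanishing derivative at the origin; the drift penalty $\theta\int_0^1 m^2y_0^2\,dz$ is then controlled by splitting $[0,1]$ into three subintervals and the resulting lower bound $I_0^{low}$ is checked numerically over the parameter grid. Your trial function $y_0=\exp[V/(2\theta)]$ is instead the \emph{exact} zero mode of the drift part of the transformed operator: it satisfies the Neumann conditions with no correction term (since $v_z^-$ vanishes at both endpoints), the terms $-\tfrac12 v_z'-\tfrac{1}{4\theta}v_z^2$ in $\tilde q$ cancel identically against $\theta y_0 y_0''$, and the quotient collapses to a weighted average of $p-f$ against the Gibbs-type density $\propto e^{V/\theta}$. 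Near $z=0$ this density is approximately the very Gaussian the paper uses (same $\kappa$), so the two arguments are concentrating mass in the same boundary layer; what yours buys is the disappearance of the $\mathcal{O}(A)$ bookkeeping and of the three-way splitting of the $m^2$ integral, while what the paper's Gaussian buys is closed-form evaluation of $\int_0^1(p-f)y_0^2\,dz$ in terms of normal CDFs. Both proofs, yours included, end with a finite quantitative verification over the admissible parameter sets; your split at $z_1=0.6$ with the crude bounds $p-f\geq\underline h>0$ on $[0,z_1]$ and $p-f\geq -d_f$ on $[z_1,1]$ is sound, and the required ratio inequality holds with a very large margin (the tail mass $\int_{z_1}^1 e^{V/\theta}dz$ is smaller than $e^{-15}$ even for the flattest drift, whereas the head mass is of order $10^{-2}$), so the conclusion $\lambda_0\geq R(y_0)>0$ follows. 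Your observation that the constant trial function of Lemma~\ref{lemma2} fails here because $I(0.2;d_f)<0$ correctly identifies why the drift-induced concentration is essential.
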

\begin{proof}
Let us consider as a trial function:
\begin{equation}
y=\frac{1}{(\pi \kappa^2)^{1/4}}\exp\left(-\frac{z^2}{2\kappa^{2}}\right) +Az^2,
\end{equation}
where a small parabolic correction is added to the standard Gaussian, the constant $A$ being chosen to ensure
that the boundary condition at $z=1$ is satisfied:
\begin{equation}
 A=\frac{e^{-\frac{1}{2\kappa^2}}}{2\pi^{1/4}\kappa^{5/2}};
\end{equation}
the derivative $y'$ at $z=0$ vanishes, by construction.  We now want to show that the Rayleigh quotient is positive for such a choice of the test function $y$ which implies that the trivial steady state is unstable.

Given that the denominator of $R(y)$ is always positive, its sign will be determined by the numerator $R_n(y)$ that is:
\begin{equation}
R_n(y)=\int_0^1 \left(p-f-\frac{v_z^2}{4\theta}\right) y^2 dz -\cancelto{0}{\left.\frac{v_z y^2}{2}\right|_0^1} +\int_0^1 v_z y\frac{d y}{d z} - \theta \left(\frac{d y}{d z}\right)^2 dz
\end{equation}
Computing the derivative of $y$ and denoting the Gaussian by $y_0$, we obtain:
\begin{subequations}
\begin{align}
    y^2= y_0^2 +2Az^2y_0+A^2z^4,\\
    y'^2= \frac{z^2}{\kappa^4} y_0^2 -\frac{4Az^2}{\kappa^2}y_0+4z^2A^2,\\
    yy'= -\frac{z}{\kappa^2}y_0^2+Az\left(2-\frac{z^2}{\kappa^2}\right)y_0+2A^2z^3.
\end{align}
Recalling that the  constant $A$ is exponentially small in $\kappa$ while $y_0$ grows only as a power law of $\kappa^{-1}$, the terms multiplied by $A$ will be negligible and the sign of $R_{n}(y)$ will be determined also by the leading term:
\begin{align}
R_n(y)= I_0 + \mathcal{O}(A),\label{eq:Rnexp}\\
I_0=\int_0^1 \left(p-f\right)y_0^2 dz-\theta\int_0^1 m^2(z) y_0^2 dz,\label{eq:I0first}\\
\quad{where} \quad m(z)=\frac{v_z}{2\theta}+ \frac{ z}{\kappa^2}.
\end{align}
\end{subequations}
Proving instability therefore reduces to show that $I_0$ is positive for the range of parameters and functional forms considered in hypoxic condition. We do so finding a lower bound on the value on $I_0$, exploiting the quick decay of the function $y_0$, whose mass is concentrated in a neighborhood of $z=0$. Given that $p(0)-f(0)>0$ and $m(0)=0$, provided that $m$ does not grow too fast near $z=0$, we can intuitively see that the major contribution to the integral $I_0$ will be positive. We will now expand this intuitive argument with a more rigorous calculation. 

We first focus on $I_0^{(1)}=\int_0^1(p-f)y_0^2 dz$, the contribution in~(\ref{eq:I0first}) due to cell proliferation.
We can compute this integral exactly as the integrand comprises products of exponentials, that can be re-written as the integral of Gaussian distribution:
\begin{subequations}
\begin{align}
  I_0^{(1)}= &\left[\frac{p_1(c)\sqrt{2}\zeta_1 }{\kappa}e^{-\frac{0.55^2}{g_1}+\frac{c_1^2}{2\zeta_1^2}}\mathcal{Z}\left(\frac{z-c_1}{\zeta_1}\right)\right.\\
 &\left.+\frac{p_0(c)\sqrt{2}\zeta_0}{\kappa}\mathcal{Z}\left(\frac{z}{\zeta_0}\right)-d_f e^{-k_f+\frac{c_f^2}{\kappa^2}}\mathcal{Z}\left(\frac{\sqrt{2}(z-c_f)}{\kappa}\right)\right]_{0}^1,
 \end{align}\label{eq:I_0(1)}%
 \end{subequations}
 where $2\zeta_{0,1}^2=(\kappa^2 g_{0,1})/ (g_{0,1}+\kappa^2)$, $c_1=0.55(2\zeta^2_1)/g_1$ and $c_f=k_f\kappa^2/2$, while $\mathcal{Z}$ is again the normal cumulative distribution function as in Lemma \ref{lemma2}.

We now focus on the term in~(\ref{eq:I0low}) which depends on $m$. In this case the integral can not be computed exactly and we will therefore find a lower bound for its contribution instead. This is achieved by decomposing the full domain $[0,1]$ into two three sub-domains. This will allow us to balance the rapid growth of the function $m$ with the quicker decay of $y_0$ away from $z=0$:

\begin{align}
    \int_0^1 m^2y^2_0 dz  =\int_0^{z_0\kappa} m^2y^2_0dz + \int_{z_0\kappa}^{z_1\kappa}m^2 y_0^2dz+\int_{z_1\kappa}^1 m^2y_0^2 dz.\label{eq:mint1}
\end{align}
where $z_{0,1}$ are positive constants such that $0 < z_0 < z_1 < \kappa^{-1}$. Note that we have the freedom of choosing their values with the aim of making the quantity in~(\ref{eq:mint1}) as small as possible. 
It is straightforward to see that $m$ attains its maximum value at $z=1$ as both $v_z$ and $z/\kappa^2$ attain maxima there. We now choose the value of $\kappa$ so that the derivative of $m$ at $z=0$ vanishes:
\begin{subequations}
\begin{align}
m'(z)= \left(\frac{v'_z(z)}{2\theta}+\frac{1}{\kappa^2}\right) \quad \Rightarrow \quad \kappa = \sqrt{\frac{2\theta}{|v'_z(0)|}}.\label{eq:kappa}
\end{align}
However, by definition (see Equation~(\ref{eq_ad_hyp})), under hypoxia, the advection velocity $v_z(z)=v_z^-(z)$ is such that $|v'_z(z)|\leq |v'_z(0)|$ for all $z\in(0,1]$, with equality only if $\omega_-=2$. Consequently, we have that $m(z)$ is a non-decreasing function of $z$, i.e. $m'(z)\geq 0$.


Given the above, we can now construct an upper bound for the integral in~(\ref{eq:mint1}):
\begin{align}
\begin{aligned}
   \hspace{-8mm} \int_0^1 m^2 y^2_0 dz  &\leq m^2(z_0 \kappa)\int_0^{z_0 \kappa} y^2_0 \: dz + m^2(z_1 \kappa)\int_{z_0 \kappa}^{z_1 \kappa} y^2_0 \: dz + m^2(1)\int_{z_1 \kappa}^1 y_0^2 \: dz\\
    &=m^2(z_0 \kappa) \left[\mathcal{Z}\right]_0^{\sqrt{2}z_0}+m^2(z_1 \kappa) \left[\mathcal{Z}\right]_{\sqrt{2}z_0}^{ \sqrt{2}z_1}+ \frac{1}{\kappa^4} \left[\mathcal{Z}\right]_{\sqrt{2} z_1}^{\frac{\sqrt{2}}{\kappa}}\\
    &\leq \frac{m^2(z_0 \kappa)}{2}+m^2(z_1 \kappa) \left[\mathcal{Z}\right]_{\sqrt{2}z_0}^{ \sqrt{2}z_1}+ \frac{1}{\kappa^4} \left[\mathcal{Z}\right]_{\sqrt{2} z_1}^{\infty}
\end{aligned}
\end{align}
Let us reiterate that we want $z_0$ and $z_1$ to be such that $m^2(z_0\kappa)$ and $m^2(z_1\kappa)$ are not too large while $\left[\mathcal{Z}\right]_0^{\sqrt{2}z_0}$ and $\left[\mathcal{Z}\right]_{\sqrt{2} z_1}^{\infty}$ are sufficiently small. In this way, the growth of $m$ is balanced by the exponential decay of the Gaussian function $y^2_0$. In particular, we choose $z_0=\sqrt{2}$ and $z_1=5/\sqrt{2}$. 
\end{subequations}
Combining the above with the estimate from Equation~(\ref{eq:I_0(1)}), we obtain:
\begin{eqnarray}
I_0  >  I_0^{(1)}- \frac{\theta m^2(z_0\kappa)}{2} -\theta m^2(z_1\kappa)\left[\mathcal{Z}\right]_{\sqrt{2}z_0}^{ \sqrt{2}z_1}-\frac{v'_z(0)^2}{4\theta} \left[\mathcal{Z}\right]_{\sqrt{2} z_1}^{\infty}=I^{low}_0.\label{eq:I0low}
\end{eqnarray}
\begin{figure}[h!]
    \centering
    \includegraphics[width=0.95\textwidth]{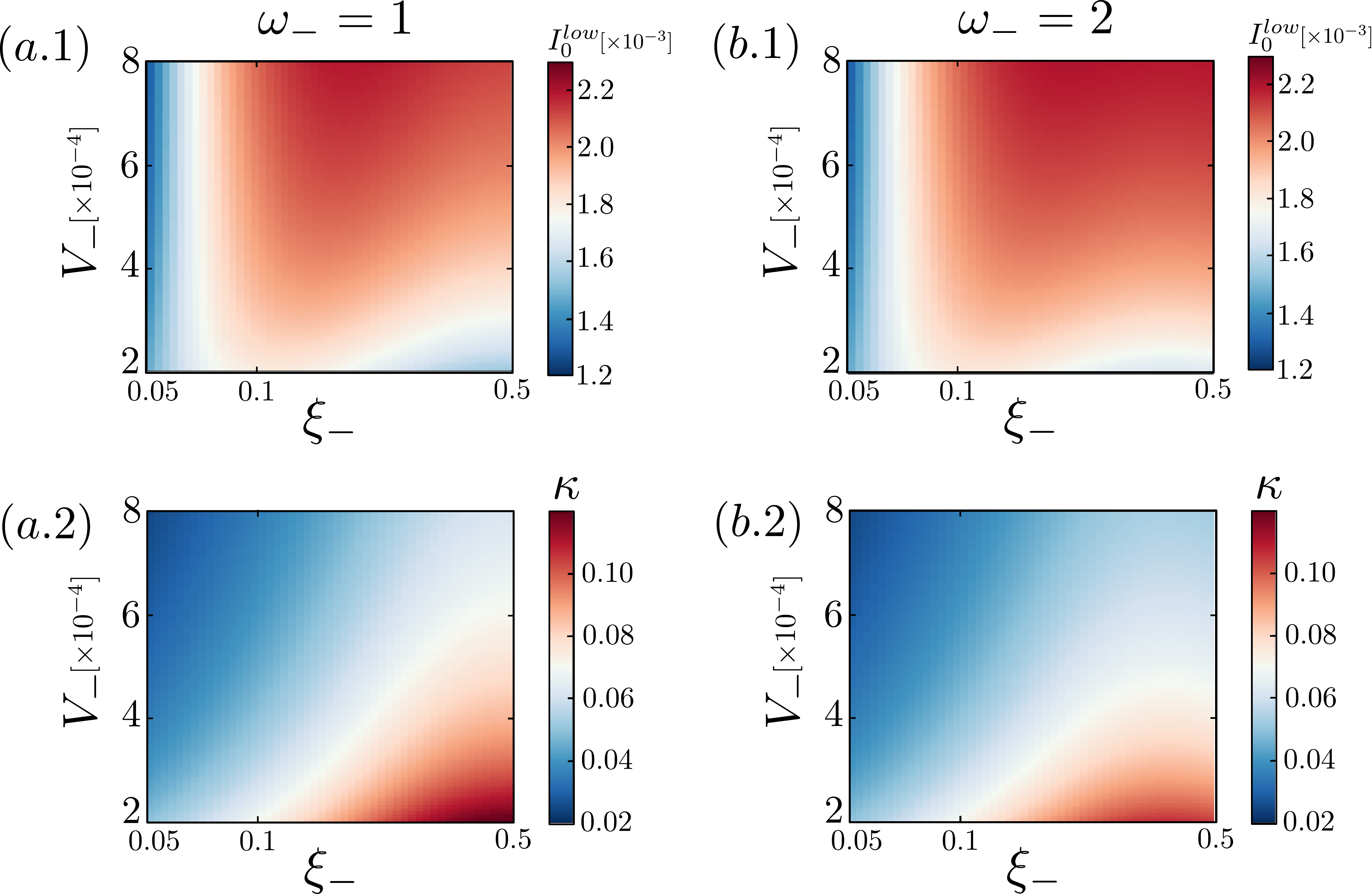}
    \caption{Plot of the lower bound $I_0^{low}$ and the standard deviation $\kappa$ as defined by~(\ref{eq:I0low}) and~(\ref{eq:kappa}) respectively for parameter regime considered in the paper (note that $d_f$ is fixed to its maximum values $0.015$ as this gives the smaller bound $I_0^{max}$).}
    \label{fig:my_label}
\end{figure}
We can compute the values of $\kappa$ and $I_0^{low}$ associated with the value of the  magnitude $V_-$ and steepness $\xi_-$ considered in the paper (without loss of generality, we only consider $d_f=0.015$ as $I^{low}_0$ decreases with $d_f$). As shown in Figure \ref{fig:my_label}, for all such values, we have that $I_0^{low}>0$. As $I_0>I_0^{low}$, we therefore have that generically $I_0$
is also positive. We estimate $A \leq O(10^{-13})$ which justifies us dropping the $O(A)$ in~(\ref{eq:Rnexp}). Consequently, we conclude that $R_n(y)$ is positive and so is the quotient $R$. Hence, in hypoxia, the trivial steady state is always unstable.

\end{proof}

\section*{Acknowledgements}
\noindent The authors wish to thank Professor Philip K. Maini for helpful comments and feedback on the manuscript.~GC is supported by from EPSRC and MRC Centre for Doctoral Training in Systems Approaches to Biomedical Science and Cancer Research UK. C.Z. acknowledges Breast Cancer Research Foundation (BCRF).
P.G.K.  acknowledges support from the Leverhulme
Trust via a Visiting Fellowship and thanks the Mathematical
Institute of the University of Oxford for its hospitality
during part of this work.

\bibliographystyle{elsarticle-num} 
\bibliography{main}

\begin{thebibliography}{10}
\expandafter\ifx\csname url\endcsname\relax
  \def\url#1{\texttt{#1}}\fi
\expandafter\ifx\csname urlprefix\endcsname\relax\def\urlprefix{URL }\fi
\expandafter\ifx\csname href\endcsname\relax
  \def\href#1#2{#2} \def\path#1{#1}\fi

\bibitem{Bray2018}
F.~Bray, J.~Ferlay, I.~Soerjomataram, {\it et~al.}, {Global cancer statistics
  2018: GLOBOCAN estimates of incidence and mortality worldwide for 36 cancers
  in 185 countries}, CA: A Cancer Journal for Clinicians 68~(6) (2018)
  394--424.
\newblock \href {https://doi.org/10.3322/caac.21492}
  {\path{doi:10.3322/caac.21492}}.

\bibitem{Hanahan2000}
D.~Hanahan, R.~A. Weinberg, {The hallmarks of cancer}, Cell 100~(1) (2000)
  57--70.
\newblock \href {https://doi.org/10.1016/S0092-8674(00)81683-9}
  {\path{doi:10.1016/S0092-8674(00)81683-9}}.

\bibitem{Reya2001}
T.~Reya, S.~Morrison, M.~a. Clarke, {Stem cells, cancer, and cancer stem
  cells}, Nature 414 (2001) 105--111.
\newblock \href {https://doi.org/10.1038/35102167}
  {\path{doi:10.1038/35102167}}.

\bibitem{Rycaj2014}
K.~Rycaj, D.~G. Tang, {Cancer stem cells and radioresistance}, International
  Journal of Radiation Biology 90~(8) (2014) 615--621.
\newblock \href {https://doi.org/10.3109/09553002.2014.892227}
  {\path{doi:10.3109/09553002.2014.892227}}.

\bibitem{Baumann2008}
M.~Baumann, M.~Krause, R.~Hill, {Exploring the role of cancer stem cells in
  radioresistance}, Nature Reviews Cancer 8~(7) (2008) 545--554.
\newblock \href {https://doi.org/10.1038/nrc2419} {\path{doi:10.1038/nrc2419}}.

\bibitem{cancersreview}
A.~Schulz, M.~F., A.~Dubrovska, B.~K., Cancer stem cells and radioresistance:
  Dna repair and beyond, Cancers 11 (2019) 862.
\newblock \href {https://doi.org/10.3390/cancers11060862}
  {\path{doi:10.3390/cancers11060862}}.

\bibitem{ende2}
H.~Enderling, Cancer stem cells and tumor dormancy, in {\it Systems Biology of
  Tumor Dormancy}, H. Enderling, N. Almog and L. Hlatky (Eds.) (2013) 55--71.

\bibitem{Kong2020}
D.~Kong, C.~J. Hughes, H.~L. Ford, {Cellular Plasticity in Breast Cancer
  Progression and Therapy}, Frontiers in Molecular Biosciences 7 (2020) 72.
\newblock \href {https://doi.org/10.3389/fmolb.2020.00072}
  {\path{doi:10.3389/fmolb.2020.00072}}.

\bibitem{Shibata2019}
M.~Shibata, M.~Hoque, {Targeting cancer stem cells: A strategy for effective
  eradication of cancer}, Cancers 11~(5) (2019) 732.
\newblock \href {https://doi.org/10.3390/cancers11050732}
  {\path{doi:10.3390/cancers11050732}}.

\bibitem{frontiers}
V.~Snyder, T.~C. Reed-Newman, L.~Arnold, {\it et al.}, Cancer stem cell
  metabolism and potential therapeutic targets, Frontiers in Oncology 8 (2018)
  203.
\newblock \href {https://doi.org/10.3389/fonc.2018.00203}
  {\path{doi:10.3389/fonc.2018.00203}}.

\bibitem{architects}
B.~Prager, Q.~Xie, S.~Bao, {\it et al.}, Cancer stem cells: the architects of
  the tumor ecosystem, Cell Stem Cell 24 (2019) 41--53.
\newblock \href {https://doi.org/10.1016/j.stem.2018.12.009}
  {\path{doi:10.1016/j.stem.2018.12.009}}.

\bibitem{Saga}
R.~Saga, Y.~Matsuya, R.~{\it et al.}. Takahashi, Analysis of the
  high-dose-range radioresistance of prostate cancer cells, including cancer
  stem cells, based on a stochastic model, Journal of Radiation Research 60~(3)
  (2019) 298--307.
\newblock \href {https://doi.org/10.1093/jrr/rrz011}
  {\path{doi:10.1093/jrr/rrz011}}.

\bibitem{Scott169615}
J.~G. Scott, A.~Dhawan, A.~Hjelmeland, {\it et al.}, Recasting the cancer stem
  cell hypothesis: unification using a continuum model of microenvironmental
  forces, Current Stem Cell Reports 5 (2019) 22--30.
\newblock \href {https://doi.org/10.1007/s40778-019-0153-0}
  {\path{doi:10.1007/s40778-019-0153-0}}.

\bibitem{dirkse}
A.~Dirkse, A.~Golebiewska, T.~Buder, {\it et al.}, Stem cell-associated
  heterogeneity in glioblastoma results from intrinsic tumor plasticity shaped
  by the microenvironment, Nature Communications 10 (2019) 1787.
\newblock \href {https://doi.org/10.1038/s41467-019-09853-z}
  {\path{doi:10.1038/s41467-019-09853-z}}.

\bibitem{Soleymani2018}
H.~{Soleymani Abyaneh}, N.~Gupta, A.~Alshareef, {\it et al.}, {Hypoxia Induces
  the Acquisition of Cancer Stem-like Phenotype Via Upregulation and Activation
  of Signal Transducer and Activator of Transcription-3 (STAT3) in MDA-MB-231,
  a Triple Negative Breast Cancer Cell Line}, Cancer Microenvironment 11~(2-3)
  (2018) 141--152.
\newblock \href {https://doi.org/10.1007/s12307-018-0218-0}
  {\path{doi:10.1007/s12307-018-0218-0}}.

\bibitem{Fanelli2020}
G.~Fanelli, A.~Naccarato, C.~Scatena, {Recent Advances in Cancer Plasticity:
  Cellular Mechanisms, Surveillance Strategies, and Therapeutic Optimization},
  Frontiers in Oncology 10 (2020) 569.
\newblock \href {https://doi.org/10.3389/fonc.2020.00569}
  {\path{doi:10.3389/fonc.2020.00569}}.

\bibitem{tirosh}
I.~Tirosh, A.~Venteicher, C.~Hebert, {\it et al.}, Single cell rna-seq supports
  a developmental hierarchy in human oligodendroglioma, Nature 539 (2016)
  309--313.
\newblock \href {https://doi.org/10.1038/nature20123}
  {\path{doi:10.1038/nature20123}}.

\bibitem{venteicher}
A.~Venteicher, I.~Tirosh, C.~Hebert, {\it et al.}, Decoupling genetics,
  lineages, and microenvironment in idh-mutant gliomas by single-cell rna-seq,
  Science 355~(6332) (2017) 1391.
\newblock \href {https://doi.org/10.1126/science.aai8478}
  {\path{doi:10.1126/science.aai8478}}.

\bibitem{Chisholm2016}
R.~Chisholm, T.~Lorenzi, J.~Clairambault, {Cell population heterogeneity and
  evolution towards drug resistance in cancer: Biological and mathematical
  assessment, theoretical treatment optimisation}, Biochimica et Biophysica
  Acta 1860~(11) (2016) 2627--2645.
\newblock \href {https://doi.org/10.1016/j.bbagen.2016.06.009}
  {\path{doi:10.1016/j.bbagen.2016.06.009}}.

\bibitem{Clairambault2020}
S.~Shen, J.~Clairambault, {Cell plasticity in cancer cell populations},
  F1000Research 9 (jun 2020).
\newblock \href {https://doi.org/10.12688/f1000research.24803.1}
  {\path{doi:10.12688/f1000research.24803.1}}.

\bibitem{hodgkinson}
A.~Hodgkinson, L.~Le~Cam, D.~Trucu, {\it et al.}, Spatio-genetic and phenotypic
  modelling elucidates resistance and re-sensitisation to treatment in
  heterogeneous melanoma, Journal of Theoretical Biology 466 (2019) 84--105.
\newblock \href {https://doi.org/10.1016/j.jtbi.2018.11.037}
  {\path{doi:10.1016/j.jtbi.2018.11.037}}.

\bibitem{Lorenzi2016}
T.~Lorenzi, R.~Chisholm, J.~Clairambault, {Tracking the evolution of cancer
  cell populations through the mathematical lens of phenotype-structured
  equations}, Biology Direct 11~(1) (2016) 43.
\newblock \href {https://doi.org/10.1186/s13062-016-0143-4}
  {\path{doi:10.1186/s13062-016-0143-4}}.

\bibitem{Lorz2014}
A.~Lorz, T.~Lorenzi, J.~Clairambault, {\it et~al.}, {Modeling the Effects of
  Space Structure and Combination Therapies on Phenotypic Heterogeneity and
  Drug Resistance in Solid Tumors}, Bulletin of Mathematical Biology 77~(1)
  (2015) 1--22.
\newblock \href {https://doi.org/10.1007/s11538-014-0046-4}
  {\path{doi:10.1007/s11538-014-0046-4}}.

\bibitem{Ardaseva2019}
A.~Arda{\v{s}}eva, R.~A. Gatenby, A.~R.~A. Anderson, {\it et al.}, {A
  mathematical dissection of the adaptation of cell populations to fluctuating
  oxygen levels}, Bulletin of Mathematical Biology 82 (2020).
\newblock \href {https://doi.org/10.1007/s11538-020-00754-7}
  {\path{doi:10.1007/s11538-020-00754-7}}.

\bibitem{Villa2019}
C.~Villa, M.~Chaplain, T.~Lorenzi, {Evolutionary dynamics in vascularised
  tumours under chemotherapy}, Vietnam Journal of Mathematics (oct 2020).
\newblock \href {https://doi.org/10.1007/s10013-020-00445-9}
  {\path{doi:10.1007/s10013-020-00445-9}}.

\bibitem{Thomas2013}
F.~Thomas, D.~Fisher, P.~Fort, {\it et~al.}, {Applying ecological and
  evolutionary theory to cancer: a long and winding road}, Evolutionary
  Applications 6~(1) (2013) 1--10.
\newblock \href {https://doi.org/10.1111/eva.12021}
  {\path{doi:10.1111/eva.12021}}.

\bibitem{Lorenzi2015}
T.~Lorenzi, R.~H. Chisholm, {\it et al.}, {Dissecting the dynamics of
  epigenetic changes in phenotype-structured populations exposed to fluctuating
  environments}, Journal of Theoretical Biology 386 (2015) 166--176.
\newblock \href {https://doi.org/10.1016/j.jtbi.2015.08.031}
  {\path{doi:10.1016/j.jtbi.2015.08.031}}.

\bibitem{Stace2020}
R.~E.~A. Stace, T.~Stiehl, M.~A. Chaplain, {\it et al.}, {Discrete and
  continuum phenotype-structured models for the evolution of cancer cell
  populations under chemotherapy}, Math. Model. Nat. Phenom 15 (2020) 14.
\newblock \href {https://doi.org/10.1051/mmnp/2019027}
  {\path{doi:10.1051/mmnp/2019027}}.

\bibitem{Garnier2019}
D.~Garnier, O.~Renoult, M.~Alves-Guerra, {\it et al.}, {Glioblastoma stem-like
  cells, Metabolic strategy to kill a challenging target}, Frontiers in
  Oncology 9 (2019) 118.
\newblock \href {https://doi.org/10.3389/fonc.2019.00118}
  {\path{doi:10.3389/fonc.2019.00118}}.

\bibitem{Liu2014}
S.~Liu, Y.~Cong, D.~Wang, {\it et al.}, {Breast cancer stem cells transition
  between epithelial and mesenchymal states reflective of their normal
  counterparts}, Stem Cell Reports 2~(1) (2013) 78--91.
\newblock \href {https://doi.org/doi: 10.1016/j.stemcr.2013.11.009}
  {\path{doi:doi: 10.1016/j.stemcr.2013.11.009}}.

\bibitem{Pistollato2010}
F.~Pistollato, S.~Abbadi, E.~Rampazzo, {\it et al.}, {Intratumoral Hypoxic
  Gradient Drives Stem Cells Distribution and MGMT Expression in Glioblastoma},
  Stem Cells 28~(5) (2010) 851--862.
\newblock \href {https://doi.org/10.1002/stem.415}
  {\path{doi:10.1002/stem.415}}.

\bibitem{Pistollato2009}
F.~Pistollato, H.~Chen, B.~{\it et al.}. Rood, {Hypoxia and HIF1$\alpha$
  Repress the Differentiative Effects of BMPs in High-Grade Glioma}, Stem Cells
  27~(1) (2009) 7--17.
\newblock \href {https://doi.org/10.1634/stemcells.2008-0402}
  {\path{doi:10.1634/stemcells.2008-0402}}.

\bibitem{lewin2}
T.~Lewin, H.~Byrne, P.~Maini, {\it et al.}, The importance of dead material
  within a tumour on the dynamics in response to radiotherapy, Physics in
  medicine and biology 65 (2020) 015007.
\newblock \href {https://doi.org/10.1088/1361-6560/ab4c27}
  {\path{doi:10.1088/1361-6560/ab4c27}}.

\bibitem{lewin}
T.~Lewin, P.~K. Maini, E.~G. Moros, {\it et al.}, A three phase model to
  investigate the effects of dead material on the growth of avascular tumours,
  Mathematical modeling of natural phenomena 15 (2020) 22.
\newblock \href {https://doi.org/10.1051/mmnp/2019039}
  {\path{doi:10.1051/mmnp/2019039}}.

\bibitem{cancermetab}
E.~Markert, A.~Vazquez, Mathematical models of cancer metabolism, Cancer \&
  Metabolism 3 (2015) 14.
\newblock \href {https://doi.org/doi: 10.1186/s40170-015-0140-6.}
  {\path{doi:doi: 10.1186/s40170-015-0140-6.}}

\bibitem{Lee2014}
M.~Lee, J.~S. Lee, {Exploiting tumor cell senescence in anticancer therapy},
  BMB Reports 47~(2) (2014) 51--59.
\newblock \href {https://doi.org/10.5483/BMBRep.2014.47.2.005}
  {\path{doi:10.5483/BMBRep.2014.47.2.005}}.

\bibitem{Pisco2015}
A.~O. Pisco, S.~Huang, {Non-genetic cancer cell plasticity and therapy-induced
  stemness in tumour relapse: 'What does not kill me strengthens me'}, British
  Journal of Cancer 112~(11) (2015) 1725--1732.
\newblock \href {https://doi.org/10.1038/bjc.2015.146}
  {\path{doi:10.1038/bjc.2015.146}}.

\bibitem{Hockel1996}
M.~Höckel, K.~Schlenger, M.~Mitze, {\it et al.}, Hypoxia and radiation
  response in human tumors, Seminars in Radiation Oncology 6~(1) (1996) 3--9.
\newblock \href {https://doi.org/10.1016/S1053-4296(96)80031-2}
  {\path{doi:10.1016/S1053-4296(96)80031-2}}.

\bibitem{Sorensen2020}
B.~S. S{\o}rensen, M.~R. Horsman, {Tumor Hypoxia: Impact on Radiation Therapy
  and Molecular Pathways}, Frontiers in Oncology 10 (2020) 562.
\newblock \href {https://doi.org/10.3389/fonc.2020.00562}
  {\path{doi:10.3389/fonc.2020.00562}}.

\bibitem{McKeown2014}
S.~R. McKeown, {Defining normoxia, physoxia and hypoxia in
  tumours—implications for treatment response}, The British Journal of
  Radiology 87~(1035) (2014) 20130676.
\newblock \href {https://doi.org/10.1259/bjr.20130676}
  {\path{doi:10.1259/bjr.20130676}}.

\bibitem{Ayob2018}
A.~Ayob, T.~Ramasamy, {Cancer stem cells as key drivers of tumour progression},
  Journal of Biomedical Science 25~(1) (mar 2018).
\newblock \href {https://doi.org/10.1186/s12929-018-0426-4}
  {\path{doi:10.1186/s12929-018-0426-4}}.

\bibitem{Conley2012}
S.~Conley, E.~Gheordunescu, P.~{\it et al.}. Kakarala, {Antiangiogenic agents
  increase breast cancer stem cells via the generation of tumor hypoxia},
  Proceedings of the National Academy of Sciences of the United States of
  America 109~(8) (2012) 2784--2789.
\newblock \href {https://doi.org/10.1073/pnas.1018866109}
  {\path{doi:10.1073/pnas.1018866109}}.

\bibitem{Lan2018}
J.~Lan, H.~Lu, D.~Samanta, {\it et al.}, {Hypoxia-inducible factor 1-dependent
  expression of adenosine receptor 2B promotes breast cancer stem cell
  enrichment}, Proceedings of the National Academy of Sciences of the United
  States of America 115~(41) (2018) E9640--E9648.
\newblock \href {https://doi.org/10.1073/pnas.1809695115}
  {\path{doi:10.1073/pnas.1809695115}}.

\bibitem{Snyder2018}
V.~Snyder, T.~Reed-Newman, L.~Arnold, {\it et al.}, {Cancer stem cell
  metabolism and potential therapeutic targets}, Frontiers in Oncology 8 (jun
  2018).
\newblock \href {https://doi.org/10.3389/fonc.2018.00203}
  {\path{doi:10.3389/fonc.2018.00203}}.

\bibitem{Driessens2012}
G.~Driessens, B.~Beck, A.~Caauwe, {\it et al.}, {Defining the mode of tumour
  growth by clonal analysis}, Nature 488~(7412) (2012) 527--530.
\newblock \href {https://doi.org/10.1038/nature11344}
  {\path{doi:10.1038/nature11344}}.

\bibitem{Radioresistance}
S.~Bao, Q.~Wu, R.~McLendon, {\it et al.}, Glioma stem cells promote
  radioresistance by preferential activation of the dna damage response, Nature
  444 (2006) 756–760.
\newblock \href {https://doi.org/10.1038/nature05236}
  {\path{doi:10.1038/nature05236}}.

\bibitem{Clark2016}
D.~W. Clark, K.~Palle, {Aldehyde dehydrogenases in cancer stem cells: potential
  as therapeutic targets}, Annals of Translational medicine 4~(24) (2016) 518.
\newblock \href {https://doi.org/10.21037/atm.2016.11.82}
  {\path{doi:10.21037/atm.2016.11.82}}.

\bibitem{Diehn2009}
D.~M., R.~Cho, N.~Lobo, {\it et al.}, {Association of reactive oxygen species
  levels and radioresistance in cancer stem cells}, Nature 458~(7239) (2009)
  780--783.
\newblock \href {https://doi.org/10.1038/nature07733}
  {\path{doi:10.1038/nature07733}}.

\bibitem{Vassalli2019}
G.~Vassalli, {Aldehyde Dehydrogenases: Not Just Markers, but Functional
  Regulators of Stem Cells}, Stem Cells International 2019 (2019).
\newblock \href {https://doi.org/10.1155/2019/3904645}
  {\path{doi:10.1155/2019/3904645}}.

\bibitem{Horsman2012}
M.~Horsman, L.~Mortensen, J.~Petersen, {\it et al.}, {Imaging hypoxia to
  improve radiotherapy outcome}, Nature Reviews Clinical Oncology 9~(12) (2012)
  674--687.
\newblock \href {https://doi.org/10.1038/nrclinonc.2012.171}
  {\path{doi:10.1038/nrclinonc.2012.171}}.

\bibitem{Moulder1987}
J.~Moulder, S.~Rockwell, {Tumor hypoxia: its impact on cancer therapy}, Cancer
  and Metastasis Reviews 5~(4) (1987) 313--341.
\newblock \href {https://doi.org/10.1007/BF00055376}
  {\path{doi:10.1007/BF00055376}}.

\bibitem{Leder2014}
K.~Leder, K.~Pitter, Q.~LaPlant, {\it et al.}, {Mathematical modeling of pdgf
  driven glioblastoma reveals optimized radiation dosing schedules}, Cell
  156~(3) (2014) 603--616.
\newblock \href {https://doi.org/10.1016/j.cell.2013.12.029}
  {\path{doi:10.1016/j.cell.2013.12.029}}.

\bibitem{McMahon_2018}
S.~J. McMahon, The linear quadratic model: usage, interpretation and
  challenges, Physics in Medicine {\&} Biology 64~(1) (dec 2018).
\newblock \href {https://doi.org/10.1088/1361-6560/aaf26a}
  {\path{doi:10.1088/1361-6560/aaf26a}}.

\bibitem{Brenner1991}
D.~J. Brenner, M.~Martel, E.~Hall, {Fractionated regimens for stereotactic
  radiotherapy of recurrent tumors in the brain}, International Journal of
  Radiation Oncology, Biology, Physics 21~(3) (1991) 819--824.
\newblock \href {https://doi.org/10.1016/0360-3016(91)90703-7}
  {\path{doi:10.1016/0360-3016(91)90703-7}}.

\bibitem{Dale1985}
R.~G. Dale, {The application of the linear-quadratic dose-effect equation to
  fractionated and protracted radiotherapy}, British Journal of Radiology
  58~(690) (1985) 515--528.
\newblock \href {https://doi.org/10.1259/0007-1285-58-690-515}
  {\path{doi:10.1259/0007-1285-58-690-515}}.

\bibitem{schaider}
H.~Hammerlindl, H.~Schaider, Tumor cell-intrinsic phenotypic plasticity
  facilitates adaptive cellular reprogramming driving acquired drug resistance,
  J Cell Commun Signal. 12 (2018) 133--141.
\newblock \href {https://doi.org/10.1007/s12079-017-0435-1}
  {\path{doi:10.1007/s12079-017-0435-1}}.

\bibitem{Shiraishi2017}
A.~Shiraishi, K.~Tachi, N.~Essid, {\it et al.}, {Hypoxia promotes the
  phenotypic change of aldehyde dehydrogenase activity of breast cancer stem
  cells}, Cancer Science 108~(3) (2017) 362--372.
\newblock \href {https://doi.org/10.1111/cas.13147}
  {\path{doi:10.1111/cas.13147}}.

\bibitem{Perthame2007}
B.~Perthame, {Transport equations in biology}, Frontiers in Mathematics,
  Birkh{\"a}user Basel, 2006.

\bibitem{Webb2008}
G.~F. Webb, Population Models Structured by Age, Size, and Spatial Position,
  Springer Berlin Heidelberg, Berlin, Heidelberg, 2008, Ch.~1, pp. 1--49.
\newblock \href {https://doi.org/10.1007/978-3-540-78273-5\_1}
  {\path{doi:10.1007/978-3-540-78273-5\_1}}.

\bibitem{Gerisch2006}
A.~Gerisch, M.~A. Chaplain, {Robust numerical methods for
  taxis-diffusion-reaction systems: Applications to biomedical problems},
  Mathematical and Computer Modelling 43~(1-2) (2006) 49--75.
\newblock \href {https://doi.org/10.1016/j.mcm.2004.05.016}
  {\path{doi:10.1016/j.mcm.2004.05.016}}.

\bibitem{Driscoll2014}
T.~A. Driscoll, N.~Hale, L.~N. Trefethen, Chebfun Guide, Pafnuty Publications,
  2014.

\bibitem{Calabrese2007}
C.~Calabrese, H.~Poppleton, M.~Kocak, {\it et~al.}, {A Perivascular Niche for
  Brain Tumor Stem Cells}, Cancer Cell 11~(1) (2007) 69--82.
\newblock \href {https://doi.org/10.1016/j.ccr.2006.11.020}
  {\path{doi:10.1016/j.ccr.2006.11.020}}.

\bibitem{Taylor2011}
M.~Taylor, T.~Kron, Consideration of the radiation dose delivered away from the
  treatment field to patients in radiotherapy, Journal of medical physics 36
  (2011) 59--71.
\newblock \href {https://doi.org/10.4103/0971-6203.79686}
  {\path{doi:10.4103/0971-6203.79686}}.

\bibitem{Arnold2018}
K.~M. Arnold, N.~J. Flynn, A.~Raben, {\it et al.}, {The Impact of Radiation on
  the Tumor Microenvironment: Effect of Dose and Fractionation Schedules},
  Cancer Growth and Metastasis 11 (2018) 117906441876163.
\newblock \href {https://doi.org/10.1177/1179064418761639}
  {\path{doi:10.1177/1179064418761639}}.

\bibitem{Fenton2001}
B.~Fenton, E.~Lord, S.~Paoni, {Effects of Radiation on Tumor Intravascular
  Oxygenation, Vascular Configuration, Development of Hypoxia, and Clonogenic
  Survival}, Radiation Research Society 155~(2) (2001) 360--368.
\newblock \href
  {https://doi.org/10.1667/0033-7587(2001)155[0360:eoroti]2.0.co;2}
  {\path{doi:10.1667/0033-7587(2001)155[0360:eoroti]2.0.co;2}}.

\bibitem{Kempf2015}
H.~Kempf, M.~Bleicher, M.~Meyer-Hermann, {Spatio-Temporal Dynamics of Hypoxia
  during Radiotherapy}, PLoS ONE 10~(8) (2015) e0133357.
\newblock \href {https://doi.org/10.1371/journal.pone.0133357}
  {\path{doi:10.1371/journal.pone.0133357}}.

\bibitem{Hormuth}
D.~Hormuth, A.~Jarrett, T.~Yankeelov, Forecasting tumor and vasculature
  response dynamics to radiation therapy via image based mathematical modeling,
  Radiation Oncology 15 (2020) 1--14.
\newblock \href {https://doi.org/10.1186/s13014-019-1446-2}
  {\path{doi:10.1186/s13014-019-1446-2}}.

\bibitem{Carmeliet2011}
P.~Carmeliet, R.~K. Jain, Principles and mechanisms of vessel normalization for
  cancer and other angiogenic diseases, Nature Reviews Drug Discovery 10 (2011)
  417--427.
\newblock \href {https://doi.org/10.1038/nrd3455} {\path{doi:10.1038/nrd3455}}.

\bibitem{Jain2014}
R.~K. Jain, Antiangiogenesis strategies revisited: From starving tumors to
  alleviating hypoxia, Cancer Cell 26 (2014) 605--622.
\newblock \href {https://doi.org/10.1016/j.ccell.2014.10.006}
  {\path{doi:10.1016/j.ccell.2014.10.006}}.

\bibitem{Roeder2006}
I.~Roeder, R.~Lorenz, {Asymmetry of stem cell fate and the potential impact of
  the niche observations, simulations, and interpretations}, Stem Cell Reviews
  2~(3) (2006) 171--180.
\newblock \href {https://doi.org/10.1007/s12015-006-0045-4}
  {\path{doi:10.1007/s12015-006-0045-4}}.

\bibitem{Hahnfeldt1999}
P.~Hahnfeldt, D.~Panigrahy, J.~Folkman, L.~Hlatky, {Tumor development under
  angiogenic signaling: A dynamical theory of tumor growth, treatment response,
  and postvascular dormancy}, Cancer Research 59~(19) (1999) 4770--4775.

\bibitem{Stamper2010}
I.~J. Stamper, M.~R. Owen, P.~K. Maini, H.~M. Byrne, {Oscillatory dynamics in a
  model of vascular tumour growth - implications for chemotherapy}, Biology
  Direct 5 (2010) 27.
\newblock \href {https://doi.org/10.1186/1745-6150-5-27}
  {\path{doi:10.1186/1745-6150-5-27}}.

\bibitem{Hubbard2013}
M.~E. Hubbard, H.~M. Byrne, {Multiphase modelling of vascular tumour growth in
  two spatial dimensions}, Journal of Theoretical Biology 316 (2013) 70--89.
\newblock \href {https://doi.org/10.1016/j.jtbi.2012.09.031}
  {\path{doi:10.1016/j.jtbi.2012.09.031}}.

\bibitem{Byrne2010}
H.~M. Byrne, {Dissecting cancer through mathematics: From the cell to the
  animal model}, Nature Reviews Cancer 10~(3) (2010) 221--230.
\newblock \href {https://doi.org/10.1038/nrc2808} {\path{doi:10.1038/nrc2808}}.

\bibitem{Macklin2009}
P.~Macklin, S.~McDougall, A.~R.~A. Anderson, {\it et al.}, Multiscale modelling
  and nonlinear simulation of vascular tumour growth, Journal of Mathematical
  Biology 58 (2009) 765--798.
\newblock \href {https://doi.org/10.1007/s00285-008-0216-9}
  {\path{doi:10.1007/s00285-008-0216-9}}.

\bibitem{Vavourakis2017}
V.~Vavourakis, P.~Wijeratne, R.~Shipley, , {\it et al.}, A validated multiscale
  in-silico model for mechano-sensitive tumour angiogenesis and growth, PLoS
  Computational Biology 13 (01 2017).
\newblock \href {https://doi.org/10.1371/journal.pcbi.1005259}
  {\path{doi:10.1371/journal.pcbi.1005259}}.

\bibitem{Walpole2013}
J.~Walpole, J.~A. Papin, S.~M. Peirce, Multiscale computational models of
  complex biological systems, Annual Review of Biomedical Engineering 15~(1)
  (2013) 137--154.
\newblock \href {https://doi.org/10.1146/annurev-bioeng-071811-150104}
  {\path{doi:10.1146/annurev-bioeng-071811-150104}}.

\bibitem{Ardaseva2020_2}
A.~Arda{\v{s}}eva, R.~A. Anderson, A, R.~A. Gatenby, {\it et al.}, {Comparative
  study between discrete and continuum models for the evolution of competing
  phenotype-structured cell populations in dynamical environments}, Physical
  Review E 102~(4) (2020) 042404.
\newblock \href {http://arxiv.org/abs/2004.00914} {\path{arXiv:2004.00914}},
  \href {https://doi.org/10.1103/PhysRevE.102.042404}
  {\path{doi:10.1103/PhysRevE.102.042404}}.

\bibitem{Franz2013}
B.~Franz, M.~B. Flegg, {\it et al.}, {Multiscale reaction-diffusion algorithms:
  PDE-assisted Brownian dynamics}, SIAM Journal on Applied Mathematics 73~(3)
  (2013) 1224--1247.
\newblock \href {http://arxiv.org/abs/1206.5860} {\path{arXiv:1206.5860}},
  \href {https://doi.org/10.1137/120882469} {\path{doi:10.1137/120882469}}.

\bibitem{Spill2015}
F.~Spill, P.~Guerrero, T.~Alarcon, {\it et al.}, {Hybrid approaches for
  multiple-species stochastic reaction-diffusion models}, Journal of
  Computational Physics 299 (2015) 429--445.
\newblock \href {https://doi.org/10.1016/j.jcp.2015.07.002}
  {\path{doi:10.1016/j.jcp.2015.07.002}}.

\bibitem{Gatenby2009}
R.~A. Gatenby, A.~Silva, R.~Gillies, {\it et al.}, {Adaptive therapy}, Cancer
  Research 69~(11) (2009) 4894--4903.
\newblock \href {https://doi.org/10.1158/0008-5472}
  {\path{doi:10.1158/0008-5472}}.

\bibitem{HallEricJ2012Rftr}
E.~J. Hall, A.~J. Giaccia, Radiobiology for the radiologist [electronic
  resource], seventh edition Edition, Ebook central, Lippincott Williams \&
  Wilkins, Philadelphia, 2012.

\bibitem{consumption}
P.~Sonveaux, F.~V\'~{e}gran, T.~Schroeder, {\it et al.}, Targeting
  lactate-fueled respiration selectively kills hypoxic tumor cells in mice, The
  Journal of Clinical Investigation 118~(12) (2008) 3930--42.
\newblock \href {https://doi.org/10.1172/JCI36843}
  {\path{doi:10.1172/JCI36843}}.

\bibitem{Sweeney1998}
K.~J. Sweeney, A.~Swarbrick, {\it et al.}, {Lack of relationship between CDK
  activity and G1 cyclin expression in breast cancer cells}, Oncogene 16~(22)
  (1998) 2865--2878.
\newblock \href {https://doi.org/10.1038/sj.onc.1201814}
  {\path{doi:10.1038/sj.onc.1201814}}.

\bibitem{DelMonte2009}
U.~{Del Monte}, {Does the cell number 109 still really fit one gram of tumor
  tissue?}, Cell Cycle 8~(3) (2009) 505--506.
\newblock \href {https://doi.org/10.4161/cc.8.3.7608}
  {\path{doi:10.4161/cc.8.3.7608}}.

\bibitem{Boag1970}
J.~W. Boag, {Cell respiration as a function of oxygen tension}, International
  Journal of Radiation Biology 18~(5) (1970) 475--478.
\newblock \href {https://doi.org/10.1080/09553007014551361}
  {\path{doi:10.1080/09553007014551361}}.

\bibitem{Lewin2018}
T.~D. Lewin, P.~Maini, E.~Moros, {\it et al.}, {The Evolution of Tumour
  Composition During Fractionated Radiotherapy: Implications for Outcome},
  Bulletin of Mathematical Biology 80~(5) (2018) 1207--1235.
\newblock \href {https://doi.org/10.1007/s11538-018-0391-9}
  {\path{doi:10.1007/s11538-018-0391-9}}.

\bibitem{ester2}
I.~M. Pires, Z.~Bencokova, M.~Milani, {\it et al.}, {Effects of acute versus
  chronic hypoxia on DNA damage responses and genomic instability}, Cancer
  Research 70~(3) (2010) 925--935.
\newblock \href {https://doi.org/10.1158/0008-5472.CAN-09-2715}
  {\path{doi:10.1158/0008-5472.CAN-09-2715}}.

\bibitem{Ricardo2011}
S.~Ricardo, A.~Vieira, R.~Gerhard, {\it et al.}, {Breast cancer stem cell
  markers CD44, CD24 and ALDH1: Expression distribution within intrinsic
  molecular subtype}, Journal of Clinical Pathology 64~(11) (2011) 937--944.
\newblock \href {https://doi.org/10.1136/jcp.2011.090456}
  {\path{doi:10.1136/jcp.2011.090456}}.

\bibitem{ester}
N.~Ng, K.~Purshouse, I.~Foskolou, , {\it et al.}, Challenges to dna replication
  in hypoxic conditions, The FEBS Journal 285 (2017) 1563--1571.
\newblock \href {https://doi.org/10.1111/febs.14377}
  {\path{doi:10.1111/febs.14377}}.

\bibitem{coddingtonlevinson}
E.~Coddington, N.~Levinson, Theory of Ordinary Differential Equations,
  McGraw-Hill, 1955.

\bibitem{cohent}
C.~Cohen-Tannoudji, B.~Diu, F.~Lalo{\"e}, Quantum mechanics Vol. I, Wiley-VCH,
  2020.

\end{thebibliography}


\end{document}